\newcommand{\bs}{\boldsymbol}
\newtheorem{theorem}{Theorem}
\newtheorem{remark}{Remark}
\DeclareMathAlphabet{\mathpzc}{OT1}{pzc}{m}{it}
\begin{document}
\title{Zero-Forcing Based Downlink Virtual MIMO-NOMA Communications in IoT Networks}
\author{Zheng~Shi,
    Hong~Wang,
    Yaru~Fu,
    Guanghua~Yang,
    Shaodan~Ma,
    Fen~Hou,
    and Theodoros A. Tsiftsis
\thanks{Copyright (c) 20xx IEEE. Personal use of this material is permitted. However, permission to use this material for any other purposes must be obtained from the IEEE by sending a request to pubs-permissions@ieee.org.}
\thanks{Manuscript received July 6, 2019; revised October 1, 2019; revised November 11, 2019; accepted November 27, 2019. This work was supported in part by the National Key Research and Development Program of China under Grant 2017YFE0120600, in part by National Natural Science Foundation of China under Grants 61801192, 61801246 and 61601524, in part by Guangdong Basic and Applied Basic Research Foundation under Grant 2019A1515012136, in part by the Science and Technology Planning Project of Guangdong Province under Grants 2018B010114002 and 2019B010137006, in part by the Science and Technology Development Fund, Macau SAR (File no. 0036/2019/A1, File no. 037/2017/AMJ and File no. SKL-IOTSC2018-2020), in part by Open Research Foundation of National Mobile Communications Research Laboratory of Southeast University under Grant 2018D09, and in part by the Research Committee of University of Macau under Grants MYRG2018-00156-FST and MYRG2016-00171-FST. The associate editor coordinating the review of this paper and approving it for publication was Jun Wu. (Corresponding author: Guanghua Yang.)}
\thanks{Zheng Shi, Guanghua Yang and Theodoros A. Tsiftsis are with the Institute of Physical Internet and the School of Intelligent Systems Science and Engineering, Jinan University, Zhuhai 519070, China (e-mails: zhengshi@jnu.edu.cn, ghyang@jnu.edu.cn, theo\_tsiftsis@jnu.edu.cn).}
\thanks{H. Wang is with the School of Communication and Information Engineering, Nanjing University of Posts and Telecommunications, Nanjing 210003, China, and also with the National Mobile Communications Research Laboratory, Southeast University, Nanjing 210096, China (e-mail: wanghong@njupt.edu.cn).}
\thanks{Yaru~Fu is with the Department of Information Systems Technology and Design, Singapore University of Technology and Design, Singapore 487372 (email:yaru\_fu@sutd.edu.sg).}
\thanks{Shaodan~Ma and Fen~Hou are with the State Key Laboratory of Internet of Things for Smart City and the Department of Electrical and Computer Engineering, University of Macau, Macao S.A.R. 999078, China (e-mails: shaodanma@um.edu.mo, fenhou@um.edu.mo).}
}
\maketitle
\begin{abstract}

To support massive connectivity and boost spectral efficiency for internet of things (IoT), a downlink scheme combining virtual multiple-input multiple-output (MIMO) and non-orthogonal multiple access (NOMA) is proposed. All the single-antenna IoT devices in each cluster cooperate with each other to establish a virtual MIMO entity, and multiple independent data streams are requested by each cluster. NOMA is employed to superimpose all the requested data streams, and each cluster leverages zero-forcing detection to de-multiplex the input data streams. Only statistical channel state information (CSI) is available at base station to avoid the waste of the energy and bandwidth on frequent CSI estimations. The outage probability and goodput of the virtual MIMO-NOMA system are thoroughly investigated by considering Kronecker model, which embraces both the transmit and receive correlations. Furthermore, the asymptotic results facilitate not only the exploration of physical insights but also the goodput maximization. In particular, the asymptotic outage expressions provide quantitative impacts of various system parameters and enable the investigation of diversity-multiplexing tradeoff (DMT). Moreover, power allocation coefficients and/or transmission rates can be properly chosen to achieve the maximal goodput. By favor of Karush-Kuhn-Tucker conditions, the goodput maximization problems can be solved in closed-form, with which the joint power and rate selection is realized by using alternately iterating optimization. Besides, the optimization algorithms tend to allocate more power to clusters under unfavorable channel conditions and support clusters with higher transmission rate under benign channel conditions.

\end{abstract}
\begin{IEEEkeywords}
Internet of things (IoT), Kronecker model, non-orthogonal multiple access (NOMA), virtual multiple-input multiple-output (MIMO), zero-forcing (ZF).
\end{IEEEkeywords}
\IEEEpeerreviewmaketitle
\hyphenation{HARQ}
\section{Introduction}\label{sec:int}
\IEEEPARstart{W}{ith} the phenomenal growth of the internet of things (IoT) devices in 5G scenarios (including smart city, connected health, industrial internet, vehicle network), a huge number of wireless-enabled sensors are broadly deployed. A great deal of sensing data is collected sporadically, and then forwarded to the cloud via connected IoT devices, cellular networks and internet \cite{han2019energy,hao2018green}. It has been predicted in Ericsson's report that the number of short-range and cellular IoT devices will reach 17.8 billion and 4.1 billion by 2024, respectively \cite{ericsson2019ericsson}. Moreover, the visual networking index (VNI) released by Cisco has forecasted that machine to machine (M2M) connections in IoT networks will grow from 6.1 billion in 2017 to 14.6 billion by 2022 \cite{cisco2018cisco}. The proliferation of IoT devices will pose the necessity of massive wireless connections to confront the ongoing paradigm of 5G and beyond. 
In addition, the non-orthogonal multiple access (NOMA) has been recognized as a promising technology enabler for IoT to offer massive connectivity over limited wireless resources \cite{duan2019resource,alkhawatrah2019buffer,ding2015application,al2019efficient}. The essence of NOMA is to serve multiple users over the same frequency/time/code resource by means of superposition coding and successive interference cancellation (SIC). In particular, NOMA is capable of attaining an improved spectral efficiency over its conventional counterpart, orthogonal multiple access (OMA). In addition, NOMA can also strike a balanced tradeoff between the system throughput and users' fairness. All these intrinsic characteristics of NOMA are beneficial to the development and deployment of IoT. Moreover, in order to get rid of the high complexity and severe error propagation of SIC, the combination of NOMA and OMA technologies has been recognized as a feasible solution for resource-limited IoT networks \cite{liu2016cooperative,ding2015application}. Particularly, the hybrid multiple access (MA) scheme first groups users into pairs to perform NOMA, then multiplexes different pairs over orthogonal resources.




Despite the prominent advantages of NOMA, other enhanced technologies are also needed to further boost the throughput and reliability of IoT networks. Note that multiple-input-multiple-output (MIMO) can exploit extra degrees of freedom \cite{lv2018degrees,wu2019efficient}, and thus the integration of NOMA with MIMO has became an appealing solution for both massive connectivity support and capacity provision recently \cite{dai2015non}. 
\subsection{Related Works}

The maximization of the ergodic capacity was studied for MIMO-NOMA system with a pair of users in \cite{sun2015ergodic}. To accommodate as many pairs of users as possible, the concept of signal alignment was devised to cancel out inter-pair interference within MIMO-NOMA framework in \cite{liu2016capacity,ding2016general}. By using the signal alignment, the signal receptions between different pairs of users can be handled independently. Hence, the investigations of MIMO-NOMA in \cite{sun2015ergodic,ding2016general,liu2016capacity} were essentially specific to single/multiple pairs of NOMA users. To support more than two users in each NOMA group, multi-user MIMO-NOMA transmission strategies were developed in \cite{wang2018novel,wang2019power} by leveraging group interference cancellation in conjunction with minimum mean square error (MMSE) detector. To meet dynamic QoS requirements in IoT communications, novel precoding and detection strategies were proposed for MIMO-NOMA systems to differ users' effective channel gains in \cite{ding2016mimo}. A coordinated multipoint (CoMP)-enabled NOMA scheme was proposed to attain an improved spectral efficiency in \cite{eryani2019generalized}, where distributed single-antenna base stations (BSs) collaborate together to form a virtual MIMO. Nonetheless, most of the existing works assumed frequent feedback of channel state information (CSI) at the transmitter. This assumption leads to high energy consumption, signaling and computational overhead, which is apparently impractical for bandwidth-limited, energy-constrained and latency-sensitive IoT networks \cite{ding2014performance,yang2016performance}. In \cite{ding2016design,ding2016application}, more effective CSI feedback was considered into NOMA-MIMO schemes. For further suppression of system overhead, the sender was assumed to have only statistical knowledge of CSI \cite{gong2018antenna,gong2019application,tong2019mimo,cui2018outage,shi2018performance}. To be specific, the rate and outage performance was analyzed for single-stream MIMO-NOMA systems in \cite{gong2018antenna}. In regard to the similar systems, a power allocation scheme was further proposed for maximizing the ergodic sum rate in \cite{gong2019application}. 
Furthermore, a linear MIMO-NOMA model with complex-valued power coefficients was derived by considering multi-stream transmissions in \cite{tong2019mimo}. Unlike \cite{gong2018antenna,gong2019application,tong2019mimo} that assume independent fading channels, the joint design of power allocation and receive beamforming was developed for single-stream MIMO-NOMA systems with receive correlation in \cite{cui2018outage}. Whereas, the outage performance of multi-stream MIMO-NOMA systems was examined in \cite{shi2018performance} by considering transmit correlation.

IoT devices usually equip with a single antenna for the sake of limited size, low cost and energy saving, which in turn hampers the exploitation of spatial multiplexing gain from MIMO technology. To combat this issue, virtual MIMO technology has emerged as an attractive candidate by coordinating multiple nodes to form one virtual antenna array entity \cite{dohler2002space,li2017performance}. The realization of virtual MIMO brings new challenges of data sharing, time/frequency synchronization, user grouping/clustering, and CSI estimation \cite{chang2016low,alexiou2014wireless}. Nevertheless, it was disclosed in \cite{cui2004energy,jayaweera2006virtual} that the virtual MIMO system surpasses the simple single-input single-output (SISO) system in terms of the consumed energy and delay. In \cite{rafique2013performance}, the virtual MIMO was proved to be a viable framework to improve the data rate and energy efficiency. Furthermore, experimental measurements were carried out for the virtual MIMO to confirm its throughput enhancement in indoor-to-outdoor scenarios in \cite{dey2019throughput}. In order to prolong the lifetime of IoT networks as well as improve its capacity, the application of virtual MIMO to IoT has received an ever-increasing research attention recently\cite{feng2018uav,kisseleff2016distributed,youssef2018joint,dey2019virtual}. To name a few, in \cite{feng2018uav}, a multi-antenna unmanned aerial vehicle (UAV) serves as an on-demand flexible platform, which communicates with a cluster of single-antenna IoT devices through a virtual MIMO link. In \cite{kisseleff2016distributed}, a virtual MIMO approach was invented for the delivery of sensor data to a distant sink node in wireless body area networks (WBAN). 

\subsection{Scope of the Paper}
To offer massive number of connections and implement MIMO transmissions in IoT scenarios, this paper proposes a novel virtual MIMO-NOMA transmission scheme for IoT communications. More specifically, downlink communications between a BS and multiple clusters are taken into account, wherein all the single-antenna IoT devices in each cluster collaborates with each other to create a virtual MIMO entity, and a number of independent data streams are requested by each cluster. On one hand, NOMA technique is employed at the BS to simultaneously serve all the clusters in power-domain. On the other hand, each cluster capitalizes on zero-forcing (ZF) detection to de-multiplex all the desired input data streams. To avoid the waste of the energy and bandwidth on frequent channel estimations, the BS is assumed to have only statistical knowledge of CSI. An in-depth analysis of the outage and goodput performance of the virtual MIMO-NOMA system is carried out, where a more practical channel model, i.e., Kronecker model, is adopted to account for both the transmit and receive correlations. More importantly, the asymptotic outage probability is derived in a compact and simple form, which offers not only insightful conclusions but also a simple analytical treatment for the optimal system design. 
With the asymptotic results, the maximization of the goodput is enabled with closed-form solution. Moreover, a modified version of the proposed scheme, i.e., virtual MIMO-hybrid MA, is devised to balance the tradeoff between two conflicting goals of enhancing spectral efficiency and saving computation/energy resources for IoT networks. Numerical verifications are finally presented. %

\subsection{Contributions}
The main contributions of this paper can be summarized as follows:
\begin{enumerate}
  \item The outage probability and goodput of the downlink virtual MIMO-NOMA-aided IoT communication system are scrutinized in this paper. To characterize the impact of spatial correlation between antennas in realistic propagation environments, Kronecker model is employed to capture both the transmit and receive correlations. However, Kronecker channel model poses significant challenges to the performance evaluation of the proposed scheme. As the most fundamental performance metric, the outage probability can be accurately approximated in a compact form by invoking the methodologies of moment generating function (MGF) and multivariate analysis.
  \item The asymptotic outage probability is then derived in closed-form to ease the extraction of insightful results. In particular, the asymptotic outage probability quantifies the impacts of spatial correlation, transmission rates and power allocation coefficients. In addition, the diversity order is $\mathpzc d = N_r-M+1$, where $M$ and $N_r$ stand for the number of data streams requested by each cluster and the number of IoT devices within each cluster, respectively.
  \item Furthermore, diversity-multiplexing tradeoff (DMT) is studied to figure out the multiplexing benefit of MIMO-NOMA channels.
  \item The asymptotic outage probability also enables the maximization of goodput in closed-form by favor of Karush-Kuhn-Tucker (KKT) conditions. Three optimization algorithms are thereon developed for the proper selection of the transmission powers and/or rates. It is revealed that the developed algorithms incline to allocate more power to clusters under unfavorable channel conditions and support clusters with higher transmission rate under benign channel conditions. 
\end{enumerate}

\subsection{Outline and Notations}
The remainder of the paper is outlined as follows. The virtual MIMO-NOMA system model is presented together with two fundamental performance metrics in Section \ref{sec:sys_mod}. Section \ref{sec:per_ana} conducts the exact and asymptotic analyses of the outage probability. With the analytical results, the goodput maximization is then enabled in Section \ref{sec:goodput_max}. The numerical results are shown for verifications in Section \ref{sec:num}.

\emph{Notations:} We shall use the following notations throughout the paper. Bold uppercase and lowercase letters are used to denote matrices and vectors, respectively. ${\bf X}^{\mathrm{T}}$, ${\bf X}^{\mathrm{H}}$, ${\bf X}^{-1}$ and ${\bf X}^{1/2}$ denote the transpose, conjugate transpose, matrix inverse and Hermitian square root of matrix ${\bf X}$, respectively. ${\bf X}^\dag$ represents the Moore-Penrose pseudo-inverse of $\bf X$ and ${\bf X}^\dag = {\left( {{{\bf{X}}}^{\rm{H}}{{\bf{X}}}} \right)^{ - 1}}{{\bf{X}}}^{\rm{H}}$. $\mathrm{vec}$, $\rm{tr}$, $\mathrm{det}$ and $\mathrm{diag}$ are the operators of vectorization, trace, determinant and diagonalization, respectively. 
$[{\bf X}]_{ij}$ is the $(i,j)$-th element of $\bf X$. $\mathbf{0}_n$ and $\mathbf{I}_n$ stand for $1 \times n$ all-zero vector and $n \times n$ identity matrix, respectively. 
$\mathbb C^{m\times n}$ denotes the set of $m\times n$-dimensional complex matrices. The symbol ${\rm i}=\sqrt{-1}$ is the imaginary unit. ${{\mathrm{E}}_A}$ denotes the operator of the expectation taking over random variable $A$. 
$o(\cdot)$ denotes little-O notation. $(\cdot)_n$ represents Pochhammer symbol. 
$\bigcup{(\cdot)}$ stands for the union operation.
The symbol $\simeq$ means ``asymptotically equal to''. $\propto$ stands for ``directly proportional to''. $\Gamma(x)$ denotes Gamma function. $G^{m,n}_{p,q}(\left. \cdot \right|x)$ stands for Meijer G-function. $\gamma(s,x)$ denotes the lower incomplete gamma function. Any other notations will be defined in the place where they take place.
%
%
%
\section{System Model and Performance Metrics}\label{sec:sys_mod}
\begin{figure}[!t]
  \centering
  \includegraphics[width=3.2in]{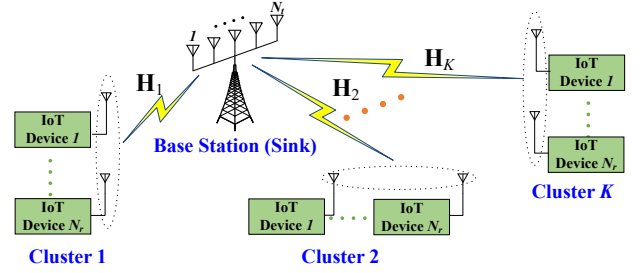}
  \caption{A downlink virtual MIMO-NOMA system model.}\label{fig:sys}
\end{figure}
\subsection{System Model}
As shown in Fig. \ref{fig:sys}, this paper considers a downlink virtual MIMO-NOMA system with one BS/central sink node serving $K$ clusters in IoT networks. Herein, we assume that the BS is equipped with $N_t$ antennas. Besides, the $K$ clusters are predefined to ease management, and each cluster consists of $N_r$ neighbouring single-antenna IoT devices\footnote{In general, the pairing/clustering process could be either predefined or dynamically scheduled. In particular, similarly to \cite{feng2018uav}, the predefined clustering divides all the IoT devices into multiple clusters, each with $N_r$ adjacent devices. It is noteworthy that our subsequent analysis is also applicable to the case of clusters with unequal sizes. Whereas, with regard to the dynamic clustering, the BS first sends a message to notify all the IoT devices within the cell so as to trigger the clustering process via a self-organizing protocol. Upon receiving the notification message, each device periodically broadcasts a life messages (LifeMsg) containing its unique identity, and meanwhile searches its neighbouring devices within a certain range by overhearing their LifeMsgs. After the completion of the search, all the neighbouring devices collaborate and share their antennas with each other to form a cluster. Each cluster then reports its relative position (or path loss) and statistical knowledge of CSI (e.g., channel correlations, average channel gains) to the BS. Once the BS gathers all the clusters' reports, it determines the decoding order of SIC for these clusters according to their relative positions, and then the transmission rates and powers are properly devised based on the channel statistics \cite{choi2016harq,liu2016cooperative,chingoska2016resource}. The necessary information (e.g., decoding order) for detections is thereafter fed back from the BS to the clusters. Nonetheless, the further investigation of implementing the pairing/clustering is out of the scope of this paper. For more details, interested readers are referred to \cite{ding2016application,ding2015application,ding2015impact}.}\cite{feng2018uav}. These devices share their antennas with each other through an error-free intra-cluster interface, in order to create a virtual MIMO system\cite{buratti2011multihop,ding2016transmission}. We can easily extend our design to the general case with more than one antennas mounted on the IoT devices. This is based on the fact that an IoT device with multiple antennas can be viewed as multiple single-antenna IoT devices. 
Moreover, we assume that $M (\le \min{(N_t,N_r)})$ independent data streams are requested by each cluster. In doing so, a power-domain NOMA is used to multiplex these clusters over the same time-frequency resource with the help of superposition coding at the BS. 
Thus, the received signal at the $k$-th nearest cluster to the BS is given by
\begin{equation}
{{\bf{y}}_k} = {\sqrt{\ell(d_k)}}{{\bf{H}}_{k}}{{\bf{V}}}{{\bf{s}}} + {{\bf{n}}_k},
\end{equation}
where ${{\bf{H}}_{k}} \in {\mathbb C}^{N_r\times N_t}$ denotes the channel response matrix from the BS to cluster $k$; ${\bf V}=({\bf v}_1,\cdots,{\bf v}_M) \in {\mathbb C}^{N_t \times M}$ represents the transmit beamforming matrix at the BS, i.e., $\left\| {{{\bf{v}}_m}} \right\|=1$ for $m=1,\cdots,M$; ${{\bf{s}}} = ({s}_1,\cdots,{s}_M)^{\rm T}$ is the superimposed symbol vector at the BS, and the entries of the transmit symbol vector are drawn independently from a complex symmetric circularly normal distributions with zero-mean and variance of $P$, i.e., ${{\bf{s}}} \sim {\cal C N}({\bf 0}, P{\bf I}_M)$, and $P$ denotes the total transmit power of each data stream; ${{\bf{n_k}}}$ is the additive Gaussian white noise (AWGN) vector with variance $\sigma^2$, i.e., ${{\bf{n_k}}} \sim {\cal C N}({\bf 0}, \sigma^2{\bf I}_{N_r})$. A composite channel model is applied to encompass both small-scale Rayleigh fading and large-scale path loss. In particular, the path loss is modelled by using Friis equation, such that $\ell(d_k)=\mathcal K {d_k}^{-\alpha}$ with path loss exponent $\alpha \ge 2$, where $d_k$ denotes the distance between the BS and cluster $k$, $\mathcal K $ represents the free space power received at the reference distance $d_0=1$m. It is noticeable that all the IoT devices within each cluster are assumed to suffer from almost identical large-scale fadings because of a sufficiently long distance between the BS and clusters relative to the cluster size \cite{qu2010cooperative}. 
Without loss of generality, the distances between the BS and clusters are sorted in an increasing order such that $d_1 \le d_2 \le \cdots \le d_K$.

Due to the close spacing between adjacent antenna elements no matter at the BS or within each cluster, the resultant mutual antenna coupling brings about the spatial correlation between elements in realistic propagation environments\cite{shin2008mimo}. Particularly in downlink IoT communications, high correlation between the transmit antennas may occur in the absence of local scattering \cite{tian2017overlapping}. Furthermore, in IoT network, devices or sensors located indoors (e.g., airports, malls, house and buildings) usually suffer from shadowing effects. The shadowing induced by the same obstacle (tree/building) will incur high correlation between propagation paths\cite{qu2010cooperative}. Overlooking the existence of spatial correlation will overestimate the system performance and offer misleading design guide. In order to capture the negative impact of the antenna correlation at transceivers, Kronecker channel model is widely adopted by decomposing the spatial correlation into two independent components, i.e., transmit and receive correlations\cite{martin2004asymptotic}. According to the Kronecker model, the channel matrix between the BS and cluster $k$, ${{\bf{H}}_{k}}$, is represented by \cite{larsson2008space,paulraj2003introduction}
\begin{equation}\label{eq:corr_mod}
{{\bf{H}}_{k}} = {{\bf R}_r}^{1/2} {\bf H}_w {{\bf R}_t}^{1/2},
\end{equation}
where ${\bf R}_t$ and ${\bf R}_r$ denote transmit- and receive-side covariance matrices, respectively, the elements of matrix ${\bf H}_w$ are independent and identically distributed (i.i.d.) complex zero-mean unit-variance Gaussian random variables. Thus, the distribution of ${\bf H}_k$ follows as ${\rm vec}\left({\bf H}_k\right) \sim {\cal C N}({\bf 0}_{N_tN_r},  {{\bf R}_t}^{\rm T}\otimes {\bf R}_{{{r} }})$. It is worthwhile to point out that ${\bf R}_t$ and ${\bf R}_r$ are positive semi-definite Hermitian matrices. The validity of Kronecker model has been corroborated regardless of antenna configurations and intra-array spacing, provided that the transmit and receive angular power profiles are separable \cite{clerckx2013mimo}. Besides, in order to conserve energy, relieve signaling overhead and avoid frequent CSI feedback in large-scale energy- and bandwidth-limited IoT networks, we assume that only statistical CSI is available to the BS \cite{ding2014performance,yang2016performance}. 

Since NOMA is employed to simultaneously accommodate multiple clusters by sharing the same physical resources, the information bearing vector of $M$ data streams can be constructed by using superposition coding as follows
\begin{equation}
{\bf{s}} = \sqrt P \sum\limits_{k = 1}^K {{\rm diag}\left\{\sqrt{\zeta  _{1,k}},\cdots,\sqrt{\zeta  _{M,k}}\right\}{{\bf{x}}_k}} ,
\end{equation}
where ${{\bf{x}}_k}=(x_{1,k},\cdots,x_{M,k})^{\rm T}$ is the signal vector intended for cluster $k$, $\zeta _{m,k}$ stands for the power allocation coefficient of signal $x_{m,k}$, which is subject to the normalization constraint $\sum\nolimits_{k = 1}^K {{\zeta _{m,k}}}  = {1}/{M}$ for all $ m \in [1,M]$, to ensure the fairness among different data streams. On the other hand, the application of NOMA can guarantee clusters' fairness by taking advantage of the difference in channel qualities between the BS and clusters\cite{ding2016application}.

The linear ZF detection is a vital technology to offer low complexity and implementation cost for IoT communications\cite{ding2019success,ding2016mimo}. Hence, we assume that the received signal is reconstructed by using a ZF detector. Specifically, the ZF estimate of the transmitted symbol streams at cluster $k$ can be obtained by multiplying ${\bf y}_k$ with $({\sqrt{\ell(d_k)}}{{{{\bf{H}}_k}{\bf{V}}} })^\dag$ as
\begin{align}
{\bf{\hat s}} 
&= {\bf{s}} + \frac{1}{\sqrt{\ell(d_k)}}\left({{{{\bf{H}}_k}{\bf{V}}} }\right)^\dag{{\bf{n}}_k}.
\end{align}
With the linear ZF receiver, the decodings for $M$ data streams can be independently performed. This eventually induces a very low decoding complexity.

After ZF detection, the SIC as one of the key components of NOMA technique is applied to separate superimposed signals sequentially. To simplify the design of power allocation coefficient and better utilize NOMA scheme, the decoding order at receivers is determined according to the decreasing order of transmission distances, i.e., $d_K \ge  \cdots \ge  d_2 \ge d_1$\cite{shi2018cooperative,liu2016cooperative}. 
In particular, cluster $k$ will detect the message of cluster $i$ prior to decoding its own message ${\bf x}_k$ such that $i > k$, and then subtracting the signal ${\bf x}_{i}$ from its observation successively. In the meantime, the signal ${\bf x}_{i}$ intended for cluster $i$ ($ < k$) will be regarded as interference at cluster $k$. Therefore, for the $k$-th nearest cluster to the BS, the instantaneous received signal-to-interference-plus-noise ratio (SINR) to detect the signal $x_{m,i}$ that is extracted from data stream $m$ is given by
\begin{align}\label{eqn:sinr_x_im}
{\gamma _{m,k \to i}}
&= \frac{{P{\zeta _{m,i}}}}{{P\sum\limits_{l = 1}^{i - 1} {{\zeta _{m,l}}}  + \frac{\sigma ^2}{\ell ({d_k})}{{\left[ {{{\left( {{{\bf{H}}_k}{\bf{V}}} \right)}^\dag }{{\left( {{{\left( {{{\bf{H}}_k}{\bf{V}}} \right)}^\dag }} \right)}^{\rm{H}}}} \right]}_{mm}}}} \notag\\
&=\frac{{{{\bar \gamma }}{\ell ({d_k})}{\zeta _{m,i}}}}{{{{\bar \gamma }}{\ell ({d_k})}\sum\limits_{l = 1}^{i - 1} {{\zeta _{m,l}}}  + {{\left[ {{{\left( {{{\left( {{{\bf{H}}_k}{\bf{V}}} \right)}^{\rm{H}}}{{\bf{H}}_k}{\bf{V}}} \right)}^{ - 1}}} \right]}_{mm}}}},
\end{align}
where $ i \ge k$ and ${{\bar \gamma }} = {{P}}/{{{\sigma ^2}}}$ denotes the average transmit SNR.

\subsection{Performance Metrics}

The absence of perfect instantaneous CSI at the BS undoubtedly results in decoding errors at receivers. In this paper, we are concerned with the error/outage probability. More specifically, the event that any cluster fails to decode any of its own data streams can be deemed as an error/outage. By following the proposed scheme, the outage event takes place if the cluster (e.g., cluster $k$) is unable to either subtract the interfering streams (e.g., $x_{m,i}$, $i>k$) or recover its own data stream (e.g., $x_{m,k}$). From an information-theoretical perspective, the outage probability of cluster $k$ for decoding its own data stream $m$ is therefore expressed as
\begin{align}\label{eqn:out_pro_def}
p_{m,k}^{out} = \Pr \left[ {\bigcup\limits_{i = k}^K {{{\log }_2}\left( {1 + {\gamma _{m,k \to i}}} \right) < {R_{m,i}}} } \right],
\end{align}
where ${R_{m,i}}$ is the target transmission rate for data stream $x_{m,k}$.

Since the occurrence of decoding failures will yield the deterioration of the target achievable rate, the long-term average throughput should be precisely estimated. Towards this goal, the concept of goodput is commonly used to measure the number of information bits successfully delivered per transmission\cite{rui2008combined}. The goodput of the virtual MIMO-NOMA system can be expressed in terms of outage probabilities as
\begin{equation}\label{eqn:goodput_def}
{T_g} = \sum\limits_{m = 1}^M {\sum\limits_{k = 1}^K {\left( {1 - p_{m,k}^{out}} \right){R_{m,k}}} }.
\end{equation}
Unambiguously, the outage probability plays a crucial role in performance evaluation. For the sake of thorough investigation, we have to derive a closed-form and tractable expression for the outage probability first. With the analytical result, we are concerned with the maximization of the goodput for the virtual MIMO-NOMA system in this paper. 
\section{Analysis of Outage Probability}\label{sec:per_ana}
Although the outage probability of the MIMO-NOMA system under semi-correlated MIMO fading channels has been analyzed in \cite{shi2018performance}, the analytical results are inapplicable to that under full-correlated MIMO fading channels, as shown in \eqref{eq:corr_mod}. The Kronecker correlation model in \eqref{eq:corr_mod} significantly challenges the exact analysis of the outage probability. Moreover, the asymptotic analysis of the outage probability is also conducted to gain more helpful insights.
\subsection{Exact Analysis}
By putting \eqref{eqn:sinr_x_im} into \eqref{eqn:out_pro_def}, the probability of the event that cluster $k$ fails to decode its own data stream $m$, $p_{m,k}^{out}$, can be simplified as
\begin{align}\label{eqn:out_pro_def_rew}
p_{m,k}^{out}&= \notag\\
&\Pr \left[ {{{{{\left[ {{{\left( {{{\left( {{{\bf{H}}_k}{\bf{V}}} \right)}^{\rm{H}}}{{\bf{H}}_k}{\bf{V}}} \right)}^{ - 1}}} \right]}_{mm}}}} > {{{{\bar \gamma }}{\theta _{m,k}}\ell ({d_k})}}} \right],
\end{align}
where 
${\theta _{m,k}} = \min \left\{ {\left. {{{{\zeta _{m,i}}}}/{({{2^{{R_{m,i}}}} - 1})} - \sum\nolimits_{l = 1}^{i - 1} {{\zeta _{m,l}}} } \right|k \le i \le K} \right\}$. It is worth noting that the condition of ${\theta _{m,k}}>0$ should be always satisfied to successfully eliminate inter-cluster interference while performing SIC for NOMA. Hereby, the constraints $\left\{{\theta _{m,k}}>0,k\in[1,K]\right\}$ lead to $\left\{{R_{m,i}} < {\log _2}\left( {1 + {{{\zeta _{m,i}}}}/{{\sum\nolimits_{l = 1}^{i - 1} {{\zeta _{m,l}}} }}} \right),\,i\in[2,K]\right\}$.

By defining ${\bf{Z}} = {{{\bf{H}}_k}{\bf{V}}}$, the elements of ${\bf{Z}}$ are circularly symmetric complex Gaussian distributed as ${\rm vec}\left({\bf Z}\right) \sim {\cal C N}({\bf 0},  {{{\bf{R}}_{t'}}}^{\rm T}\otimes {\bf R}_{{{r} }})$, where ${{\bf{R}}_{t'}} = {{\bf{V}}^{\rm{H}}}{{\bf{R}}_t}{\bf{V}}$. Then, ${\bf{Z}}$ can be rewritten as ${\bf{Z}} = {{\bf{R}}_r}^{1/2}{{\bf{Z}}_w}{{\bf{R}}_{t'}}^{1/2}$, where ${{\bf{Z}}_w}$ denotes a $N_r \times M$ complex Gaussian matrix with i.i.d. ${\cal C N}(0, 1)$ entries. With the above definitions, $p_{m{},k}^{out}$ can be rewritten as
\begin{equation}\label{eqn:out_1_def}
p_{m{},k}^{out} = 
\Pr \left[ {{{{{\left[ {{{\left( {{{\bf{Z}}^{\rm{H}}}{\bf{Z}}} \right)}^{ - 1}}} \right]}_{mm}}}} > {{{{\bar \gamma }}{\theta _{m,k}}\ell ({d_k})}}} \right].
\end{equation}
Unfortunately, there are only a few relevant research results concerning \eqref{eqn:out_1_def} so far. More specifically, by solely considering the transmit correlation, \cite[Theorem 3.2.10]{muirhead2009aspects} can be employed to get \eqref{eqn:out_1_def} in closed-form \cite{gore2002transmit,ding2016design,shi2018performance}. However, the introduction of the receive correlation will exceedingly complicate the performance analysis due to the intractability of tackling the integral involving hypergeometric functions of matrix argument. By employing the vector-matrix partition, M. Kiessling \emph{et al.} in \cite{kiessling2003analytical} investigated the symbol error rate of a $2\times 2$ MIMO case with correlations occurring at both transmit- and receive-sides. However, there is still no general result available for deriving \eqref{eqn:out_1_def}. Motivated by this, the outage probability for arbitrary data stream $m$, $p_{m,k}^{out}$, can be accurately approximated as
\begin{align}\label{eqn:out_1_xdeffina}
p_{m,k}^{out} \approx  \tilde F \left( \frac{{{{\left[ {{{\bf{R}}_{t'}}^{ - 1}} \right]}_{mm}}}}{{{{\bar \gamma }}{\theta _{m,k}}\ell ({d_k})}} \right),
\end{align}
where $\tilde F \left( x \right)$ is given by (\ref{eqn:cdf_X_k}) at the top of the next page, ${\lambda _1}, \cdots ,{\lambda _{N_r}}$ denote the $N_r$ eigenvalues of ${{\bf{R}}_r}$
and the proof is detailed in Appendix \ref{app:F_xk}.
\begin{figure*}[!t]
\begin{equation}\label{eqn:cdf_X_k}
\tilde F \left( x \right)=
\frac{{\Gamma \left( {{N_r}} \right)\det \left( {{{\left\{ {{\lambda _i}^{{N_r} - 1}G_{2,3}^{2,1}\left( {\left. {\begin{array}{*{20}{c}}
{1,{N_r}}\\
{1,{N_r} - M + 1,0}
\end{array}} \right|\frac{x}{{{\lambda _i}}}} \right)} \right\}}_{\scriptstyle1 \le i \le {N_r}\hfill\atop
\scriptstyle j = 1\hfill}},{{\left\{ {{\lambda _i}^{{N_r} - j}} \right\}}_{\scriptstyle1 \le i \le {N_r}\hfill\atop
\scriptstyle2 \le j \le {N_r}\hfill}}} \right)}}{{\Gamma \left( {{N_r} - M + 1} \right)\det \left( {\left\{ {{\lambda _i}^{{N_r} - j}} \right\}_{1 \le i,j \le {N_r}}} \right)}}.
\end{equation}
\end{figure*}
It it noticeable that the approximation in \eqref{eqn:out_1_xdeffina} will become an equality whenever ${ {{\bf{R}}_r}}=\lambda_0{\bf I}_{N_r}$. Clearly, the approximation in \eqref{eqn:out_1_xdeffina} would become accurate if fading channels experience a low correlation at cluster's side. In practice, this assumption makes sense because IoT devices are placed sufficiently far apart from each other relative to the close-spacing between transmit antennas at the base station. Moreover, \eqref{eqn:out_1_xdeffina} can be further simplified by considering the following two special cases.
\subsubsection{$M=1$}
Suppose that only a single data stream is requested by each cluster in downlink IoT communications, i.e., $M=1$, the Meijer-G function in \eqref{eqn:cdf_X_k} reduces to
\begin{align}\label{eqn:meijer_G}
G_{2,3}^{2,1}\left( {\left. {\begin{array}{*{20}{c}}
{1,{N_r}}\\
{1,{N_r} - M + 1,0}
\end{array}} \right|x} \right) &= G_{1,2}^{1,1}\left( {\left. {\begin{array}{*{20}{c}}
1\\
{1,0}
\end{array}} \right|x} \right) \notag\\
&= \gamma \left( {1,x} \right) = 1 - {e^{ - x}},
\end{align}
where the first equality holds by using the definition of Meijer G-function \cite{gradshteyn1965table}, and the second equality holds by using \cite[eq.06.07.26.0006.01]{wolframe2010math}. Hereby, the outage probability can be rewritten as \eqref{eqn:out_m_M1_fina}, shown at the top of the following page.
\begin{figure*}[!t]
\begin{equation}\label{eqn:out_m_M1_fina}
p_{m,k}^{out} = \frac{{\det \left( {{{\left\{ {{\lambda _i}^{{N_r} - 1}\left( {1 - {e^{ - \frac{{{{\left[ {{{\bf{R}}_{t'}}^{ - 1}} \right]}_{mm}}}}{{{\lambda _i}\bar \gamma {\theta _{m,k}}\ell ({d_k})}}}}} \right)} \right\}}_{\scriptstyle1 \le i \le {N_r}\hfill\atop
\scriptstyle j = 1\hfill}},{{\left\{ {{\lambda _i}^{{N_r} - j}} \right\}}_{\scriptstyle1 \le i \le {N_r}\hfill\atop
\scriptstyle2 \le j \le {N_r}\hfill}}} \right)}}{{\det \left( {\left\{ {{\lambda _i}^{{N_r} - j}} \right\}_{1 \le i,j \le {N_r}}} \right)}},
\end{equation}
\hrulefill
\end{figure*}

\subsubsection{$\lambda_1=\cdots=\lambda_{N_r}=\lambda_0$}
If all the eigenvalues of ${{\bf{R}}_r}$ are identical, i.e., $\lambda_1=\cdots=\lambda_{N_r}=\lambda_0$, the approximation in \eqref{eqn:out_1_xdeffina} turns out to be an equality. By substituting \eqref{eqn:cdf_X_k} into \eqref{eqn:out_1_xdeffina}, we discover that both the denominator and numerator of \eqref{eqn:cdf_X_k} equal to zero because of identical rows in determinant. Hence, as proved in Appendix \ref{app:der_meijerG}, $p_{m,k}^{out}$ can be simplified as 
\begin{align}\label{eqn:out_mKconst_corr_spefina}
p_{m,k}^{out}&=\frac{{\gamma \left( {{N_r} - M + 1,\frac{{{{\left[ {{{\bf{R}}_{t'}}^{ - 1}} \right]}_{mm}}}}{{{\lambda _0}{{\bar \gamma }}{\theta _{m,k}}\ell ({d_k})}}} \right)}}{{\Gamma \left( {{N_r} - M + 1} \right)}},
\end{align}
which is consistent with the result in \cite{shi2018performance}.

\subsection{Asymptotic Analysis}
To extract more meaningful results from the exact analysis and facilitate the optimal system design as well, this paper focuses on the asymptotic behaviour of the outage probability under high SNR, i.e., $\bar \gamma \to \infty$. To this end, it is imperative to carry out the asymptotic analysis of $\tilde F \left( x \right)$ as $x\to 0$. As proved in Appendix \ref{app:F_tilde_exp}, $\tilde F \left( x \right)$ can be expanded as
\begin{align}\label{eqn:meijer_G_simple}
\tilde F &\left( x \right) = \frac{{\Gamma \left( {{N_r}} \right)}}{{\Gamma \left( {{N_r} - M + 1} \right)\det \left( {\left\{ {{\lambda _i}^{{N_r} - j}} \right\}} \right)}}\notag\\
&\times\sum\limits_{\vartheta  = {N_r} - M + 1}^{{N_r} - 1} {\frac{{{{\left( { - 1} \right)}^{2\vartheta  - {N_r} + M}}{{x}^\vartheta }}}{{\vartheta !\left( {\vartheta  - \left( {{N_r} - M + 1} \right)} \right)!\left( {{N_r} - 1 - \vartheta } \right)!}}} \notag\\
& \times \det \left( \left\{{{{{\lambda _i}} }^{{N_r} - \vartheta  - 1}}\ln {{\lambda _i}},\left. {{{ {{\lambda _i}}}^{{N_r} - j}}} \right|_{j = 2}^{{N_r}}\right\} \right)\notag\\
& + \frac{{\Gamma \left( {{N_r}} \right)}}{{\Gamma \left( {{N_r} - M + 1} \right)\det \left( {\left\{ {{\lambda _i}^{{N_r} - j}} \right\}} \right)}}\notag\\
&\times\sum\limits_{\vartheta  = {N_r}}^\infty  {\frac{{{{\left( { - 1} \right)}^{\vartheta  + M - 2}}\left( {\vartheta  - {N_r}} \right)!}}{{\vartheta !\left( {\vartheta  - {N_r} + M - 1} \right)!}}{x^\vartheta }} \notag\\
& \times \det \left( \left\{{{{{\lambda _i}}}^{{N_r} - \vartheta  - 1}},\left. {{{ {{\lambda _i}}}^{{N_r} - j}}} \right|_{j = 2}^{{N_r}}\right\} \right).
\end{align}
With the series expansion of $\tilde F \left( x \right)$ in \eqref{eqn:meijer_G_simple}, the asymptotic analysis of the outage probability is thus empowered. The first summation of (\ref{eqn:meijer_G_simple}) in the right-hand side of equation exists if and only if $M>1$, thus the
asymptotic analysis of $p_{m,k}^{out}$ should be separated into two cases, i.e., $M=1$ and $M>1$.
\subsubsection{$M=1$}
If only a single data stream in each cluster is assumed, i.e., $M=1$, the asymptotic expression of $p_{m,k}^{out}$ as $\bar \gamma \to \infty$ can be obtained by putting \eqref{eqn:meijer_G_simple} into \eqref{eqn:out_1_xdeffina} as
\begin{align}\label{eqn:out_asy_M_1}
p_{m,k}^{out} &\simeq \frac{1}{{\det \left( {\left\{ {{\lambda _i}^{{N_r} - j}} \right\}} \right)}}\sum\limits_{\vartheta  = {N_r}}^\infty  {\frac{{{{\left( { - 1} \right)}^{\vartheta  - 1}}}}{{\vartheta !}}{{\left( {\frac{{{{\left[ {{{\bf{R}}_{t'}}^{ - 1}} \right]}_{mm}}}}{{\bar \gamma {\theta _{m,k}}\ell ({d_k})}}} \right)}^\vartheta }} \notag \\
&\quad \times\det \left( \left\{{{ {{\lambda _i}} }^{{N_r} - \vartheta  - 1}},\left. {{{ {{\lambda _i}}}^{{N_r} - j}}} \right|_{j = 2}^{{N_r}}\right\} \right)\notag \\
&= \frac{1}{{{N_r}!\det \left( {{{\bf{R}}_r}} \right)}}{\left( {\frac{{{{\left[ {{{\bf{R}}_{t'}}^{ - 1}} \right]}_{mm}}}}{{\bar \gamma {\theta _{m,k}}\ell ({d_k})}}} \right)^{{N_r}}} + o\left( {{{\bar \gamma }^{ - {N_r}}}} \right),
\end{align}
where the last step holds by using 
\begin{multline}\label{eqn:det_idt}
\det \left( \left\{{{ {{\lambda _i}}}^{ - 1}},\left. {{{ {{\lambda _i}} }^{{N_r} - j}}} \right|_{j = 2}^{{N_r}}\right\} \right) = \\
{\left( { - 1} \right)^{{N_r} - 1}}\det {\left( {{{\bf{R}}_r}} \right)^{ - 1}}\det \left( {\left\{ {{\lambda _i}^{{N_r} - j}} \right\}} \right).
\end{multline}
\subsubsection{$M>1$}
Likewise, if $M>1$, by ignoring high order terms $o\left( {{{\bar \gamma }^{ - \left( {{N_r} - M + 1} \right)}}} \right)$ in the right-hand side of \eqref{eqn:meijer_G_simple}, $p_{m,k}^{out}$ is asymptotic to 
\begin{align}\label{eqn:out_asy_Mg1}
p_{m,k}^{out} &\simeq \frac{{{{\left( { - 1} \right)}^{{N_r} - M}}\left( {{N_r} - 1} \right)!}}{{\left( {{N_r} - M} \right)!\left( {{N_r} - M + 1} \right)!\left( {M - 2} \right)!
}}\notag\\
&\times \frac{\det \left( \left\{{{{{\lambda _i}}}^{M - 2}}\ln{{\lambda _i}},\left. {{{{{\lambda _i}}}^{{N_r} - j}}} \right|_{j = 2}^{{N_r}}\right\} \right)}{\det \left( \left\{{{\lambda _i}^{{N_r} - j}}\right\} \right)}\notag\\
&\times{\left( {\frac{{{{\left[ {{{\bf{R}}_{t'}}^{ - 1}} \right]}_{mm}}}}{{\bar \gamma {\theta _{m,k}}\ell ({d_k})}}} \right)^{{N_r} - M + 1}} + o\left( {{{\bar \gamma }^{ - \left( {{N_r} - M + 1} \right)}}} \right).
\end{align}

Clearly from both \eqref{eqn:out_asy_M_1} and \eqref{eqn:out_asy_Mg1}, the impacts of spatial correlation, transmission rates, power allocation coefficients and the numbers of IoT devices and data streams on the outage performance can be quantified. Moreover, it is revealed that the diversity order achieved by the virtual MIMO-NOMA system at fixed rates is $\mathpzc d = N_r-M+1$, where the diversity order is defined as $\mathpzc d = \mathop {\lim }\nolimits_{\bar \gamma  \to \infty } {{\log {p_{m,k}^{out}}}}/{{\log \bar\gamma }}$. The diversity order reflects the decreasing slope of the outage probability relative to the transmit SNR. It is apparent that increasing the number of IoT devices within the cluster is beneficial to the improvement in the reception reliability. However, massive IoT devices in the cluster entails prohibitively high energy consumption and operating expenditure of channel estimation and signaling synchronization, which is unfavorable for IoT networks.

\subsection{Diversity-Multiplexing Tradeoff (DMT)}\label{sec:dmt}
To study the DMT in MIMO-NOMA channels, we assume that the power allocation coefficients scale as ${\zeta _{m,k}} \simeq {c_{m,k}}{{\bar \gamma }^{ - {\upsilon _{m,k}}}}$, where ${c_{m,k}}$ is a constant and ${\upsilon _{m,k}}\ge 0$. In particular, the power allocation scheme can be treated as a fixed one if ${\upsilon _{m,k}}=0$ for all $m\in [1,M]$ and $k\in [1,K]$. To reap the multiplexing benefit of MIMO-NOMA channels, let the transmission rate for the data stream of $x_{m,k}$ scale as ${R_{m,k}} = {r_{m,k}}{\log _2}\bar \gamma $, where ${r_{m,k}}\ge 0$ corresponds to multiplexing gain \cite{zheng2003diversity}. On the scale of interest, by following the above settings together with ${\theta _{m,k}}>0$ and $\sum\nolimits_{k = 1}^K {{\zeta _{m,k}}}  = {1}/{M}$, the conditions of ${\upsilon _{m,1}} \ge  \cdots  \ge {\upsilon _{m,K}} = 0$ and ${r_{m,i}} \le {\upsilon _{m,i - 1}} - {\upsilon _{m,i}}$ for $2 \le i \le K$ must be satisfied. Accordingly, on the basis of \eqref{eqn:out_asy_M_1} and \eqref{eqn:out_asy_Mg1}, the outage probability $p_{m,k}^{out}$ at target transmission rates ${R_{m,1}},\cdots,{R_{m,K}}$ is proportional to
\begin{align}\label{eqn:out_prop}
p_{m,k}^{out} &\propto {\left( {{{\bar \gamma {\theta _{m,k}}}}} \right)^{-({N_r} - M + 1)}} \notag\\
&\propto {\left( {{{\bar \gamma \min \left\{ {{c_{m,i}}{{\bar \gamma }^{ - {\upsilon _{m,i}} - {r_{m,i}}}},k \le i \le K} \right\}}}} \right)^{-({N_r} - M + 1)}}\notag\\
&\propto{{\bar \gamma }^{-\left( {{N_r} - M + 1} \right)\left( {1 - {\upsilon _{m,k}} - {r_{m,k}}} \right)}},
\end{align}
where the last step holds by using the inequality ${r_{m,i}} + {\upsilon _{m,i}} \le {\upsilon _{m,i - 1}} \le {r_{m,i - 1}} + {\upsilon _{m,i - 1}}$. Therefore, \eqref{eqn:out_prop} gives rise to a DMT for the data stream of $x_{m,k}$ as
\begin{align}\label{eqn:diver_gainDMT}
&d_{m,k}^*(\{r_{m,k}\}_{k\in[1,K]}) = -\lim\limits_{\bar \gamma \to \infty} \frac{\log p_{m,k}^{out}(\{{r_{m,k}}{\log _2}\bar \gamma\}_{k\in[1,K]})}{\log \bar \gamma} \notag\\
&=\left( {{N_r} - M + 1} \right)\left( {1 - {\upsilon _{m,k}} - {r_{m,k}}} \right),
\end{align}
where $d_{m,k}^*(\{r_{m,k}\}_{k\in[1,K]})$ represents the diversity gain for $x_{m,k}$, and the diversity gain at fixed rates typically refers to the diversity order. By recalling ${r_{m,k}} + {\upsilon _{m,k}} \le {r_{m,k - 1}} + {\upsilon _{m,k- 1}}$ and ${\upsilon _{m,K}}=0$, the diversity gains obey the following relationship
\begin{align}\label{eqn:diver_gain_relat}
&d_{m,1}^*(\{r_{m,k}\}_{k\in[1,K]})  \le d_{m,2}^*(\{r_{m,k}\}_{k\in[1,K]})\le \cdots \notag\\
& \le d_{m,K}^*(\{r_{m,k}\}_{k\in[1,K]})= ({{N_r} - M + 1})(1-{r_{m,K}}).
\end{align}
By taking the fixed power allocation scheme (i.e., ${\upsilon _{m,k}}=0$ for all $k \in [1,K]$) as an example, the condition of ${r_{m,i}} \le {\upsilon _{m,i - 1}} - {\upsilon _{m,i}}$ for $2 \le i \le K$ leads to vanishing multiplexing gains, i.e., ${r_{m,2}}=\cdots={r_{m,K}}=0$. This consequently yields the DMT of $d_{m,1}^*(\{r_{m,k}\}_{k\in[1,K]})= \left( {{N_r} - M + 1} \right)\left( {1  - {r_{m,1}}} \right)$ and $d_{m,2}^*(\{r_{m,k}\}_{k\in[1,K]})=\cdots= d_{m,K}^*(\{r_{m,k}\}_{k\in[1,K]})={{N_r} - M + 1}$.

\section{Goodput Maximization}\label{sec:goodput_max}
The optimal design of MIMO-NOMA system has been intensively studied by assuming perfect instantaneous CSI at the transmitter \cite{ding2016general,ding2016application}. However, this assumption limits their widespread applications into resource-constrained IoT scenarios, where only the statistical knowledge of CSI is usually known to the transmitter. Nevertheless, the optimal system settings were seldom discussed in the circumstances, because the optimal design using the exact outage probability would be unmanageable. To overcome this drawback, we resort to the tractable expression of the asymptotic outage probability to ease the system optimization. To exemplify this, the goodput maximization is considered next.



Since reception failures are inevitable in the absence of perfect CSI in practice, the goodput is a pivotal performance metric to evaluate the achievable long term average throughput. It is clear that the goodput depends on the preset system configurations, such as transmission rates and power allocation coefficients. Accordingly, these adjustable parameters should be optimally devised by utilizing the channel statistical information systematically. In the sequel, the power allocation coefficients and/or transmission rates are properly chosen to maximize the goodput for the proposed scheme. 

%
%

To enable the optimization with low computational complexity and close-form solution, the manageable expression of the asymptotic outage probability is adopted herein. For notational convenience, the asymptotic expression of the outage probability is unified by combining \eqref{eqn:out_asy_M_1} and \eqref{eqn:out_asy_Mg1} as
\begin{equation}\label{eqn:out_asy_gen}
p_{m,k}^{out} \simeq {\phi _{m,k}}{\theta _{m,k}}^{ - \left( {{N_r} - M + 1} \right)},
\end{equation}
where ${\phi _{m,k}}$ is given by (\ref{eqn:phi_def1}), shown at the top of the next page.
\begin{figure*}[!t]
\begin{equation}\label{eqn:phi_def1}
{\phi _{m,k}} = \left\{ {\begin{array}{*{20}{c}}
{\frac{1}{{{N_r}!\det \left( {{{\bf{R}}_r}} \right)}}{{\left( {\frac{{{{\left[ {{{\bf{R}}_{t'}}^{ - 1}} \right]}_{mm}}}}{{\bar \gamma \ell ({d_k})}}} \right)}^{{N_r}}},}&{M = 1}\\
{\frac{{{{\left( { - 1} \right)}^{{N_r} - M}}\left( {{N_r} - 1} \right)!\det \left( \left\{{{ {{\lambda _i}} }^{M - 2}}\ln {{\lambda _i}},\left. {{{{{\lambda _i}} }^{{N_r} - j}}} \right|_{j = 2}^{{N_r}}\right\} \right)}}{{\left( {{N_r} - M} \right)!\left( {{N_r} - M + 1} \right)!\left( {M - 2} \right)!\det \left( {\left\{ {{\lambda _i}^{{N_r} - j}} \right\}} \right)}}{{\left( {\frac{{{{\left[ {{{\bf{R}}_{t'}}^{ - 1}} \right]}_{mm}}}}{{\bar \gamma \ell ({d_k})}}} \right)}^{{N_r} - M + 1}},}&{M > 1}
\end{array}} \right.,
\end{equation}
\end{figure*}

\subsection{Optimal Power Allocation}\label{sec:optpowe}
Given the transmission rates, the problem of goodput maximization via power allocation is casted as
\begin{equation}\label{eqn:good_max_math}
\begin{array}{*{20}{c l}}
{\mathop {\max }\limits_{{\left\{ {{\zeta _{m,k}}} \right\}_{ m \in [1,M]\hfill\atop
 k \in \left[ {1,K} \right]\hfill}}} }&{{T_g}}\\
{{\rm{s}}{\rm{.t}}{\rm{.}}}&{\sum\nolimits_{k = 1}^K {{\zeta _{m,k}}}  = \frac{1}{M},\,m \in [1,M],}\\
{}&{{\zeta _{m,k}} > 0,m\in [1,M], k\in [1,K],}\\
{}&{{{\log }_2}\left( {1 + \frac{{{\zeta _{m,k}}}}{{\sum\nolimits_{l = 1}^{k - 1} {{\zeta _{m,l}}} }}} \right) > {R_{m,k}}},\\
{}&{\quad \quad \quad \quad m\in [1,M] , k \in \left[ {2,K} \right]},\\
\end{array}
\end{equation}
where the first and last constraints are used to warrant the data streams' fairness and NOMA transmissions, respectively.

By introducing the auxiliary variables, the optimization problem can be reformulated as (\ref{eqn:opt_refor}) at the top of the next page.
\begin{figure*}[!t]
\begin{equation}\label{eqn:opt_refor}
\begin{array}{*{20}{c l}}
{\mathop {\max }\limits_{{\left\{ {{\zeta _{m,k}}} \right\}_{ m \in [1,M]\hfill\atop
 k \in \left[ {1,K} \right]\hfill}},{\left\{ {{\theta _{m,k}}} \right\}_{ m \in \left[ {1,M} \right]\hfill\atop
 k \in \left[ {1,K} \right]\hfill}}} }&{\sum\nolimits_{m = 1}^M {\sum\nolimits_{k = 1}^K {\left( {1 - {\phi _{m,k}}{\theta _{m,k}}^{ - \left( {{N_r} - M + 1} \right)}} \right){R_{m,k}}} } }\\
{{\rm{s}}{\rm{.t}}{\rm{.}}}&{\sum\nolimits_{k = 1}^K {{\zeta _{m,k}}}  = \frac{1}{M},\,m \in [1,M],}\\
{}&{{\zeta _{m,k}} > 0,m\in [1,M], k\in [1,K],}\\
{}&{{{\log }_2}\left( {1 + \frac{{{\zeta _{m,k}}}}{{\sum\nolimits_{l = 1}^{k - 1} {{\zeta _{m,l}}} }}} \right) > {R_{m,k}},{\mkern 1mu} k \in \left[ {2,K} \right],}\\
{}&{{\theta _{m,k}} = \min \left\{ {\left. {\frac{{{\zeta _{m,i}}}}{{{2^{{R_{m,i}}}} - 1}} - \sum\nolimits_{l = 1}^{i - 1} {{\zeta _{m,l}}} } \right|k \le i \le K} \right\}}.
\end{array}
\end{equation}
\end{figure*}
Notice that the objective function is an increasing function of $\theta_{m,k}$, the maximum goodput is achieved at the upper bound of $\theta_{m,k}$. The last constraint can thus be rewritten as ${{\theta _{m,k}} \le {{{\zeta _{m,i}}}}/{({{2^{{R_{m,i}}}} - 1})} - \sum\nolimits_{l = 1}^{i - 1} {{\zeta _{m,l}}} }$ equivalently. As a consequence, (\ref{eqn:opt_refor}) can be converted into a convex optimization problem with linear constraints, as shown in (\ref{eqn:opt_refor_rew}) at the top of the following page.
\begin{figure*}[!t]
\begin{equation}\label{eqn:opt_refor_rew}
\begin{array}{*{20}{c l}}
{\mathop {\min }\limits_{{\left\{ {{\zeta _{m,k}}} \right\}_{ m \in [1,M]\hfill\atop
 k \in \left[ {1,K} \right]\hfill}},{\left\{ {{\theta _{m,k}}} \right\}_{ m \in \left[ {1,M} \right]\hfill\atop
 k \in \left[ {1,K} \right]\hfill}}} }&{\sum\nolimits_{m = 1}^M {\sum\nolimits_{k = 1}^K {{\phi _{m,k}}{R_{m,k}}{\theta _{m,k}}^{ - \left( {{N_r} - M + 1} \right)}} } }\\
{{\rm{s}}{\rm{.t}}{\rm{.}}}&{\sum\nolimits_{k = 1}^K {{\zeta _{m,k}}}  = \frac{1}{M},\,m \in [1,M],}\\
{}&{{\zeta _{m,k}} > 0,m\in [1,M], k\in [1,K],}\\
{}&{{\theta _{m,k}} > 0,m\in [1,M], k\in [1,K],}\\
{}&{{\theta _{m,k}} \le \frac{{{\zeta _{m,i}}}}{{{2^{{R_{m,i}}}} - 1}} - \sum\nolimits_{l = 1}^{i - 1} {{\zeta _{m,l}}} ,m\in [1,M], k \in \left[ {1,K} \right],i \in \left[ {k,K} \right]}.
\end{array}
\end{equation}
\hrulefill
\vspace*{4pt}
\end{figure*}
Alternatively, the optimization problem can be solved by using KKT conditions. The optimal solution to (\ref{eqn:opt_refor_rew}) is given by the following theorem.
\begin{theorem}\label{the:pow}
The optimal power allocation coefficients ${{{{ \bs\zeta }}}_m^*} = {\left( {{{ \zeta }_{m,1}^*}, \cdots ,{{ \zeta }_{m,K}^*}} \right)^{\rm{T}}}$ are given by
  \begin{equation}\label{eqn:zeta_fina}
{{{\bs\zeta }}_m^*} = \frac{{{{\bf{L}}_m}^{ - 1}{{\bf{b}}_m}}}{{M{{\bf{1}}_K}^{\rm{T}}{{\bf{L}}_m}^{ - 1}{{\bf{b}}_m}}},
\end{equation}
where ${\bf 1}_K$ is an $K\times 1$ vector of ones, and
\begin{multline}\label{eqn:b_def}
{{\bf{b}}_m} =
\left( {{{\left( {\frac{{\left( {{N_r} - M + 1} \right){\phi _{m,1}}{R_{m,1}}}}{{{\bf e}_1}^{\rm T} {{\bf{U}}_m}^{ - 1}{{\bf{1}}_K}}} \right)}^{\frac{1}{{{N_r} - M + 2}}}}, \cdots ,} \right.\\
{\left. {{{\left( {\frac{{\left( {{N_r} - M + 1} \right){\phi _{m,K}}{R_{m,K}}}}{{{\bf e}_K}^{\rm T} {{\bf{U}}_m}^{ - 1}{{\bf{1}}_K}}} \right)}^{\frac{1}{{{N_r} - M + 2}}}}} \right)^{\rm{T}}},
\end{multline}
\begin{align}\label{eqn:L_def}
{{\bf{U}}_m} &= {{\bf{L}}_m}^{\rm{T}} \notag\\
&= \left( {\begin{array}{*{20}{c}}
{\frac{1}{{{2^{{R_{m,1}}}} - 1}}}&{ - 1}& \cdots &{ - 1}\\
0&{\frac{1}{{{2^{{R_{m,2}}}} - 1}}}& \cdots &{ - 1}\\
 \vdots & \vdots & \ddots & \vdots \\
0&0& \cdots &{\frac{1}{{{2^{{R_{m,K}}}} - 1}}}
\end{array}} \right),
\end{align}
and ${\bf e}_k$ denotes a column vector with one as its $k$-th element and zeros elsewhere.
Moreover, ${\theta _{m,k}^*}$ is given by
\begin{align}\label{eqn:theta_fin}
{\theta _{m,k}^*} &= \frac{{\zeta _{m,k}^*}}{{{2^{{R_{m,k}}}} - 1}} - \sum\limits_{l = 1}^{k - 1} {\zeta _{m,l}^*}
= \frac{{{{\bf{a}}_{m,k}}^{\rm{T}}{{\bf{L}}_m}^{ - 1}{{\bf{b}}_m}}}{{M{{\bf{1}}_K}^{\rm{T}}{{\bf{L}}_m}^{ - 1}{{\bf{b}}_m}}},
\end{align}
where ${{\bf{a}}_{m,k}} = {\left( {\underbrace { - 1, \cdots , - 1}_{k - 1},\frac{1}{{{2^{{R_{m,k}}}} - 1}},\underbrace {0, \cdots ,0}_{K - k}} \right)}^{\rm{T}}$.
\end{theorem}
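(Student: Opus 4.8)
The plan is to attack \eqref{eqn:opt_refor_rew} head-on by Karush--Kuhn--Tucker (KKT) conditions, which here are both necessary and sufficient: the objective $\sum_{m,k}\phi_{m,k}R_{m,k}\theta_{m,k}^{-(N_r-M+1)}$ is a sum of functions $\theta\mapsto\theta^{-(N_r-M+1)}$ that are strictly convex and strictly decreasing on $\theta>0$ (recall $N_r-M+1\ge 1$), all constraints are affine, and (assuming the rate profile is feasible) a strictly interior point exists, while the objective tends to $+\infty$ as any $\theta_{m,k}\to 0^+$, so the infimum is attained in the interior. Since \eqref{eqn:opt_refor_rew} separates over the stream index $m$, I fix $m$ and suppress it, writing ${\bf a}_i={\bf a}_{m,i}$, ${\bf L}={\bf L}_m$, ${\bf U}={\bf U}_m$, ${\bf b}={\bf b}_m$.

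\emph{Step 1 (reduce $\theta_k$ to a linear functional of $\bs\zeta$).} Note that ${\bf a}_i^{\rm T}$ is exactly the $i$-th row of ${\bf L}$ in \eqref{eqn:L_def}, so ${\bf a}_i^{\rm T}\bs\zeta=[{\bf L}\bs\zeta]_i=\zeta_{i}/(2^{R_{m,i}}-1)-\sum_{l<i}\zeta_{l}$. Because the objective strictly decreases in each $\theta_k$, at the optimum every $\theta_k$ saturates its tightest bound, $\theta_k^*=\min_{k\le i\le K}{\bf a}_i^{\rm T}\bs\zeta^*$. The claim I must establish is that this minimum is attained at $i=k$, i.e.\ $\theta_k^*={\bf a}_k^{\rm T}\bs\zeta^*$, equivalently $\bs\theta^*={\bf L}\bs\zeta^*$. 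Granting it, all the constraints of \eqref{eqn:opt_refor_rew} with $i>k$ are inactive, and the feasible set collapses to $\{\bs\theta:\ \bs\theta={\bf L}\bs\zeta,\ \theta_k>0,\ {\bf 1}_K^{\rm T}\bs\zeta=1/M\}$; here $\{\theta_k>0\}_{k}$ is precisely the NOMA constraints $\log_2(1+\zeta_k/\sum_{l<k}\zeta_l)>R_{m,k}$ (for $k\ge2$) together with $\zeta_1>0$, and since ${\bf L}^{-1}$ is lower-triangular with nonnegative entries, $\bs\theta>{\bf 0}$ also forces $\bs\zeta={\bf L}^{-1}\bs\theta>{\bf 0}$.

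\emph{Step 2 (Lagrange on the reduced problem).} With $\bs\theta={\bf L}\bs\zeta$ a bijection, the problem is $\min_{\bs\theta>{\bf 0}}\sum_k\phi_{m,k}R_{m,k}\theta_k^{-(N_r-M+1)}$ subject to the single equality ${\bf 1}_K^{\rm T}{\bf L}^{-1}\bs\theta=1/M$. Form $\mathcal L=\sum_k\phi_{m,k}R_{m,k}\theta_k^{-(N_r-M+1)}+\mu({\bf 1}_K^{\rm T}{\bf L}^{-1}\bs\theta-1/M)$ and use ${\bf 1}_K^{\rm T}{\bf L}^{-1}=({\bf U}^{-1}{\bf 1}_K)^{\rm T}$ (since ${\bf L}^{-{\rm T}}=({\bf L}^{\rm T})^{-1}={\bf U}^{-1}$). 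Stationarity in $\theta_k$ gives $(N_r-M+1)\phi_{m,k}R_{m,k}\theta_k^{-(N_r-M+2)}=\mu\,{\bf e}_k^{\rm T}{\bf U}^{-1}{\bf 1}_K$, hence $\theta_k=\mu^{-1/(N_r-M+2)}b_k$ with $b_k$ the $k$-th entry of ${\bf b}$ in \eqref{eqn:b_def}; therefore $\bs\theta^*=\mu^{-1/(N_r-M+2)}{\bf b}$ and $\bs\zeta^*={\bf L}^{-1}\bs\theta^*=\mu^{-1/(N_r-M+2)}{\bf L}^{-1}{\bf b}$. Imposing ${\bf 1}_K^{\rm T}\bs\zeta^*=1/M$ fixes $\mu^{-1/(N_r-M+2)}=1/(M\,{\bf 1}_K^{\rm T}{\bf L}^{-1}{\bf b})$, which is \eqref{eqn:zeta_fina}, and \eqref{eqn:theta_fin} then follows from $\theta_k^*={\bf a}_k^{\rm T}\bs\zeta^*$. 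To finish I would verify well-posedness: a short induction on $K$ using the form \eqref{eqn:L_def} shows ${\bf U}^{-1}$ is upper-triangular with strictly positive on-and-above-diagonal entries, so ${\bf U}^{-1}{\bf 1}_K>{\bf 0}$; with $\phi_{m,k},R_{m,k}>0$ this gives ${\bf b}>{\bf 0}$ entrywise, the scalar ${\bf 1}_K^{\rm T}{\bf L}^{-1}{\bf b}=({\bf U}^{-1}{\bf 1}_K)^{\rm T}{\bf b}>0$, and $\theta_k^*>0$ for every $k$; strict convexity of $\theta\mapsto\theta^{-(N_r-M+1)}$ composed with the bijection ${\bf L}$ yields uniqueness of the minimizer.

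\emph{Main obstacle.} The genuinely delicate part is Step 1, i.e.\ showing that the active constraint among $\{\theta_{m,k}\le{\bf a}_{m,i}^{\rm T}\bs\zeta:i\ge k\}$ is the one with $i=k$, equivalently that the profile ${\bf a}_{m,1}^{\rm T}\bs\zeta^*\le{\bf a}_{m,2}^{\rm T}\bs\zeta^*\le\cdots\le{\bf a}_{m,K}^{\rm T}\bs\zeta^*$ is non-decreasing at the optimum. I would prove this either (i) by running KKT on the full \eqref{eqn:opt_refor_rew} with nonnegative multipliers $\nu_{m,k,i}$ for all those inequalities and combining complementary slackness with stationarity in the $\zeta_{m,j}$ (whose gradient contributions are rows of ${\bf L}_m$) to force $\nu_{m,k,i}=0$ for $i>k$; or (ii) by a direct exchange argument: if for some $k$ the minimum were attained only at some $i>k$, one can reallocate a small amount of power among clusters, keeping ${\bf 1}_K^{\rm T}\bs\zeta_m=1/M$ and feasibility, so as to increase ${\bf a}_{m,k}^{\rm T}\bs\zeta_m$ and strictly lower the objective, a contradiction --- the mechanism being that each $\zeta_{m,j}$ enters ${\bf a}_{m,j}^{\rm T}\bs\zeta_m$ with the positive weight $1/(2^{R_{m,j}}-1)$ and every later ${\bf a}_{m,i}^{\rm T}\bs\zeta_m$, $i>j$, with weight $-1$. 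Everything past Step 1 is routine linear algebra.
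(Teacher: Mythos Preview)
Your proposal is correct and follows essentially the same route as the paper: both attack \eqref{eqn:opt_refor_rew} via KKT, and in both the crux is showing that the active inequality for $\theta_{m,k}$ is the one with $i=k$ --- the paper carries this out exactly as your option~(i), a sequential sweep over $k$ through the full KKT multipliers $\omega_{m,k,i}$ using complementary slackness together with the stationarity equations in $\zeta_{m,k}$ and $\theta_{m,k}$. Your Step~2 then streamlines the paper's remaining linear algebra by passing to the $\bs\theta$-coordinates and handling the single equality ${\bf 1}_K^{\rm T}{\bf L}_m^{-1}\bs\theta=1/M$ directly via the identity ${\bf 1}_K^{\rm T}{\bf L}_m^{-1}=({\bf U}_m^{-1}{\bf 1}_K)^{\rm T}$.
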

\begin{proof}
  Please refer to Appendix \ref{app:pow}.
\end{proof}


By putting (\ref{eqn:theta_fin}) into (\ref{eqn:out_asy_gen}), $p_{m,k}^{out}$ can be obtained as
\begin{equation}\label{eqn:out_tar}
p_{m,k}^{out} \simeq {M^{{N_r} - M + 1}}{\phi _{m,k}}{\left( {\frac{{{{\bf{1}}_K}^{\rm{T}}{{\bf{L}}_m}^{ - 1}{{\bf{b}}_m}}}{{{{\bf{a}}_{m,k}}^{\rm{T}}{{\bf{L}}_m}^{ - 1}{{\bf{b}}_m}}}} \right)^{{N_r} - M + 1}}.
\end{equation}
Then plugging (\ref{eqn:out_tar}) into (\ref{eqn:goodput_def}) yields the asymptotic goodput.

Based on Theorem \ref{the:pow}, we arrive at the following remark regarding the power allocation coefficients.
\begin{remark}\label{the:rem}
The coefficients $\theta _{m,1}^*,\cdots,\theta _{m,K}^*$ follow the ascending order of their indices such that $\theta _{m,1}^* < \cdots < \theta _{m,K}^*$, which results in
\begin{equation}\label{eqn:normal_power}
\frac{{\zeta _{m,1}^*}}{{{2^{{R_{m,1}}}} - 1}} < \frac{{\zeta _{m,2}^*}}{{{2^{{R_{m,2}}}} - 1}} <  \cdots  < \frac{{\zeta _{m,K}^*}}{{{2^{{R_{m,K}}}} - 1}},
\end{equation}
where ${{\zeta _{m,k}^*}}/{({{2^{{R_{m,k}}}} - 1})}$ is the ratio of the power allocation coefficient to the received SNR threshold, which characterizes the cluster fairness. Accordingly, the result in (\ref{eqn:normal_power}) indicates that more transmission power is allocated to cluster under worse channel condition. This demonstrates that NOMA scheme can maintain the fairness among clusters while achieving the maximum goodput. On the other hand, this is consistent with the result of the DMT studied in Section \ref{sec:dmt} if we substitute the settings of ${\zeta _{m,k}}\simeq {c_{m,k}}{{\bar \gamma }^{ - {\upsilon _{m,k}}}}$ and ${R_{m,k}}= {r_{m,k}}{\log _2}\bar \gamma $ into \eqref{eqn:normal_power}.
\end{remark}
\begin{proof}
  Please refer to Appendix \ref{app:rem}.
\end{proof}

\begin{remark}\label{the:remout}
The asymptotic outage probability is lower bounded as
\begin{equation}\label{eqn:out_opt_lower}
    p_{m,k}^{out\_asy} \ge {M^{{N_r} - M + 1}}{\phi _{m,k}}{\left( {{2^{{R_{m,k}}}} - 1} \right)^{{N_r} - M + 1}}.
\end{equation}
The right hand side of the inequality (\ref{eqn:out_opt_lower}) is equal to the asymptotic outage probability of cluster $k$ under OMA transmissions, in which all the power is allocated to cluster $k$'s data stream $m$.
\end{remark}
\begin{proof}
  Please refer to Appendix \ref{app:remout}.
\end{proof}
Roughly speaking from Remark \ref{the:remout}, the outage probability is an increasing function of transmission rate $R_{m,k}$. With the definition of the goodput, the transmission rate has two opposite effects on the goodput. Specifically, on one hand, the increase of transmission rate indicates more information bits delivered to the cluster. On the other hand, it results in high outage probability. Accordingly, the goodput approaches to zero no matter when $R_{m,k}$ tends to zero or infinity. All in all, the transmission rate should be well designed to attain the maximum goodput. However, by putting (\ref{eqn:out_tar}) into (\ref{eqn:goodput_def}), the complex expression of the goodput makes it fairly impossible to obtain the optimal transmission rate in closed-form, and even intractable to numerically solve the problem. Hence, a suboptimal algorithm is developed for the joint optimization of transmission powers and rates, which is deferred to Section \ref{sec:opt_joint}. 
\subsection{Optimal Rate Selection}\label{sec:opt_rate}
Furthermore, for fixed values of the power allocation coefficients ${\zeta _{m,k}}$, the problem of goodput maximization through optimal rate selection while implementing NOMA transmissions is formulated as
\begin{equation}\label{eqn:good_max_mathrate}
\begin{array}{*{20}{c l}}
{\mathop {\max }\limits_{{\left\{ {{R _{m,k}}} \right\}_{ m \in [1,M]\hfill\atop
 k \in \left[ {1,K} \right]\hfill}}} }&{{T_g}}\\
{{\rm{s}}{\rm{.t}}{\rm{.}}}&{{R _{m,k}} > 0,m\in [1,M], k\in [1,K],}\\
{}&{{{\log }_2}\left( {1 + \frac{{{\zeta _{m,k}}}}{{\sum\nolimits_{l = 1}^{k - 1} {{\zeta _{m,l}}} }}} \right) > {R_{m,k}}},\\
{}&{\quad \quad \quad \quad m\in [1,M] , k \in \left[ {2,K} \right]}.
\end{array}
\end{equation}
Analogously, by introducing the auxiliary variables $\theta_{m,k}$, (\ref{eqn:good_max_mathrate}) can be further reformulated as (\ref{eqn:gp_max_rate}), shown at the top of the next page.
\begin{figure*}[!t]
\begin{equation}\label{eqn:gp_max_rate}
\begin{array}{*{20}{c}}
{\mathop {\max }\limits_{{{\left\{ {{R_{m,k}}} \right\}}_{m \in [1,M]\hfill\atop
k \in \left[ {1,K} \right]\hfill}},{{\left\{ {{\theta _{m,k}}} \right\}}_{m \in \left[ {1,M} \right]\hfill\atop
k \in \left[ {1,K} \right]\hfill}}} }&{\sum\nolimits_{m \in \left[ {1,M} \right]\hfill\atop
k \in \left[ {1,K} \right]\hfill} {\left( {1 - {\phi _{m,k}}{\theta _{m,k}}^{ - \left( {{N_r} - M + 1} \right)}} \right){R_{m,k}}} }\\
{{\rm{s}}.{\rm{t}}.}&{{R_{m,k}} > 0,m \in [1,M],k \in [1,K],}\\
{}&{{\theta _{m,k}} > 0,m \in [1,M],k \in [1,K],}\\
{}&{{R_{m,i}} \le {{\log }_2}\left( {1 + \frac{{{\zeta _{m,i}}}}{{{\theta _{m,k}} + \sum\nolimits_{l = 1}^{i - 1} {{\zeta _{m,l}}} }}} \right),m \in [1,M],k \in \left[ {1,K} \right],i \in \left[ {k,K} \right].}
\end{array}
\end{equation}
\hrulefill
\end{figure*}
By means of KKT conditions, the optimal solution to (\ref{eqn:gp_max_rate}) is obtained in the following theorem.
\begin{theorem}\label{the:rate_opt}
  The optimal transmission rate is given by
\begin{equation}\label{eqn:R_opt_theta}
R_{m,k}^* = {\log _2}\left( {1 + \frac{{{\zeta _{m,k}}}}{{\theta _{m,k}^* + \sum\nolimits_{l = 1}^{k - 1} {{\zeta _{m,l}}} }}} \right),
\end{equation}
where $\theta _{m,k}^*$ is the zero point of $\varrho(x) - \ell(x) = 0$, and
\begin{equation}\label{eqn:varrho_def}
\varrho(x) = \frac{{{\zeta _{m,k}}x ^{{N_r} - M + 2}\left( {1 - \phi _{m,k}x ^{ - \left( {{N_r} - M + 1} \right)}} \right)}}{{\left( {{N_r} - M + 1} \right)\phi _{m,k}{{\ln }}\left( {1 + \frac{{{\zeta _{m,k}}}}{{x + \sum\nolimits_{l = 1}^{k - 1} {{\zeta _{m,l}}} }}} \right)}},
\end{equation}
\begin{equation}\label{eqn:ell_def}
\ell(x) = \left( {x + \sum\nolimits_{l = 1}^{k - 1} {{\zeta _{m,l}}} } \right)\left( {x + \sum\nolimits_{l = 1}^k {{\zeta _{m,l}}} } \right).
\end{equation}
Besides, the optimal rate satisfies the following relationship
\begin{equation}\label{eqn:normal_rate}
\frac{{\zeta _{m,1}}}{{{2^{{R_{m,1}^*}}} - 1}} < \frac{{\zeta _{m,2}}}{{{2^{{R_{m,2}^*}}} - 1}} <  \cdots  < \frac{{\zeta _{m,K}}}{{{2^{{R_{m,K}^*}}} - 1}},
\end{equation}
which is in perfect accordance with the analysis of the DMT in Section \ref{sec:dmt}.
\end{theorem}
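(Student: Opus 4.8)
The plan is to solve the convex program \eqref{eqn:gp_max_rate} via its KKT conditions. First I would fix a data stream index $m$ and observe that the objective decouples across $k$, while the constraints couple $R_{m,k}$ with the auxiliary variable $\theta_{m,k}$ (and, through $\theta_{m,k}$, with later clusters $i\ge k$). Since $1-\phi_{m,k}\theta_{m,k}^{-(N_r-M+1)}$ is increasing in $\theta_{m,k}$ and the objective is increasing in each $R_{m,k}$, at optimality the inequality constraint $R_{m,i}\le \log_2\!\bigl(1+\zeta_{m,i}/(\theta_{m,k}+\sum_{l=1}^{i-1}\zeta_{m,l})\bigr)$ should be active for the binding index; in fact I expect the constraint with $i=k$ to be the active one (this needs a short monotonicity argument: increasing $\theta_{m,k}$ relaxes nothing else but tightens this constraint, so one balances the marginal gain in $R_{m,k}$ against the marginal loss from the $\phi_{m,k}\theta_{m,k}^{-(N_r-M+1)}$ penalty). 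Setting that constraint to equality immediately yields \eqref{eqn:R_opt_theta}.

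Next I would substitute $R_{m,k}=\log_2\!\bigl(1+\zeta_{m,k}/(\theta_{m,k}+\sum_{l=1}^{k-1}\zeta_{m,l})\bigr)$ back into the objective so that the goodput contribution of $(m,k)$ becomes a function of the single variable $x=\theta_{m,k}$ alone, and then take $\tfrac{d}{dx}$ and set it to zero. Writing $s_{k-1}=\sum_{l=1}^{k-1}\zeta_{m,l}$ and $s_k=\sum_{l=1}^{k}\zeta_{m,l}$, the rate term has derivative $\tfrac{d}{dx}\log_2(1+\zeta_{m,k}/(x+s_{k-1}))= -\tfrac{1}{\ln 2}\,\zeta_{m,k}/\bigl((x+s_{k-1})(x+s_k)\bigr)$, which is exactly where the product $\ell(x)$ in \eqref{eqn:ell_def} comes from; the penalty term $\phi_{m,k}x^{-(N_r-M+1)}R_{m,k}$ contributes both a factor from differentiating $x^{-(N_r-M+1)}$ and a factor from differentiating $R_{m,k}(x)$. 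Collecting terms, clearing the common $\zeta_{m,k}/\ln 2$ factor and rearranging should produce precisely the equation $\varrho(x)-\ell(x)=0$ with $\varrho$ as in \eqref{eqn:varrho_def}; I would also check the sign of the second derivative (or argue via the established convexity of the reformulated problem) to confirm this stationary point is the maximizer, and note existence/uniqueness of the root in the feasible interval $(0,\infty)$ from the limiting behaviour of $\varrho-\ell$.

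Finally, to establish \eqref{eqn:normal_rate} I would rewrite \eqref{eqn:R_opt_theta} as $\zeta_{m,k}/(2^{R_{m,k}^*}-1)=\theta_{m,k}^*+\sum_{l=1}^{k-1}\zeta_{m,l}$, so that the claimed chain of inequalities is equivalent to $\theta_{m,k}^*+s_{k-1}$ being strictly increasing in $k$. Since the $\zeta_{m,l}$ are positive, $s_{k-1}$ is already nondecreasing, so it suffices to show $\theta_{m,k}^*$ itself is nondecreasing in $k$ (strictness then follows from the strictly positive increment $\zeta_{m,k-1}$). That monotonicity I would obtain by a comparison argument on the defining equation $\varrho(x)=\ell(x)$: as $k$ increases, $s_{k-1}$ and $s_k$ increase, which shifts $\ell$ and $\varrho$ in a controlled way, and I would show the crossing point moves rightward — most cleanly by checking that at the previous optimizer $x=\theta_{m,k-1}^*$ the function $\varrho(x)-\ell(x)$ evaluated with the cluster-$k$ shifts has a definite sign, forcing the new root to lie to the right. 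I expect this last monotonicity/comparison step to be the main obstacle, because it requires carefully tracking how both $\varrho$ and $\ell$ depend on the partial sums $s_{k-1},s_k$ and on $\phi_{m,k}$ simultaneously; the substitution and differentiation leading to \eqref{eqn:varrho_def}–\eqref{eqn:ell_def} are comparatively mechanical.
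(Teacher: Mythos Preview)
Your reduction to a single--variable problem and the derivative computation leading to $\varrho(x)=\ell(x)$ are correct; this is exactly what the paper's two stationarity conditions \eqref{eqn:barR_kkt_simrefur}--\eqref{eqn:theta_kkt_bar_simrefur} collapse to once the multiplier $\varpi_{m,k,k}^*$ is eliminated, so your ``substitute and differentiate'' route is an equivalent, slightly more elementary way to reach \eqref{eqn:varrho_def}--\eqref{eqn:ell_def}.

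The genuine difference is in how the decoupling (only the $i=k$ constraint binds) and the ordering \eqref{eqn:normal_rate} are obtained. You propose a guess--and--verify scheme: solve the $K$ decoupled problems, then show the resulting $\theta_{m,k}^*$ are nondecreasing in $k$ via a comparison argument on the roots of $\varrho_k(x)=\ell_k(x)$; feasibility of the ignored constraints and \eqref{eqn:normal_rate} would then both follow. The paper instead writes the full Lagrangian and runs a short induction on the multipliers: from $\partial\bar{\mathcal L}/\partial R_{m,k}=0$ one gets $1-\phi_{m,k}\theta_{m,k}^{*-(N_r-M+1)}=\sum_{t\le k}\varpi_{m,t,k}^*$, and since the $t<k$ summands vanish by the inductive hypothesis $\bar i_{m,t}=t$, positivity of the left side forces $\varpi_{m,k,k}^*>0$, hence $\bar i_{m,k}=k$. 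This simultaneously yields the decoupling \emph{and}, through the strict slackness \eqref{eqn:R_ineq}, the chain \eqref{eqn:normal_rate} essentially for free---no analysis of how the root of $\varrho=\ell$ moves with $k$ is needed.

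Your comparison argument is not wrong in principle, but it is the hard way: both $\varrho$ and $\ell$ shift with $k$ through $s_{k-1},s_k$ \emph{and} $\varrho$ also shifts through $\phi_{m,k}$ (which grows with $k$ via the path loss $\ell(d_k)$), and these effects do not all push the root in the same direction in an obvious way. The paper's KKT induction sidesteps this entirely, so if you want a clean proof of \eqref{eqn:normal_rate} I would recommend adopting that multiplier argument rather than the root--comparison route you sketch.
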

\begin{proof}
  Please see Appendix \ref{app:rate_opt}.
\end{proof}
\subsection{Joint Power Allocation and Rate Selection}\label{sec:opt_joint}
At last, we consider the maximization of goodput by jointly optimize power allocation coefficients and transmission rates. Unfortunately, it is virtually impossible to derive closed-form solution for this joint optimization problem. 
It has been elaborated in Section \ref{sec:optpowe} that the non-linear fractional form of the outage probability in \eqref{eqn:out_tar} hinders further optimization of transmission rate. 
Nevertheless, the power allocation coefficients and transmission rates can be sub-optimally designed by using alternately iterating optimization, which is based on both the results in Sections \ref{sec:optpowe} and \ref{sec:opt_rate} \cite{bezdek2002some}. More specifically, by fixing the transmission rates, the power allocation coefficients are optimized with \eqref{eqn:zeta_fina}. Then by fixing the optimal power allocation, the optimal transmission rate can be obtained with \eqref{eqn:R_opt_theta}. We successively conduct the above two steps, until a negligible difference between the optimal objective values of two consecutive iterations occurs. The pseudocode of the joint power and rate optimization is provided in Algorithm \ref{alg:rs}. The local convergence of the proposed suboptimal algorithm can be elucidated as follows. For ease of exposition, we denote by ${T_g}(\left(R_{m,k}\right)_{M\times K}, \left({{{\zeta_{m,k}}}}\right)_{M\times K})$ the goodput corresponding to the fixed transmissions rates $\left(R_{m,k}\right)_{M\times K}$ and powers $\left({{{\zeta_{m,k}}}}\right)_{M\times K}$. According to Algorithm \ref{alg:rs}, the optimal values of the goodput computed by any two consecutive iterative steps are in an increasing order as
\begin{align}\label{eqn:alg_conv}
  &{T_g}\left(\left(R_{m,k}^{(\imath)}\right)_{M\times K}, \left({{{\zeta_{m,k}^{(\imath)}}}}\right)_{M\times K}\right)\notag\\
   & \le {T_g}\left(\left(R_{m,k}^{(\imath)}\right)_{M\times K}, \left({{{\zeta_{m,k}^{(\imath+1)}}}}\right)_{M\times K}\right)\notag\\
   & \le {T_g}\left(\left(R_{m,k}^{(\imath+1)}\right)_{M\times K}, \left({{{\zeta_{m,k}^{(\imath+1)}}}}\right)_{M\times K}\right),\, \imath\ge 0,
\end{align}
where $\imath$ denotes the iteration number, and the inequalities hold thanks to the global optimality of both the power allocation and rate selection algorithms developed in Section \ref{sec:optpowe} and \ref{sec:opt_rate}, respectively. Moreover, by putting \eqref{eqn:out_opt_lower} into \eqref{eqn:goodput_def}, $T_g$ is upper bounded, because
\begin{align}\label{eqn:tg_upper}
&{T_g} \le \notag\\
&\sum\limits_{m = 1}^M {\sum\limits_{k = 1}^K {\left( {1 - {M^{{N_r} - M + 1}}{\phi _{m,k}}{{\left( {{2^{{R_{m,k}}}} - 1} \right)}^{{N_r} - M + 1}}} \right){R_{m,k}}} },
\end{align}
where the right-hand side of \eqref{eqn:tg_upper} is obviously upper bounded owing to its concavity with respect to $({R_{m,k}})_{M\times K}$. Accordingly, the monotonic and bounded sequence of $\{{T_g}((R_{m,k}^{(\imath)})_{M\times K}, ({{{\zeta_{m,k}^{(\imath)}}}})_{M\times K})\}$ is convergent. This consequently substantiates the convergence of Algorithm \ref{alg:rs}.
\begin{algorithm}[h]
   \caption{Suboptimal Joint Power Allocation and Rate Selection Algorithm}\label{alg:rs}
    \begin{algorithmic}[1]
        \State Generate the initial rates $\left(R_{m,k}^{(0)}\right)_{M\times K}$
        \State Set $\imath = 0$ and ${T_{g}^{*}}^{(\imath)}=-1$
        \Repeat
        \State $\imath = \imath+1$
            \State Given $\left(R_{m,k}^{(\imath-1)}\right)_{M\times K}$, update $\left({{{\zeta_{m,k} ^{(\imath)}}}}\right)_{M\times K}$ with (\ref{eqn:zeta_fina})
            \State Given $\left({{{\zeta_{m,k} ^{(\imath)}}}}\right)_{M\times K}$, update $\left(R_{m,k}^{(\imath)}\right)_{M\times K}$ with (\ref{eqn:R_opt_theta})
        \State ${T_{g}^{*}}^{(\imath)} = {T_g}((R_{m,k}^{(\imath)})_{M\times K}, ({{{\zeta_{m,k}^{(\imath)}}}})_{M\times K})$
        \Until{satisfying termination criterion ${T_{g}^{*}}^{(\imath)} - {T_{g}^{*}}^{(\imath-1)} < \epsilon$}
        \State $\left(\zeta_{m,k}^{*}\right)_{M\times K} = \left(\zeta_{m,k}^{(\imath)}\right)_{M\times K}$
        \State $\left(R_{m,k}^{*}\right)_{M\times K} = \left(R_{m,k}^{(\imath+1)}\right)_{M\times K}$
        \end{algorithmic}
\end{algorithm}


From both (\ref{eqn:normal_power}) and (\ref{eqn:normal_rate}), the proposed sub-optimal algorithm would prefer to allocate more transmission power to clusters under bad channel conditions and support them with higher transmission rate under good channel conditions.

\section{Numerical Results and Discussions}\label{sec:num}
In this section, numerical results are presented for verifications and discussions. For illustration, the parameters of the virtual MIMO-NOMA system are set as follows unless otherwise specified. 
All the data streams are assumed to have the same transmission rate $R_{m,i}$, i.e., $R_{m,i}=R$. According to the prerequisite stated in \eqref{eqn:out_pro_def_rew}, the power allocation coefficients can be devised as \cite{shi2018performance} to implement NOMA such that
\begin{align}\label{eqn:power_allo_defaul_sim}
&{\zeta _{m,k}} = \left( {\frac{1}{M} - \sum\limits_{l = k + 1}^K {{\zeta _{m,l}}} } \right)\left( {1 - \varepsilon {2^{ - R}}} \right), k = 2,3,...,K,\notag\\
&{\zeta _{m,1}} = \frac{1}{M} - \sum\limits_{l = 2}^K {{\zeta _{m,l}}},
\end{align}
where $\varepsilon$ lies within $\left[ {0,1} \right]$. The transmit beamforming matrix $\bf V$ is set with ones on its principal diagonal and zeros elsewhere. The antenna correlation is modelled by using the exponential correlation structure as ${\bf R}_t = \left( [\rho^{|i-j|}]_{1 \le i,j\le N_t}\right)$ and ${\bf R}_r = \left( [\rho^{|i-j|}]_{1 \le i,j\le N_r}\right)$ \cite{li2017performance}. The system parameters $N_t$, $N_r$, $M$, $\varepsilon$, $\alpha$, $\mathcal K$, $\rho$, $\bar \gamma$ and $R$ are assumed to be $3$, $3$, $3$, $0.7$, 3, 1, $0.5$, $70$dB and $2$bps/Hz, respectively. We assume that the base station is placed at the centre of the cell and $K=4$ clusters are served. The cluster centers are located at positions $(10{\rm m},0)$, $(0,20{\rm m})$, $(0,-30{\rm m})$ and $(-40{\rm m},0)$. 
\subsection{Numerical Verifications}
Fig. \ref{fig:out_ver} depicts the outage probability $p_{1,1}^{out}$ against the transmit SNR $\bar \gamma$. As expected from Fig. \ref{fig:out_ver}, the Monte Carlo simulation results confirm the validity of the analytical results obtained from \eqref{eqn:out_1_xdeffina}, and there is a perfect match between the analytical and asymptotic results in high SNR regime. Besides, it is also shown in Fig. \ref{fig:out_ver} that the decreasing slope of the outage curve increases with the number of data streams $M$, but is independent of the path loss exponent $\alpha$. This is consistent with the analysis of the diversity order, i.e., $\mathpzc d=N_r-M+1$. Whereas, since $\alpha$ characterizes the severity of path loss, the decrease of $\alpha$ is favorable for decreasing the outage probability. Moreover, the outage probability increases with $M$ given a fixed value of $\bar \gamma$, because the total power constraint results in the reduction of the transmit power allocated to each data stream as $M$ increases.
\begin{figure}
  \centering
  \includegraphics[width=3in]{./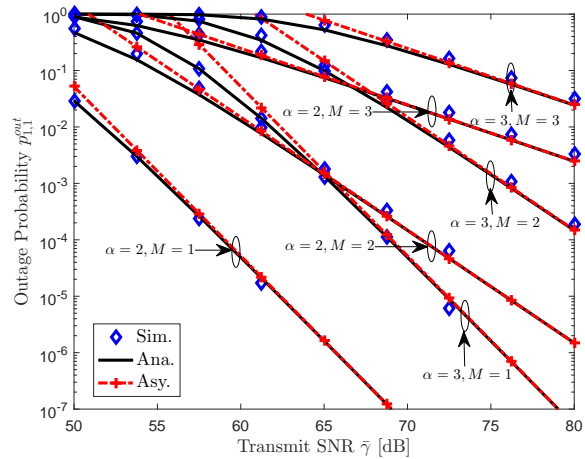}
  \caption{The outage probability $p_{1,1}^{out}$ versus the transmit SNR $\bar \gamma$.}\label{fig:out_ver}
\end{figure}

Fig. \ref{fig:gp_ver} plots the goodput $T_g$ versus the transmit SNR $\bar \gamma$. It is observed that the theoretical results coincide well with the simulation ones. It is clear from Fig. \ref{fig:gp_ver} that less severe fading, i.e., small $\alpha$, improves the goodput. However, unlike Fig. \ref{fig:out_ver}, the virtual MIMO-NOMA system may benefit from the increase of $M$ in terms of the goodput, albeit degrading the outage performance. Furthermore, it can be seen from Fig. \ref{fig:gp_ver} that the goodput is upper bounded as SNR increases because of the prescribed target transmission rates $R_{m,k}$. Hence, the observations imply the necessity of optimally designing the transmission rate to maximize the goodput, which will be illustrated in Section \ref{sec:gp_num}.
\begin{figure}
  \centering
  \includegraphics[width=3in]{./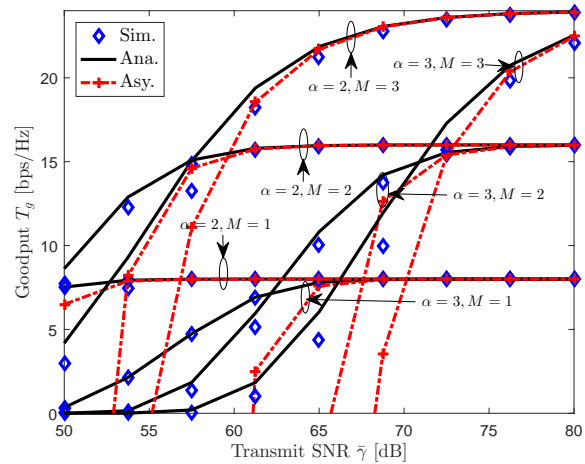}
  \caption{The goodput $T_g$ versus the transmit SNR $\bar \gamma$.}\label{fig:gp_ver}
\end{figure}

\subsection{Impact of Spatial Correlation}
The effects of the spatial correlation on the outage probability and goodput are investigated in Figs. \ref{fig:corr_out} and \ref{fig:corr_gp}, respectively. From both figures, the theoretical and simulation results are in perfect agreement under low spatial correlation due to the assumption made in \eqref{eqn:cdf_F_mean_re1}. It can also be observed from Figs. \ref{fig:corr_out} and \ref{fig:corr_gp} that the spatial correlation adversely impacts the outage probability and goodput. In particular, the spatial correlation would severely impair the performance of the virtual MIMO-NOMA system if $\rho>0.5$, which conforms to the conclusion drawn in \cite[Fig. 3.5]{goldsmith2005wireless}.
\begin{figure}
  \centering
  \includegraphics[width=3in]{./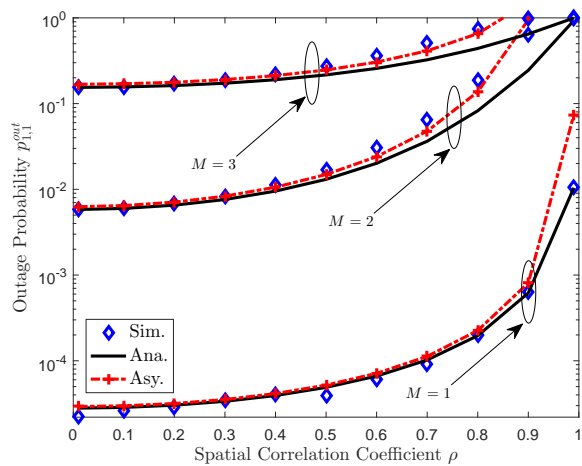}
  \caption{The impact of the spatial correlation on the outage probability $p_{1,1}^{out}$.}\label{fig:corr_out}
\end{figure}
\begin{figure}
  \centering
  \includegraphics[width=3in]{./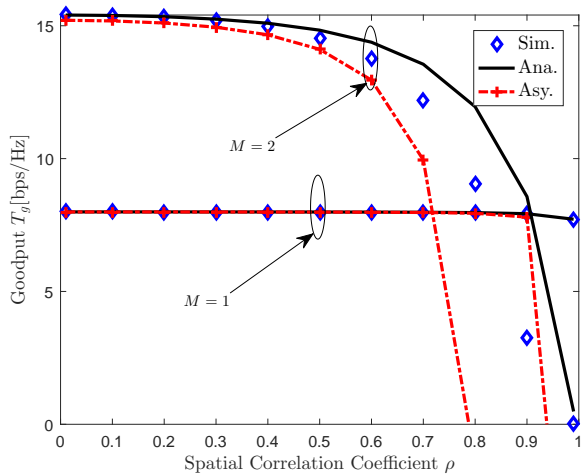}
  \caption{The impact of the spatial correlation on the goodput $T_g$.}\label{fig:corr_gp}
\end{figure}

\subsection{Comparison with Five Baseline Schemes}\label{sec:comp_num}
To illustrate the advantages of the proposed scheme, we choose five reference schemes for benchmarking, including the NOMA scheme (labelled as ``NOMA'') \cite{saito2013system}, the conventional OMA scheme (labelled as ``OMA'') that serves as a baseline in \cite{ding2016application}, the virtual MIMO scheme (labelled as ``Virtual MIMO'') \cite{dohler2002space}, the hybrid MA scheme (labelled as ``Hybrid MA'') in \cite{liu2016cooperative}, the virtual MIMO-hybrid MA scheme (labelled as ``Virtual MIMO-Hybrid MA''). Herein, the virtual MIMO-hybrid MA scheme is a modified version of the proposed scheme, which groups virtual MIMO entities into pairs to conduct NOMA and then accommodates different pairs with OMA. More specifically, we assume for the NOMA, OMA and hybrid MA schemes that each IoT device requests one data stream from the BS. With regard to both the proposed and virtual MIMO-based schemes, this amounts to the case of three data streams requested by each cluster, i.e., $M=3$. For fair comparison, time-division multiple access (TDMA) is considered for the OMA, virtual MIMO, hybrid MA and virtual MIMO-hybrid MA schemes to serve all the IoT devices/clusters/pairs with equal-length time slots. In addition, for both the NOMA and hybrid MA schemes, each IoT device can be treated as a special cluster with only a single IoT device. It is worthwhile to note that our analytical results also apply to the virtual MIMO-hybrid MA scheme as long as user pairing is determined. This is because each NOMA pair can be tackled independently. For illustration, an extreme pairing policy in \cite{ding2016application} is considered for both hybrid MA-based schemes, where the device/virtual MIMO entity having the best channel condition is paired with the one having the worst channel condition, and the rest of devices/entities are paired in the same manner.

Fig. \ref{fig:comp} compares the goodputs of the proposed scheme and five baseline schemes. It is obviously observed that the proposed scheme performs much better than the five baseline schemes in high SNR regime. However, as opposed to both the OMA and virtual MIMO schemes, the superior performance of the proposed scheme cannot be perfectly exhibited at low SNR. This is due to the fact that the significant advantages of NOMA are based on the premise of high SNR\cite{ding2015impact}. Whereas, this result no longer holds at low SNR. It is worth noting that the condition of high SNR is commonly satisfied to achieve a very low outage (e.g., $10^{-5}$) in 5G systems \cite{bennis2018ultrareliable}. For instance, by considering a carrier bandwidth of 100MHz, thermal noise level of -174dBm/Hz and maximum output power budget of 24dBm, the transmit SNR is calculated as $118$dB \cite{dahlman20185g}. It is also observed from Fig. \ref{fig:comp} that the proposed scheme generally outperforms the NOMA scheme in terms of the goodput. For example, by aiming at $T_g=6$bps/Hz, the required transmit SNRs for the proposed scheme and NOMA scheme must be set as 65dB and 77dB, respectively. Henceforth, the introduction of the virtual MIMO yields a reduction of the transmit SNR by 12dB, which remarkably conserves energy with guaranteed QoS for IoT networks. 
In addition, it is shown that the proposed scheme is a little bit inferior to the NOMA scheme under low SNR because of the negative effect of antenna correlations. The similar observations can be found coincidentally between the OMA and virtual MIMO schemes. Besides, it can be seen that both hybrid MA-based schemes achieve a higher goodput than the proposed scheme at low SNR. Nevertheless, the maximum achievable goodput of the hybrid MA scheme in \cite{liu2016cooperative} is tremendously restricted. In contrast, the virtual MIMO-hybrid MA scheme is able to strike a balance between boosting spectral efficiency and reducing system overhead for resource-limited IoT networks.


\begin{figure}
  \centering
  \includegraphics[width=3in]{./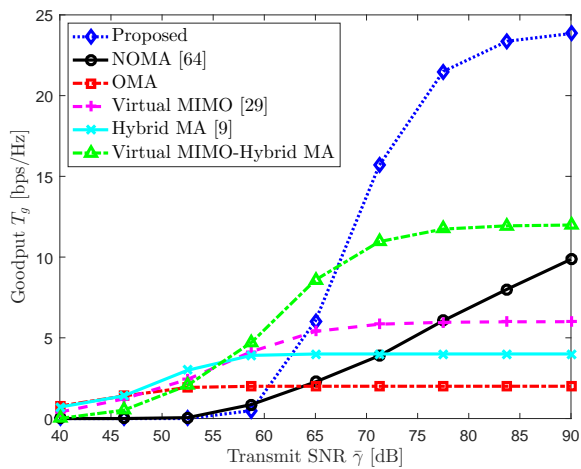}
  \caption{The comparison of the goodput of the proposed scheme and five baseline schemes with $N_t=N_r=M=3$, $\varepsilon=0.7$, $\rho=0.5$ and $R=2$bps/Hz.}\label{fig:comp}
\end{figure}

\subsection{Goodput Maximization}\label{sec:gp_num}
In this section, Figs. \ref{fig:opt_gp_power} and \ref{fig:opt_gp_rate} are plotted to demonstrate the necessity of the goodput maximization through power allocation and/or rate selection. For illustrative purposes, we assume that the base station only serves two clusters with coordinates $(10{\rm m},0)$ and $(0,20{\rm m})$, and a single data stream is delivered to each cluster, i.e., $M=1$. Accordingly, if the values of the transmission rates $R_{1,1}$ and $R_{1,2}$ are fixed, the power allocation coefficients $\zeta_{1,1}$ and $\zeta_{1,2}$ can be optimally chosen as studied in Section \ref{sec:optpowe}. With the optimal power allocation, a three-dimensional figure is displayed in Fig. \ref{fig:opt_gp_power} to describe the relationship between the maximum goodput and the transmission rates. It is clearly observed that the transmission rates drastically influences the maximum achievable goodput. Moreover, in order to reach the maximum goodput, Fig. \ref{fig:opt_gp_power} exhibits that the near cluster is inclined to have a higher transmission rate than the distant cluster. It therefore justifies the result in Section \ref{sec:opt_joint} that the proposed joint optimization tends to provide the cluster under better channel condition with a higher data rate. Whereas, the difference between the optimal transmission rates diminishes with the increase of the transmit SNR.
\begin{figure}
  \centering
  \includegraphics[width=3in]{./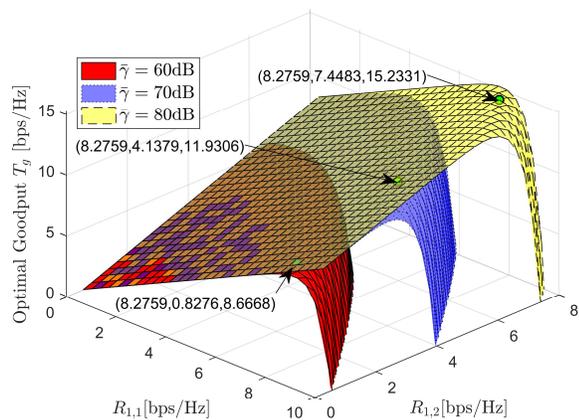}
  \caption{The optimal goodput versus the transmission rates.}\label{fig:opt_gp_power}
\end{figure}

With the optimal rate selection, Fig. \ref{fig:opt_gp_rate} plots the maximum goodput versus $\zeta_{1,1}$, where $\zeta_{1,2}=1-\zeta_{1,1}$. It can seen from Fig. \ref{fig:opt_gp_rate} that the power allocation coefficients should be properly selected to attain the maximum achievable goodput. In particular, if $\zeta_{1,1}=0$ and $\zeta_{1,2}=1$, NOMA scheme degenerates to its counterpart, i.e., OMA. Apparently, neither $\zeta_{1,1}=0$ nor $\zeta_{1,2}=1$ will yield the maximum goodput. Conversely, the optimal power allocation coefficient often lies in between $[0,1]$. In addition, it is shown that more transmission power is allocated to the farther user. It further substantiates the result in Section \ref{sec:opt_joint} that the joint power and rate optimization prefers to assign more transmission power to the cluster under worse channel condition. The difference between the optimal values of the power allocation coefficients becomes conspicuous in high SNR regime. This is because more transmission power is demanded to support the increasing data rate for the cluster under poor channel condition at high SNR, as elucidated in Fig. \ref{fig:opt_gp_power}.
\begin{figure}
  \centering
  \includegraphics[width=3in]{./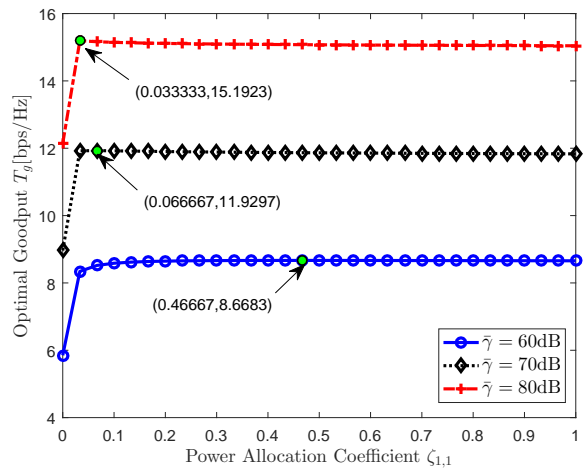}
  \caption{The optimal goodput versus the power allocation coefficient.}\label{fig:opt_gp_rate}
\end{figure}

Fig. \ref{fig:opt_gp_joint} testifies the convergence behavior of the suboptimal joint power allocation and rate selection algorithm, namely Algorithm \ref{alg:rs}. As seen from Fig. \ref{fig:opt_gp_joint}, the monotonic convergence of Algorithm \ref{alg:rs} validates the correctness of the convergence analysis in Section \ref{sec:opt_joint}. Moreover, it is shown that the goodput performance almost converges to the upper bound within 15 iterations, which demonstrates the effectiveness of Algorithm \ref{alg:rs}. In addition, it is as expected that the optimal goodput grows with the transmit SNR $\bar \gamma$.

\begin{figure}
  \centering
  \includegraphics[width=3in]{./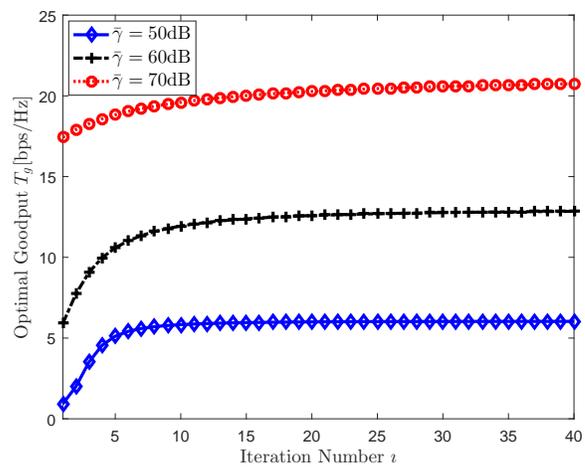}
  \caption{The convergence behavior of Algorithm \ref{alg:rs} with the maximum allowable stopping criteria $\epsilon=10^{-5}$.}\label{fig:opt_gp_joint}
\end{figure}

\section{Conclusion}

This paper has proposed a downlink virtual MIMO-NOMA scheme in IoT networks. 
We have assumed only statistical CSI available at the base station to avoid frequent instantaneous SINR reporting and signaling overhead so as to save the energy and bandwidth for IoT networks. The outage probability and goodput of the virtual MIMO-NOMA system have been thoroughly examined by using Kronecker correlation model. The asymptotic results not only have provided physical insights but also have paved a way for the goodput maximization. In particular, the asymptotic outage expressions quantify the impacts of various system parameters and make it possible to examine the DMT of MIMO-NOMA channels. Additionally, the power allocation coefficients and/or transmission rates have been properly chosen to reach the maximum goodput. By virtue of KKT conditions, the optimal solutions to the goodput maximization problems have been derived in closed-form. 
Besides, it has been found that the proposed algorithms tend to allocate more power to clusters under unfavorable channel conditions and support clusters with higher transmission rate under benign channel conditions.

\appendices
\section{Proof of \eqref{eqn:out_1_xdeffina}}\label{app:F_xk}
In what follows we specialize to the derivation of $p_{m=1,k}^{out}$ first. The outage probability for arbitrary data stream $m$ can be got in an analogous way. By partitioning ${\bf{Z}}$ as ${\bf{Z}} = [{{{\bf{z}}_1}},{{\bf{\tilde Z}}} ]$ along with matrix inversion in block form in \cite[Exercise 5.16]{abadir2005matrix}, $p_{m{{ = }}1,k}^{out}$ can be further expressed as 
\begin{align}\label{eqn:pout_1_part}
&p_{m=1,k}^{out} = \notag\\
&\quad \Pr \left[ {{{\bf{z}}_1}^{\rm{H}}\left( {{\bf{I}} - {\bf{\tilde Z}}{{\left( {{{{\bf{\tilde Z}}}^{\rm{H}}}{\bf{\tilde Z}}} \right)}^{ - 1}}{{{\bf{\tilde Z}}}^{\rm{H}}}} \right){{\bf{z}}_1} < \frac{1}{{{{\bar \gamma }}{\theta _{1,k}}\ell ({d_k})}}} \right].
\end{align}
In analogous to \cite[eq. 15]{kiessling2003analytical}, $p_{m{{ = }}1,k}^{out}$ can be derived as (\ref{eqn:out_1_cond}), shown at the top of the next page,
\begin{figure*}[!t]
\begin{equation}\label{eqn:out_1_cond}
p_{m{{ = }}1,k}^{out} = \Pr \left[ {{{\bf{u}}^{\rm{H}}}{{\bf{R}}_r}^{1/2}\underbrace {\left( {{\bf{I}} - {{\bf{R}}_r}^{1/2}{{{\bf{\tilde Z}}}_w}{{\left( {{{{\bf{\tilde Z}}}_w}^{\rm{H}}{{\bf{R}}_r}{{{\bf{\tilde Z}}}_w}} \right)}^{ - 1}}{{{\bf{\tilde Z}}}_w}^{\rm{H}}{{\bf{R}}_r}^{1/2}} \right)}_{\bf{Q}}{{\bf{R}}_r}^{1/2}{\bf{u}} < \frac{{{{\left[ {{{\bf{R}}_{t'}}^{ - 1}} \right]}_{11}}}}{{{{\bar \gamma }}{\theta _{1,k}}\ell ({d_k})}}} \right],
\end{equation}
\end{figure*}
where ${{{\bf{\tilde Z}}}_w}\in {\mathbb C}^{N_r\times (M-1)}$ and ${\bf u}\in {\mathbb C}^{N_r\times 1}$ have i.i.d. zero-mean complex Gaussian elements with unity variance. By defining ${X_k} = {{\bf{u}}^{\rm{H}}}{{\bf{R}}_r}^{1/2}{\bf{Q}}{{\bf{R}}_r}^{1/2}{\bf{u}}$, it follows that
\begin{equation}\label{eqn:p_fxk}
p_{m{=}1,k}^{out} = F_{{{X_k}}} \left( {{{{\left[ {{{\bf{R}}_{t'}}^{ - 1}} \right]}_{11}}}}/{({{{\bar \gamma }}{\theta _{1,k}}\ell ({d_k})})} \right),
\end{equation}
where $F_{{{X_k}}} \left( x \right)$ is the cumulative distribution function (CDF) of $X_k$. Conditioned on ${{\bf{R}}_r}^{1/2}{\bf{Q}}{{\bf{R}}_r}^{1/2}$, we arrive at
\begin{align}\label{eqn:out_1_xdef}
&F_{{{X_k}}} \left( x \right) 
= {{\mathbb{E}}_{{{{\bf{\tilde Z}}}_w}}}\left\{ {\Pr \left[ {\left. {{X_k} < x} \right|{{\bf{R}}_r}^{1/2}{\bf{Q}}{{\bf{R}}_r}^{1/2}} \right]} \right\} \notag\\
&= {{\mathbb{E}}_{{{{\bf{\tilde Z}}}_w}}}\left\{ {{F_{\left. {{X_k}} \right|{{\bf{R}}_r}^{1/2}{\bf{Q}}{{\bf{R}}_r}^{1/2}}}\left( {\left. x \right|{{\bf{R}}_r}^{1/2}{\bf{Q}}{{\bf{R}}_r}^{1/2}} \right)} \right\},
\end{align}
where ${F_{\left. {{X_k}} \right|{{\bf{R}}_r}^{1/2}{\bf{Q}}{{\bf{R}}_r}^{1/2}}}\left( x \right)$ denotes the conditional CDF of $X_k$ given ${{\bf{R}}_r}^{1/2}{\bf{Q}}{{\bf{R}}_r}^{1/2}$. To proceed, we recourse to the method of MGF to derive ${F_{\left. {{X_k}} \right|{{\bf{R}}_r}^{1/2}{\bf{Q}}{{\bf{R}}_r}^{1/2}}}\left( x \right)$.

Given ${{{\bf{R}}_r}^{1/2}{\bf{Q}}{{\bf{R}}_r}^{1/2}}$, $X_k$ is a Hermitian quadratic form in complex normal variables, and the MGF of $X_k$ conditioned on ${{{\bf{R}}_r}^{1/2}{\bf{Q}}{{\bf{R}}_r}^{1/2}}$ can be obtained by using Turin's result from \cite{turin1960characteristic} as
\begin{align}\label{eqn:cond_mgf}
{\mathcal M_{\left. {{X_k}} \right|{{\bf{R}}_r}^{1/2}{\bf{Q}}{{\bf{R}}_r}^{1/2}}}\left( s \right) = {{\det \left( {{\bf{I}} - s{{\bf{R}}_r}^{1/2}{\bf{Q}}{{\bf{R}}_r}^{1/2}} \right)}}^{-1}.
\end{align}
Then, the CDF ${F_{\left. {{X_k}} \right|{{\bf{R}}_r}^{1/2}{\bf{Q}}{{\bf{R}}_r}^{1/2}}}\left( y \right)$ can be derived by invoking inverse Laplace transform as\cite[eq.3.6.8]{debnath2010integral}
\begin{multline}\label{eqn:x_k_cdf_cond_inverse}
{F_{\left. {{X_k}} \right|{{\bf{R}}_r}^{1/2}{\bf{Q}}{{\bf{R}}_r}^{1/2}}}\left( x \right) \\
= {{\cal L}^{ - 1}}\left\{ \frac{{{\mathcal M_{\left. {{X_k}} \right|{{\bf{R}}_r}^{1/2}{\bf{Q}}{{\bf{R}}_r}^{1/2}}}}\left( { - s}\right)}{s}  \right\}\left( x \right).
\end{multline}
By substituting (\ref{eqn:cond_mgf}) into (\ref{eqn:x_k_cdf_cond_inverse}), it follows that
\begin{align}\label{eqn:cond_pdf_x_k_inv}
&{F_{\left. {{X_k}} \right|{{\bf{R}}_r}^{1/2}{\bf{Q}}{{\bf{R}}_r}^{1/2}}}\left( x \right) \notag\\
&\quad= \frac{1}{{2\pi {\rm i}}}\int\nolimits_{a - {\rm i}\infty }^{a + {\rm i}\infty } {\frac{{{e^{sx}}}}{{s\det \left( {{\bf{I}} + s{{\bf{R}}_r}^{1/2}{\bf{Q}}{{\bf{R}}_r}^{1/2}} \right)}}ds} \notag\\
 &\quad= \frac{1}{{2\pi {\rm i}}}\int\nolimits_{a - {\rm i}\infty }^{a + {\rm i}\infty } {\frac{{{e^{sx}}}}{s}{}_1{F_0}\left( {1; - s{{\bf{R}}_r}^{1/2}{\bf{Q}}{{\bf{R}}_r}^{1/2}} \right)ds},
\end{align}
where ${\rm i}=\sqrt{-1}$, the second step holds by using the identity $\det {\left( {{\bf{I}} - {\bf{X}}} \right)^{ - a}} = {}_1{F_0}\left( {a;{\bf{X}}} \right)$ from \cite{ratnarajah2003complex}, and ${}_p{F_q}\left( {{\bf{X}}} \right)$ denotes the hypergeometric function of one Hermitian matrix \cite{muirhead2009aspects}. Hence, the CDF of $X_k$, i.e., $F_{{{X_k}}} \left( x \right)$, can be obtained by taking expectation of (\ref{eqn:cond_pdf_x_k_inv}) over the distribution of ${{{\bf{\tilde Z}}}_w}$ as
\begin{align}\label{eqn:cdf_F_mean}
&F_{{{X_k}}} \left( x \right)= {{\mathbb{E}}_{{{{\bf{\tilde Z}}}_w}}}\left\{ {{F_{\left. {{X_k}} \right|{{\bf{R}}_r}^{1/2}{\bf{Q}}{{\bf{R}}_r}^{1/2}}}\left( {\left. x \right|{{\bf{R}}_r}^{1/2}{\bf{Q}}{{\bf{R}}_r}^{1/2}} \right)} \right\}\notag\\
&= \frac{1}{{2\pi {\rm i}}}\int\nolimits_{a - {\rm i}\infty }^{a + {\rm i}\infty } {\frac{{{e^{sx}}}}{s}{{\rm{E}}_{{{{\bf{\tilde Z}}}_w}}}\left\{ {{}_1{F_0}\left( {1; - s{{\bf{R}}_r}^{1/2}{\bf{Q}}{{\bf{R}}_r}^{1/2}} \right)} \right\}ds}\notag\\
&= \frac{1}{{2\pi {\rm i}}}\int\nolimits_{a - {\rm i}\infty }^{a + {\rm i}\infty } {\frac{{{e^{sx}}}}{s}{{\rm{E}}_{{{{\bf{\tilde Z}}}_w}}}\left\{ {{}_1{F_0}\left( {1; - s{{\bf{R}}_r}{\bf{Q}}} \right)} \right\}ds}.
\end{align}
where the second step holds by exchanging the operations of the integration and the expectation, and the last step holds by using ${}_p{F_q}\left( {{\bf{AB}}} \right)={}_p{F_q}\left( {{\bf{A}}^{1/2}{\bf{B}}{\bf{A}}^{1/2}} \right)$ for any positive definite matrix $\bf A$ and symmetric matrix $\bf B$ \cite[p237, eq. 17]{muirhead2009aspects}. Noticing that $\bf Q$ is an idempotent matrix  with $N_r-M+1$ eigenvalues of value 1 and all other eigenvalues of value 0. More specifically, by defining ${\bf Y} = {{\bf{R}}_r}^{1/2}{{{\bf{\tilde Z}}}_w}$, $\bf Y$ can be written as ${\bf Y}= {\bf{U\Sigma }}{{\bf{\Lambda }}^{\rm{H}}}$ based on singular value decomposition, where ${\bf{\Sigma }}= {\rm diag}\left\{\sqrt{{{\delta _i}}}\right\}\in {\mathbb R}_{+}^{N_r\times (M-1)}$, ${\bf U} \in {{\cal U}\left( {{N_r}} \right)}$, ${\bf \Lambda} \in {{\cal U}\left( {{M-1}} \right)}$ and ${{\cal U}\left( {{N}} \right)}$ stands for the group of unitary $N\times N$ matrices. Accordingly, $\bf Q$ can be rewritten as
\begin{equation}\label{eqn:Q_rew}
{\bf{Q}} = {\bf{U}}\underbrace {\left( {{\bf{I}} - {\bf{\Sigma }}{{\left( {{{\bf{\Sigma }}^{\rm{H}}}{\bf{\Sigma }}} \right)}^{ - 1}}{{\bf{\Sigma }}^{\rm{H}}}} \right)}_{\bf{V}}{{\bf{U}}^{\rm{H}}},
\end{equation}
where ${\bf{V}}$ is a diagonal matrix as
\begin{equation}\label{eqn:V_def}
{\bf{V}} = {\rm{diag}}\left\{ {\underbrace {1, \cdots ,1}_{{N_r} - M + 1},\underbrace {0, \cdots ,0}_{M - 1}} \right\}.
\end{equation}
Thus, (\ref{eqn:cdf_F_mean}) can be further obtained as
\begin{align}\label{eqn:cdf_F_mean_re}
&F_{{{X_k}}} \left( x \right)= \notag\\
&\quad\frac{1}{{2\pi {\rm i}}}\int\nolimits_{a - {\rm i}\infty }^{a + {\rm i}\infty } {\frac{{{e^{sx}}}}{s}{{\rm{E}}_{\bf{U}}}\left\{ {{}_1{F_0}\left( {1; - s{{\bf{R}}_r}{{\bf{U}}}{\bf{V}}{\bf U}^{\rm{H}}} \right)} \right\}ds}.
\end{align}
To get \eqref{eqn:cdf_F_mean_re}, we have to obtain the distribution of random unitary matrix $\bf U$. With the definition of $\bf Y$ and the singular value decomposition, we arrive at \eqref{eqn:dY_ext}, as shown at the top of the following page,
\begin{figure*}[!t]
\begin{align}\label{eqn:dY_ext}
\left( {d{\bf{Y}}} \right) &= \det {\left( {{{\bf{R}}_r}} \right)^{M - 1}}\left( {d{{{\bf{\tilde Z}}}_w}} \right)\notag\\
&= \left( {M - 1} \right)!\prod\limits_{j = 1}^{M - 1} {\left( {{N_r} - j} \right)!\left( {M - 1 - j} \right)!} \prod\limits_{1 \le i < j \le M - 1} {{{\left( {{\delta _j} - {\delta _i}} \right)}^2}}\prod\limits_{i = 1}^{M - 1} {{\delta _i}^{{N_r} - M + 1}} d{\delta _1} \cdots d{\delta _{M - 1}}\left( {d{\bf{U}}} \right)\left( {d{\bf{\Lambda }}} \right),
\end{align}
\hrulefill
\end{figure*}
where the symbol $(d{\bf X})$ stands for the exterior product of the elements of $d{\bf X}$\cite{ghaderipoor2012application,muirhead2009aspects,mehta2004random}. In particular, $\left( {d{\bf{U}}} \right)$ and $\left( {d{\bf{\Lambda }}} \right)$ denote the standard Haar measures of ${{\cal U}\left( {{N_r}} \right)}$ and ${{\cal U}\left( {{M-1}} \right)}$, respectively. Hence, by combining \eqref{eqn:dY_ext} with the probability density function (PDF) of $\tilde {\bf Z}_w$, i.e., ${f_{{{{\bf{\tilde Z}}}_w}}}( {{{{\bf{\tilde Z}}}_w}} ) = {{{\pi ^{-{N_r}\left( {M - 1} \right)}}}}{\rm{etr}}( { - {{{\bf{\tilde Z}}}_w}{{{\bf{\tilde Z}}}_w}^{\rm{H}}} )$, the PDF of $\bf U$ is derived as
\begin{multline}\label{eqn:U_dist}
{f_{\bf{U}}}\left( {\bf{U}} \right) = \frac{{\left( {M - 1} \right)!\prod\nolimits_{j = 1}^{M - 1} {\left( {{N_r} - j} \right)!\left( {M - 1 - j} \right)!} }}{{\det {{\left( {{{\bf{R}}_r}} \right)}^{M - 1}}{\pi ^{{N_r}\left( {M - 1} \right)}}}}\times\\
\int\nolimits_{{\bf \Delta}  \in {\mathbb R}_ + ^{{N_r} \times \left( {M - 1} \right)}} {{\rm{etr}}\left( { - {{\bf{R}}_r}^{ - 1}{\bf{U}}{\bf \Delta} {{\bf{U}}^{\rm{H}}}} \right)\times}\\
\prod\limits_{1 \le i < j \le M - 1} {{{\left( {{\delta _j} - {\delta _i}} \right)}^2}} \prod\limits_{j = 1}^{M - 1} {{\delta _i}^{{N_r} - M + 1}} d{\delta _1} \cdots d{\delta _{M - 1}},
\end{multline}
 where ${\bf{\Delta }}= {\rm diag}\left\{{{{\delta _i}}}\right\}\in {\mathbb R}_{+}^{N_r\times (M-1)}$ and ${\rm{etr}}({\bf X})={\rm{exp}}({\rm tr}\{\bf X\})$. Unfortunately, the complex form of the distribution of $\bf U$ given by \eqref{eqn:U_dist} will render the derivation of \eqref{eqn:cdf_F_mean_re} cumbersome. Motivated by the finding that ${f_{\bf{U}}}\left( {\bf{U}} \right)=1$ if the receive antennas undergo uncorrelated fadings, i.e., ${ {{\bf{R}}_r}}={\bf I}_{N_r}$, \eqref{eqn:cdf_F_mean_re} can be approximated as
\begin{align}\label{eqn:cdf_F_mean_re1}
F_{{{X_k}}} \left( x \right)&
\approx \frac{1}{{2\pi {\rm i}}}\int\nolimits_{a - {\rm i}\infty }^{a + {\rm i}\infty }\frac{{{e^{sx}}}}{s}  \notag\\
&\times{\int\nolimits_{{\bf{U}} \in {\cal U}\left( {{N_r}} \right)} }{{}_1{F_0}\left( {1; - s{{\bf{R}}_r}{{\bf{U}}}{\bf{V}}{\bf U}^{\rm{H}}} \right)(d{\bf{U}})} ds\notag\\
&\triangleq \tilde F \left( x \right).
\end{align}
Afterwards, by identifying the integral in \eqref{eqn:cdf_F_mean_re1} with \cite[Theorem 7.3.3]{muirhead2009aspects}, $\tilde F \left( x \right)$ can be further rewritten as
\begin{align}\label{eqn:cdf_F_mean_re2}
\tilde F \left( x \right)&=\frac{1}{{2\pi {\rm i}}}\int\nolimits_{a - {\rm i}\infty }^{a + {\rm i}\infty } {\frac{{{e^{sx}}}}{s}} \notag\\
&\times\int\nolimits_{{\bf{U}} \in {\cal U}\left( {{N_r}} \right)} {{}_1{F_0}\left( {1; - s{{\bf{R}}_r}{{\bf{U}}}{\bf{V}}{\bf U}^{\rm{H}}} \right)(d{\bf{U}})} ds\notag\\
&= \frac{1}{{2\pi {\rm i}}}\int\nolimits_{a - {\rm i}\infty }^{a + {\rm i}\infty } {\frac{{{e^{sx}}}}{s}{}_1{F_0}^{\left( {{N_r}} \right)}\left( {1; - s{{\bf{R}}_r},{\bf{V}}} \right)ds},
\end{align}
where ${}_p{F_q}^{(m)}\left( {{\bf{X}}},{\bf Y} \right)$ denotes the hypergeometric function of two $m\times m$ Hermitian matrices \cite{muirhead2009aspects}. To derive (\ref{eqn:cdf_F_mean_re2}) in closed-form, ${}_1{F_0}^{\left( {{N_r}} \right)}\left( {1; - s{{\bf{R}}_r},{\bf{V}}} \right)$ 
is simplified as (\ref{eqn:F_two_matrix}), shown at the top of the next page,
\begin{figure*}[!t]
\begin{align}\label{eqn:F_two_matrix}
{}_1{F_0}^{\left( {{N_r}} \right)}\left( {1; - s{{\bf{R}}_r},{\bf{V}}} \right) &=
\frac{\det\left( {{{\left\{ {{{ {{\lambda _i}} }^{{N_r} - 1}}{{}_2{F_1}}\left( {1,{N_r} - M + 1;{N_r}; - s{\lambda _i}} \right)} \right\}}_{\scriptstyle1 \le i \le {N_r}\hfill\atop
\scriptstyle j = 1\hfill}},{{\left\{ {{\lambda _i}^{{N_r} - j}} \right\}}_{\scriptstyle1 \le i \le {N_r}\hfill\atop
\scriptstyle2 \le j \le {N_r}\hfill}}} \right)}{{\det \left( {{{\left\{ {{\lambda _i}^{{N_r} - j}} \right\}}_{1 \le i,j \le {N_r}}}} \right)}},
\end{align}
\end{figure*}
where ${}_p{F_q}(a_1,\cdots,a_p;b_1,\cdots,b_q;x)$ denotes the hypergeometric function, ${\lambda _1}, \cdots ,{\lambda _{N_r}}$ denote the $N_r$ eigenvalues of ${{\bf{R}}_r}$, and the proof of (\ref{eqn:F_two_matrix}) is relegated to Appendix \ref{app:f_two}.

Substituting (\ref{eqn:F_two_matrix}) into (\ref{eqn:cdf_F_mean_re2}) leads to (\ref{eqn:cdf_F_mean_resub}), shown at the top of the next page.
\begin{figure*}[!t]
\begin{align}\label{eqn:cdf_F_mean_resub}
\tilde F \left( x \right)& =\frac{\det \left( {{{\left\{ {{{{{\lambda _i}}}^{{N_r} - 1}}\frac{1}{{2\pi {\rm{i}}}}\int\nolimits_{a - {\rm{i}}\infty }^{a + {\rm{i}}\infty } {\frac{{{e^{sx}}}}{s}{}_2{F_1}\left( {1,{N_r} - M + 1;{N_r}; - s{\lambda _i}} \right)ds} } \right\}}_{\scriptstyle1 \le i \le {N_r}\hfill\atop
\scriptstyle j = 1\hfill}},{{\left\{ {{\lambda _i}^{{N_r} - j}} \right\}}_{\scriptstyle1 \le i \le {N_r}\hfill\atop
\scriptstyle2 \le j \le {N_r}\hfill}}} \right)}{{\det \left( {\left\{ {{\lambda _i}^{{N_r} - j}} \right\}_{1 \le i,j \le {N_r}}} \right)}}.
\end{align}
\hrulefill
\end{figure*}
By using \cite[eq.9.34.8]{gradshteyn1965table}, the Gauss hypergeometric function can be expressed in terms of Meijer G-function as
\begin{multline}\label{eqn:hyper_meijerg}
{{}_2{F_1}\left( {1,{N_r} - M + 1;{N_r}; - s{\lambda _1}} \right)}=\\
\frac{{\Gamma \left( {{N_r}} \right)}}{{\Gamma \left( {{N_r} - M + 1} \right)}}G_{2,2}^{1,2}\left( {\left. {\begin{array}{*{20}{c}}
{0,M - {N_r}}\\
{0,1 - {N_r}}
\end{array}} \right|s{\lambda _1}} \right).
\end{multline}
where $\Gamma(x)$ denotes Gamma function. By inverting Laplace transform \cite[eq.07.34.22.0003.01]{wolframe2010math}, we reach
\begin{align}\label{eqn:inverse_hyper}
&\frac{1}{{2\pi {\rm i}}}\int\nolimits_{a - {\rm i}\infty }^{a + {\rm i}\infty } {\frac{{{e^{sx}}}}{s}{}_2{F_1}\left( {1,{N_r} - M + 1;{N_r}; - s{\lambda _i}} \right)ds} \notag\\
&= \frac{{\Gamma \left( {{N_r}} \right)}}{{\Gamma \left( {{N_r} - M + 1} \right)}}G_{3,2}^{1,2}\left( {\left. {\begin{array}{*{20}{c}}
{0,M - {N_r},1}\\
{0,1 - {N_r}}
\end{array}} \right|\frac{{{\lambda _i}}}{x}} \right) \notag\\
&= \frac{{\Gamma \left( {{N_r}} \right)}}{{\Gamma \left( {{N_r} - M + 1} \right)}}G_{2,3}^{2,1}\left( {\left. {\begin{array}{*{20}{c}}
{1,{N_r}}\\
{1,{N_r} - M + 1,0}
\end{array}} \right|\frac{x}{{{\lambda _i}}}} \right),
\end{align}
where the last step holds by using \cite[eq.9.31.2]{gradshteyn1965table}. Plugging (\ref{eqn:inverse_hyper}) into (\ref{eqn:cdf_F_mean_resub}), $\tilde F \left( x \right)$ can be finally derived as (\ref{eqn:cdf_X_k}). 
Substituting \eqref{eqn:cdf_F_mean_re1} and (\ref{eqn:cdf_X_k}) into \eqref{eqn:p_fxk}, the approximate expression of $p_{m=1,k}^{out}$ can be derived. In the similar fashion, the outage probability for arbitrary data stream $m$, $p_{m,k}^{out}$, can be obtained as \eqref{eqn:out_1_xdeffina}.

\section{Proof of (\ref{eqn:F_two_matrix})}\label{app:f_two}
With \cite[Definition 7.3.2]{muirhead2009aspects}, ${}_1{F_0}^{\left( {{N_r}} \right)}\left( {1; - s{{\bf{R}}_r},{\bf{V}}} \right)$ can be expanded as
\begin{equation}\label{eqn:hyper_gen_01_exp}
{}_1{F_0}^{\left( {{N_r}} \right)}\left( {1; - s{{\bf{R}}_r},{\bf{V}}} \right)=\sum\limits_{k = 0}^\infty  {\sum\limits_{\bs\kappa}  {{{[1]}_{\bs\kappa} }\frac{{{C_{\bs\kappa} }\left( { - s{{\bf{R}}_r}} \right){C_{\bs\kappa} }\left( {\bf{V}} \right)}}{{k!{C_{\bs\kappa} }\left( {\bf{I}} \right)}}} },
\end{equation}
where ${\bs\kappa}=(k_1,\cdots,k_{N_r})$ is a partition of an integer $k$ with $k_1\ge k_2\ge\cdots k_{N_r} \ge 0$ and $k_1+\cdots+k_{N_r}=k$, ${\left[ a \right]_{\bs\kappa} }$ is the complex multivariate hypergeometric coefficient and ${\left[ a \right]_{\bs\kappa} } = \prod\limits_{i = 1}^m {{{\left( {a - i + 1} \right)}_{{k_i}}}} $, ${\left( a \right)_k} = a\left( {a + 1} \right) \cdots \left( {a + k - 1} \right)$ and ${\left( a \right)_0}=1$. ${C_{\bs\kappa} }\left( {\bf{X}} \right)$ is defined as the complex zonal polynomial of ${\bf X} \in {\mathbb C}^{N_r \times N_r}$ associated with ${\bs\kappa}$, and is given by \cite{ratnarajah2003complex,ratnarajah2005complex}
\begin{equation}\label{eqn:czonal}
{C_{\bs\kappa} }\left( {\bf{X}} \right) = {\chi _{\left[ {\bs\kappa}  \right]}}\left( 1 \right){\chi _{\left[ {\bs\kappa}  \right]}}\left( {\bf{X}} \right),
\end{equation}
where ${\chi _{\left[ {\bs\kappa}  \right]}}\left( 1 \right)$ is the dimension of the representation $[{\bs\kappa}]$ of the symmetric group on $k$ symbols given by
\begin{equation}\label{eqn:chi_1}
 {\chi _{\left[ {\bs\kappa}  \right]}}\left( 1 \right) = k!\frac{{\prod\nolimits_{i < j}^{{N_r}} {\left( {{k_i} - {k_j} - i + j} \right)} }}{{\prod\nolimits_{i = 1}^{{N_r}} {\left( {{k_i} + m - i} \right)!} }},
\end{equation}
and ${\chi _{\left[ {\bs\kappa}  \right]}}\left( \bf X \right)$ is the character of the representation $\left[ {\bs\kappa}  \right]$ of the linear group given as a symmetric function of the eigenvalues $\mu_1,\cdots,\mu_{N_r}$ of $\bf X$ by
\begin{equation}\label{eqn:zon_charc}
{\chi _{\left[ {\bs\kappa}  \right]}}\left( {\bf{X}} \right) = \frac{{\det \left( {\left\{ {{\mu _i}^{{k_j} + {N_r} - j}} \right\}} \right)}}{{\det \left( {\left\{ {{\mu _i}^{{N_r} - j}} \right\}} \right)}}.
\end{equation}
Particularly, ${C_{\bs\kappa} }\left( {\bf{I}} \right)$ is given by\cite{ratnarajah2005complex}
\begin{equation}\label{eqn:zonal_ident}
{C_{\bs\kappa} }\left( {\bf{I}} \right) = k!\frac{{{{\left[ {\prod\nolimits_{i < j}^{{N_r}} {\left( {{k_i} - {k_j} - i + j} \right)} } \right]}^2}}}{{\prod\nolimits_{i = 1}^{{N_r}} {\left( {{k_i} + {N_r} - i} \right)!} \prod\nolimits_{i = 1}^{{N_r}} {\left( {{N_r} - i} \right)!} }}.
\end{equation}

By using \cite[eq. 5.11]{gross1989total}
\begin{equation}\label{eqn:m_coeff}
{[1]_{\bs\kappa} } = \left\{ {\begin{array}{*{20}{c}}
{{k_1}!,}&{{k_2} =  \cdots  = {k_{{N_r}}} = 0}\\
{0,}&{\rm else}
\end{array}} \right.,
\end{equation}
(\ref{eqn:hyper_gen_01_exp}) can be rewritten as
\begin{equation}\label{eqn:hyper_01_rew}
{}_1{F_0}^{\left( {{N_r}} \right)}\left( {1; - s{{\bf{R}}_r},{\bf{V}}} \right) = \sum\limits_{k = 0}^\infty  {\frac{{{C_{{{\bs\kappa} _k}}}\left( { - s{{\bf{R}}_r}} \right){C_{{{\bs\kappa} _k}}}\left( {\bf{V}} \right)}}{{{C_{{{\bs\kappa} _k}}}\left( {\bf{I}} \right)}}},
\end{equation}
where ${{\bs\kappa} _k} = \left[ {k,0, \cdots ,0} \right]$. Moreover, with (\ref{eqn:chi_1}), ${\chi _{\left[ {{{\bs\kappa} _k}} \right]}}\left( 1 \right) $ can be obtained as ${\chi _{\left[ {{{\bs\kappa} _k}} \right]}}\left( 1 \right)=1$.
From (\ref{eqn:zonal_ident}), we have ${C_{{{\bs\kappa} _k}}}\left( {\bf{I}} \right)={\left( {{N_r}} \right)_k}/{k!}$.
Accordingly, \eqref{eqn:hyper_01_rew} collapses to
\begin{multline}\label{eqn:F_chi_kapp1}
{}_1{F_0}^{\left( {{N_r}} \right)}\left( {1; - s{{\bf{R}}_r},{\bf{V}}} \right) = \sum\limits_{k = 0}^\infty  {\frac{{{\chi _{\left[ {{{\bs\kappa} _k}} \right]}}\left( { - s{{\bf{R}}_r}} \right){\chi _{\left[ {{{\bs\kappa} _k}} \right]}}\left( {\bf{V}} \right)}}{{{C_{{{\bs\kappa} _k}}}\left( {\bf{I}} \right)}}}\\
 = \sum\limits_{k = 0}^\infty  {{{\left( { - s} \right)}^k}\frac{{k!}{{\chi _{\left[ {{{\bs\kappa} _k}} \right]}}\left( {{{\bf{R}}_r}} \right){\chi _{\left[ {{{\bs\kappa} _k}} \right]}}\left( {\bf{V}} \right)}}{{\left( {{N_r} } \right)_k}}} ,
\end{multline}
where the last step holds by using ${\chi _{\left[ {\bs\kappa}  \right]}}\left( {a{\bf{X}}} \right) = {a^k}{\chi _{\left[ {\bs\kappa}  \right]}}\left( {\bf{X}} \right)$. By using \cite[Proposition 3]{ghaderipoor2012application}, ${\chi _{\left[ {{{\bs\kappa} _k}} \right]}}\left( {\bf{V}} \right)$ can be simplified as
\begin{align}\label{eqn:chi_x_kappa_k_V}
{\chi _{\left[ {{{\bs\kappa} _k}} \right]}}\left( {\bf{V}} \right) &= {\chi _{\left[ {{\bs\kappa}_k'} \right]}}\left( {{{\bf{V}}_{\left( {1:{N_r} - M + 1} \right) \times \left( {1:{N_r} - M + 1} \right)}}} \right) \notag\\
&= \frac{{{C_{{{\bs\kappa} _k'}}}\left( {\bf{I}} \right)}}{{{\chi _{\left[ {{{\bs\kappa} _k'}} \right]}}\left( 1 \right)}}=\frac{\left( {{{N_r} - M+1}} \right)_k}{k!}.
\end{align}
where the sub-matrix ${{{\bf{V}}_{\left( {1:{N_r} - M + 1} \right) \times \left( {1:{N_r} - M + 1} \right)}}}$ is formed by deleting the last $M-1$ rows and the last $M-1$ columns of $\bf{V}$ and ${{{\bs\kappa} _k'}}$ is an $(N_r-M+1) \times 1$ vector of ones. Substituting (\ref{eqn:chi_x_kappa_k_V}) into (\ref{eqn:F_chi_kapp1}) gives
\begin{multline}\label{eqn:F_chi_kapp1_sub}
{}_1{F_0}^{\left( {{N_r}} \right)}\left( {1; - s{{\bf{R}}_r},{\bf{V}}} \right) =
\sum\limits_{k = 0}^\infty  {{{\left( { - s} \right)}^k}} \\
\times\frac{{\left( {{N_r} - M+1} \right)_k}}{{\left( {{N_r}} \right)_k}}{\chi _{\left[ {{{\bs\kappa} _k}} \right]}}\left( {{{\bf{R}}_r}} \right)\\
 = \sum\limits_{k = 0}^\infty  {{{\left( { - s} \right)}^k}\frac{{\left( {{N_r} - M+1} \right)_k}}{{\left( {{N_r}} \right)_k}}\frac{{\det \left( {\left\{ {{\lambda _i}^{{k_j} + {N_r} - j}} \right\}} \right)}}{{\det \left( {\left\{ {{\lambda _i}^{{N_r} - j}} \right\}} \right)}}}
,
\end{multline}
where $k_1=k$ and $k_2=\cdots=k_{N_r}=0$. After some algebraic manipulations, (\ref{eqn:F_chi_kapp1_sub}) can be simplified as (\ref{eqn:F_chi_kapp1_subsim}) at the top of the next page.
\begin{figure*}[!t]
\begin{multline}\label{eqn:F_chi_kapp1_subsim}
{}_1{F_0}^{\left( {{N_r}} \right)}\left( {1; - s{{\bf{R}}_r},{\bf{V}}} \right)
=\frac{1}{{\det \left( {\left\{ {{\lambda _i}^{{N_r} - j}} \right\}} \right)}}\sum\limits_{k = 0}^\infty  {{{\left( { - s} \right)}^k}\frac{{{{\left( {{N_r} - M + 1} \right)}_k}}}{{{{\left( {{N_r}} \right)}_k}}}\det \left( {{{\left\{ {{{\left( {{\lambda _i}} \right)}^{k + {N_r} - 1}}} \right\}}_{\scriptstyle 1 \le i \le {N_r}\hfill\atop
\scriptstyle j = 1\hfill}},{{\left\{ {{\lambda _i}^{{N_r} - j}} \right\}}_{\scriptstyle 1 \le i \le {N_r}\hfill\atop
\scriptstyle 2 \le j \le {N_r}\hfill}}} \right)} \\
=\frac{1}{{\det \left( {\left\{ {{\lambda _i}^{{N_r} - j}} \right\}} \right)}}\det \left( {{{\left\{ {{\lambda _i}^{{N_r} - 1}\sum\limits_{k = 0}^\infty  {\frac{{{{\left( {{N_r} - M + 1} \right)}_k}}}{{{{\left( {{N_r}} \right)}_k}}}} {{\left( { - s{\lambda _i}} \right)}^k}} \right\}}_{\scriptstyle1 \le i \le {N_r}\hfill\atop
\scriptstyle j = 1\hfill}},{{\left\{ {{\lambda _i}^{{N_r} - j}} \right\}}_{\scriptstyle1 \le i \le {N_r}\hfill\atop
\scriptstyle2 \le j \le {N_r}\hfill}}} \right),
\end{multline}
\end{figure*}
With the definition of hypergeometric function, (\ref{eqn:F_chi_kapp1_subsim}) can be further expressed as (\ref{eqn:F_two_matrix}).

\section{Proof of \eqref{eqn:out_mKconst_corr_spefina}}\label{app:der_meijerG}
By virtue of the generalized L'H\^opital's rule \cite[Theorem 1.2.4]{hua1963harmonic}, $p_{m,k}^{out}$ can be obtained as
\begin{align}\label{eqn:out_mKconst_corr_spe}
&p_{m,k}^{out}= \frac{1}{{\Gamma \left( {{N_r} - M + 1} \right)}}{\frac{{{\partial ^{{N_r} - 1}}}}{{\partial {\lambda }^{{N_r} - 1}}}\Biggl[ {{\lambda }^{{N_r} - 1}}\times }\notag\\
&\,\,\,\,\,{ {G_{2,3}^{2,1}\left( {\left. {\begin{array}{*{20}{c}}
{1,{N_r}}\\
{1,{N_r} - M + 1,0}
\end{array}} \right|\frac{{{{\left[ {{{\bf{R}}_{t'}}^{ - 1}} \right]}_{mm}}}}{{{\lambda }\bar \gamma {\theta _{m,k}}\ell ({d_k})}}} \right)} \Biggr]_{{\lambda } = {\lambda _0}}}.
\end{align}
The $({{N_r} - 1})$-order partial derivative of Meijer G-function in (\ref{eqn:out_mKconst_corr_spe}) can be derived as \eqref{eqn:deri_meijerg_proof}, as shown at the top of the following page, wherein the first and last steps hold by using \cite[eq.9.31.5]{gradshteyn1965table}, and the second equality holds by using \cite[eq.07.34.20.0013.02]{wolframe2010math}.
\begin{figure*}[!t]
\begin{align}\label{eqn:deri_meijerg_proof}
\eqref{eqn:deri_meijerg}& \coloneqq {\left( {\frac{{{{\left[ {{{\bf{R}}_{t'}}^{ - 1}} \right]}_{mm}}}}{{\bar \gamma {\theta _{m,k}}\ell ({d_k})}}} \right)^{{N_r} - 1}}\frac{{{\partial ^{{N_r} - 1}}}}{{\partial {\lambda _1}^{{N_r} - 1}}}\left[ {G_{2,3}^{2,1}\left( {\left. {\begin{array}{*{20}{c}}
{2 - {N_r},1}\\
{2 - {N_r},2 - M,1 - {N_r}}
\end{array}} \right|\frac{{{{\left[ {{{\bf{R}}_{t'}}^{ - 1}} \right]}_{mm}}}}{{{\lambda _1}\bar \gamma {\theta _{m,k}}\ell ({d_k})}}} \right)} \right]\notag\\
&  = \frac{{{\partial ^{{N_r} - 1}}}}{{\partial {z^{{N_r} - 1}}}}{\left[ {G_{2,3}^{2,1}\left( {\left. {\begin{array}{*{20}{c}}
{2 - {N_r},1}\\
{2 - {N_r},2 - M,1 - {N_r}}
\end{array}} \right|{z^{ - 1}}} \right)} \right]_{z = \frac{{{\lambda _1}{{\bar \gamma }}{\theta _{m,k}}\ell ({d_k})}}{{{{\left[ {{{\bf{R}}_{t'}}^{ - 1}} \right]}_{mm}}}}}}\notag\\
&  = {\left( {\frac{{{{\left[ {{{\bf{R}}_{t'}}^{ - 1}} \right]}_{mm}}}}{{{\lambda _1}{{\bar \gamma }}{\theta _{m,k}}\ell ({d_k})}}} \right)^{{N_r} - 1}}G_{1,2}^{1,1}\left( {\left. {\begin{array}{*{20}{c}}
{2 - {N_r}}\\
{2 - M,1 - {N_r}}
\end{array}} \right|\frac{{{{\left[ {{{\bf{R}}_{t'}}^{ - 1}} \right]}_{mm}}}}{{{\lambda _1}{{\bar \gamma }}{\theta _{m,k}}\ell ({d_k})}}} \right)\notag\\
& = G_{1,2}^{1,1}\left( {\left. {\begin{array}{*{20}{c}}
1\\
{{N_r} - M + 1,0}
\end{array}} \right|\frac{{{{\left[ {{{\bf{R}}_{t'}}^{ - 1}} \right]}_{mm}}}}{{{\lambda _1}{{\bar \gamma }}{\theta _{m,k}}\ell ({d_k})}}} \right),
\end{align}
\hrulefill
\end{figure*}
Applying \cite[eq.06.07.26.0006.01]{wolframe2010math} to \eqref{eqn:deri_meijerg_proof} then leads to 
\begin{align}\label{eqn:deri_meijerg}
&\frac{{{\partial ^{{N_r} - 1}}{{{{\lambda }}}^{{N_r} - 1}}G_{2,3}^{2,1}\left( {\left. {\begin{array}{*{20}{c}}
{1,{N_r}}\\
{1,{N_r} - M + 1,0}
\end{array}} \right|\frac{{{{\left[ {{{\bf{R}}_{t'}}^{ - 1}} \right]}_{mm}}}}{{{\lambda }\bar \gamma {\theta _{m,k}}\ell ({d_k})}}} \right)}}{{\partial {\lambda }^{{N_r} - 1}}} \notag\\
& = {{\gamma \left( {{N_r} - M + 1,\frac{{{{\left[ {{{\bf{R}}_{t'}}^{ - 1}} \right]}_{mm}}}}{{{\lambda }{{\bar \gamma }}{\theta _{m,k}}\ell ({d_k})}}} \right)}},
\end{align}
Substituting (\ref{eqn:deri_meijerg}) into (\ref{eqn:out_mKconst_corr_spe}) eventually gives \eqref{eqn:out_mKconst_corr_spefina}.

\section{Proof of \eqref{eqn:meijer_G_simple}}\label{app:F_tilde_exp}
To start, the asymptotic expression of Meijer G-function $G_{2,3}^{2,1}\left( {\cdot|z} \right)$ for $z \to 0$ is derived first. By using \cite[eq.07.34.06.0045.01]{wolframe2010math}, the Meijer G-function $G_{2,3}^{2,1}\left( {\cdot|z} \right)$ can be written as
\begin{align}\label{eqn:meijer_G_res}
&G_{2,3}^{2,1}\left( {\left. {\begin{array}{*{20}{c}}
{1,{N_r}}\\
{1,{N_r} - M + 1,0}
\end{array}} \right|z} \right) =\notag\\
&  - \sum\limits_{\vartheta  = 1}^\infty  {{\rm{Res}}\left\{ {\frac{{\Gamma \left( {1 - \theta } \right)\Gamma \left( {{N_r} - M + 1 - \theta } \right)\Gamma \left( \theta  \right)}}{{\Gamma \left( {{N_r} - \theta } \right)\Gamma \left( {1 + \theta } \right)}}{z^\theta },\theta  = \vartheta } \right\}},
\end{align}
where ${\rm Res}(·; v)$ stands for the residue at $s = v$. Excluding the second order poles at $\{N_r-M+1,\cdots,N_r-1\}$, all the other poles are simple. By using residue theorem, (\ref{eqn:meijer_G_res}) can be further expanded as
\begin{align}\label{eqn:meig_res_simp}
&G_{2,3}^{2,1}\left( {\left. {\begin{array}{*{20}{c}}
{1,{N_r}}\\
{1,{N_r} - M + 1,0}
\end{array}} \right|z} \right) =\notag \\
&\sum\limits_{\vartheta  = 1}^{{N_r} - M} {\frac{{{{\left( { - 1} \right)}^{\vartheta  - 1}}\left( {{N_r} - M - \vartheta } \right)!}}{{\vartheta !\left( {{N_r} - 1 - \vartheta } \right)!}}{z^\vartheta }}  + \sum\limits_{\vartheta  = {N_r} - M + 1}^{{N_r} - 1} {\varphi \left( \vartheta  \right){z^\vartheta }} \notag\\
& + \sum\limits_{\vartheta  = {N_r} - M + 1}^{{N_r} - 1} {\frac{{{{\left( { - 1} \right)}^{2\vartheta  - {N_r} + M - 1}}{z^\vartheta }\ln z}}{{\vartheta !\left( {\vartheta  - \left( {{N_r} - M + 1} \right)} \right)!\left( {{N_r} - 1 - \vartheta } \right)!}}}\notag\\
& + \sum\limits_{\vartheta  = {N_r}}^\infty  {\frac{{{{\left( { - 1} \right)}^{\vartheta  + M - 2}}\left( {\vartheta  - {N_r}} \right)!}}{{\vartheta !\left( {\vartheta  - {N_r} + M - 1} \right)!}}{z^\vartheta }},
\end{align}
where $\varphi \left( \vartheta  \right)$ is given by
\begin{align}\label{eqn:phi_def}
&\varphi \left( \vartheta  \right) = {\left( { - 1} \right)^{2\vartheta  - {N_r} + M - 1}}\times\notag\\
&{\left[ {\frac{d}{{d\theta }}\frac{{{{\left( {\Gamma \left( {\vartheta  + 1 - \theta } \right)} \right)}^2}}}{{\prod\limits_{j = 0}^{\vartheta  - 1} {\left( {\theta  - j} \right)} \prod\limits_{j = {N_r} - M + 1}^{\vartheta  - 1} {\left( {\theta  - j} \right)} \Gamma \left( {{N_r} - \theta } \right)}}} \right]_{\theta  = \vartheta }}.
\end{align}
Substituting (\ref{eqn:meig_res_simp}) into (\ref{eqn:cdf_X_k}) yields (\ref{eqn:meijer_G_sub}) at the top of the next page.
\begin{figure*}[!t]
\begin{align}\label{eqn:meijer_G_sub}
\tilde F \left( x \right) &= \frac{{\Gamma \left( {{N_r}} \right)}}{{\Gamma \left( {{N_r} - M + 1} \right)\det \left( {\left\{ {{\lambda _i}^{{N_r} - j}} \right\}} \right)}}\sum\limits_{\vartheta  = 1}^{{N_r} - M} {\frac{{{{\left( { - 1} \right)}^{\vartheta  - 1}}\left( {{N_r} - M - \vartheta } \right)!}}{{\vartheta !\left( {{N_r} - 1 - \vartheta } \right)!}}{{x}^\vartheta }} \det \left( \left\{{{{{\lambda _i}}}^{{N_r} - \vartheta  - 1}},\left. {{{{{\lambda _i}}}^{{N_r} - j}}} \right|_{j = 2}^{{N_r}}\right\} \right)\notag\\
& + \frac{{\Gamma \left( {{N_r}} \right)}}{{\Gamma \left( {{N_r} - M + 1} \right)\det \left( {\left\{ {{\lambda _i}^{{N_r} - j}} \right\}} \right)}}\sum\limits_{\vartheta  = {N_r} - M + 1}^{{N_r} - 1} {\varphi \left( \vartheta  \right){{x}^\vartheta }} \det \left( \left\{{{{{\lambda _i}}}^{{N_r} - \vartheta  - 1}},\left. {{{{{\lambda _i}}}^{{N_r} - j}}} \right|_{j = 2}^{{N_r}}\right\} \right)\notag\\
& + \frac{{\Gamma \left( {{N_r}} \right)}}{{\Gamma \left( {{N_r} - M + 1} \right)\det \left( {\left\{ {{\lambda _i}^{{N_r} - j}} \right\}} \right)}}\sum\limits_{\vartheta  = {N_r} - M + 1}^{{N_r} - 1} {\frac{{{{\left( { - 1} \right)}^{2\vartheta  - {N_r} + M - 1}}}{{x}^\vartheta }}{{\vartheta !\left( {\vartheta  - \left( {{N_r} - M + 1} \right)} \right)!\left( {{N_r} - 1 - \vartheta } \right)!}}} \notag\\
& \times \det \left( \left\{{{{{\lambda _i}}}^{{N_r} - \vartheta - 1}}\ln \left( {\frac{x}{{{\lambda _i}}}} \right),\left. {{{{{\lambda _i}}}^{{N_r} - j}}} \right|_{j = 2}^{{N_r}}\right\} \right)\notag\\
& + \frac{{\Gamma \left( {{N_r}} \right)}}{{\Gamma \left( {{N_r} - M + 1} \right)\det \left( {\left\{ {{\lambda _i}^{{N_r} - j}} \right\}} \right)}}\sum\limits_{\vartheta  = {N_r}}^\infty  {\frac{{{{\left( { - 1} \right)}^{\vartheta  + M - 2}}\left( {\vartheta  - {N_r}} \right)!}}{{\vartheta !\left( {\vartheta  - {N_r} + M - 1} \right)!}}{x^\vartheta }} \det \left( \left\{{{{{\lambda _i}}}^{{N_r} - \vartheta  - 1}},\left. {{{{{\lambda _i}}}^{{N_r} - j}}} \right|_{j = 2}^{{N_r}}\right\} \right).
\end{align}
\end{figure*}
Since $\det \left( \left\{{{{{\lambda _i}} }^\nu },\left. {{{{{\lambda _i}} }^{{N_r} - j}}} \right|_{j = 2}^{{N_r}}\right\} \right) = 0$ for $0 \le \nu  \le {N_r} - 2$, (\ref{eqn:meijer_G_sub}) can be further rearranged as \eqref{eqn:meijer_G_simple}.

\section{Proof of Theorem \ref{the:pow}}\label{app:pow}
With regard to (\ref{eqn:opt_refor_rew}), the associated Lagrangian is written as (\ref{eqn:lag}) at the top of the following page,
\begin{figure*}[!t]
\begin{align}\label{eqn:lag}
&{\cal L}\left( {{{\left\{ {{\zeta _{m,k}}} \right\}}_{m \in \left[ {1,M} \right]\hfill\atop
k \in \left[ {1,K} \right]\hfill}},{{\left\{ {{\theta _{m,k}}} \right\}}_{m \in \left[ {1,M} \right]\hfill\atop
k \in \left[ {1,K} \right]\hfill}},{{\left\{ {{\eta _m}} \right\}}_{m \in \left[ {1,M} \right]}},{{\left\{ {{\xi _{m,k}}} \right\}}_{m \in \left[ {1,M} \right]\hfill\atop
k \in \left[ {1,K} \right]\hfill}},{{\left\{ {{\varsigma _{m,k}}} \right\}}_{m \in [1,M]\hfill\atop
k \in [1,K]\hfill}},{{\left\{ {{\omega _{m,k,i}}} \right\}}_{m \in [1,M]\hfill\atop
{k \in \left[ {1,K} \right]\hfill\atop
i \in \left[ {k,K} \right]\hfill}}}} \right)\notag\\
&=\sum\nolimits_{m \in \left[ {1,M} \right]\hfill\atop
k \in \left[ {1,K} \right]\hfill} {{\phi _{m,k}}{R_{m,k}}{\theta _{m,k}}^{ - \left( {{N_r} - M + 1} \right)}}  + \sum\nolimits_{m \in \left[ {1,M} \right]} {{\eta _m}\left( {\sum\nolimits_{k = 1}^K {{\zeta _{m,k}}}  - \frac{1}{M}} \right)}  - \sum\nolimits_{m \in \left[ {1,M} \right]\hfill\atop
k \in \left[ {1,K} \right]\hfill} {{\xi _{m,k}}{\zeta _{m,k}}}  \notag\\
&\quad  - \sum\nolimits_{m \in \left[ {1,M} \right]\hfill\atop
k \in \left[ {1,K} \right]\hfill} {{\varsigma _{m,k}}{\theta _{m,k}}}  + \sum\nolimits_{m \in [1,M]\hfill\atop
{k \in \left[ {1,K} \right]\hfill\atop
i \in \left[ {k,K} \right]\hfill}} {{\omega _{m,k,i}}\left( {{\theta _{m,k}} - \frac{{{\zeta _{m,i}}}}{{{2^{{R_{m,i}}}} - 1}} + \sum\nolimits_{l = 1}^{i - 1} {{\zeta _{m,l}}} } \right)}  ,
\end{align}
\end{figure*}
where ${{\eta _m}}$, ${{\xi _{m,k}}}$, ${{\varsigma _{m,k}}}$ and ${{\omega _{m,k,i}}}$ refer to the KKT multipliers. On the basis of (\ref{eqn:lag}), the optimal solution should satisfy the necessary KKT conditions (\ref{eqn:kkt_zetamul})-(\ref{eqn:kkt_firdtheta}) as follows 
\begin{equation}\label{eqn:kkt_zetamul}
{\xi _{m,k}^*}{\zeta _{m,k}^*} = 0,
\end{equation}
\begin{equation}\label{eqn:kkt_thetamul}
{\varsigma _{m,k}^*}{\theta _{m,k}^*} = 0,
\end{equation}
\begin{equation}\label{eqn:kkt_thetareq}
{\omega _{m,k,i}^*}\left( {{\theta _{m,k}^*} - \frac{{{\zeta _{m,i}^*}}}{{{2^{{R_{m,i}}}} - 1}} + \sum\nolimits_{l = 1}^{i - 1} {{\zeta _{m,l}^*}} } \right) = 0,
\end{equation}
\begin{equation}\label{eqn:kkt_powercon}
\sum\nolimits_{k = 1}^K {{\zeta _{m,k}^*}}  = \frac{1}{M},
\end{equation}
where (\ref{eqn:kkt_firDzeta}) and (\ref{eqn:kkt_firdtheta}) are shown at the top of the next page, ${\xi _{m,k}^*} \ge 0$, ${\varsigma _{m,k}^*} \ge 0$ and ${\omega _{m,k,i}^*} \ge 0$. \begin{figure*}[!t]
\begin{equation}\label{eqn:kkt_firDzeta}
{\left. {\frac{{\partial {\cal L}}}{{\partial {\zeta _{m,k}}}}} \right|_{{{\left\{ {\zeta _{m,k}^*} \right\}}_{m \in \left[ {1,M} \right]\hfill\atop
k \in \left[ {1,K} \right]\hfill}},{{\left\{ {\theta _{m,k}^*} \right\}}_{m \in \left[ {1,M} \right]\hfill\atop
k \in \left[ {1,K} \right]\hfill}},{{\left\{ {\eta _m^*} \right\}}_{m \in \left[ {1,M} \right]}},{{\left\{ {\xi _{m,k}^*} \right\}}_{m \in \left[ {1,M} \right]\hfill\atop
k \in \left[ {1,K} \right]\hfill}},{{\left\{ {\varsigma _{m,k}^*} \right\}}_{m \in [1,M]\hfill\atop
k \in [1,K]\hfill}},{{\left\{ {\omega _{m,k,i}^*} \right\}}_{m \in [1,M]\hfill\atop
{k \in \left[ {1,K} \right]\hfill\atop
i \in \left[ {k,K} \right]\hfill}}}}} = 0,
\end{equation}
\end{figure*}
\begin{figure*}[!t]
\begin{equation}\label{eqn:kkt_firdtheta}
{\left. {\frac{{\partial {\cal L}}}{{\partial {\theta _{m,k}}}}} \right|_{{{\left\{ {\zeta _{m,k}^*} \right\}}_{m \in \left[ {1,M} \right]\hfill\atop
k \in \left[ {1,K} \right]\hfill}},{{\left\{ {\theta _{m,k}^*} \right\}}_{m \in \left[ {1,M} \right]\hfill\atop
k \in \left[ {1,K} \right]\hfill}},{{\left\{ {\eta _m^*} \right\}}_{m \in \left[ {1,M} \right]}},{{\left\{ {\xi _{m,k}^*} \right\}}_{m \in \left[ {1,M} \right]\hfill\atop
k \in \left[ {1,K} \right]\hfill}},{{\left\{ {\varsigma _{m,k}^*} \right\}}_{m \in [1,M]\hfill\atop
k \in [1,K]\hfill}},{{\left\{ {\omega _{m,k,i}^*} \right\}}_{m \in [1,M]\hfill\atop
{k \in \left[ {1,K} \right]\hfill\atop
i \in \left[ {k,K} \right]\hfill}}}}} = 0,
\end{equation}
\hrulefill
\end{figure*}
By putting (\ref{eqn:lag}) into (\ref{eqn:kkt_firDzeta}) and (\ref{eqn:kkt_firdtheta}), respectively, we reach
\begin{multline}\label{eqn:kkt1_firzeta}
\eta _m^* - \xi _{m,k}^* - \frac{1}{{{2^{{R_{m,k}}}} - 1}}\sum\nolimits_{t \in \left[ {1,k} \right]} {\omega _{m,t,k}^*}  \\
+ \sum\nolimits_{ t \in \left[ {k + 1,K} \right]\hfill\atop
 i \in \left[ {t,K} \right]\hfill} {\omega _{m,t,i}^*}   = 0,
\end{multline}
\begin{multline}\label{eqn:kkt2_firtheta}
 - \left( {{N_r} - M + 1} \right){\phi _{m,k}}{R_{m,k}}{\theta _{m,k}^*}^{ - \left( {{N_r} - M + 2} \right)} \\
 - \varsigma _{m,k}^* + \sum\nolimits_{i \in \left[ {k,K} \right]} {\omega _{m,k,i}^*}  = 0.
\end{multline}
Notice ${{\zeta _{m,k}} > 0}$ and ${{\theta _{m,k}} > 0}$, (\ref{eqn:kkt_zetamul}) and (\ref{eqn:kkt_thetamul}) suggest $\xi _{m,k}^* = 0$ and $\varsigma _{m,k}^* = 0$, respectively. Thus, (\ref{eqn:kkt1_firzeta}) and (\ref{eqn:kkt2_firtheta}) can be rewritten as
\begin{equation}\label{eqn:kkt1_firzetare}
\eta _m^* - \frac{1}{{{2^{{R_{m,k}}}} - 1}}\sum\limits_{t \in \left[ {1,k} \right]} {\omega _{m,t,k}^*}  + \sum\limits_{t \in \left[ {k + 1,K} \right]\hfill\atop
i \in \left[ {t,K} \right]\hfill} {\omega _{m,t,i}^*}  = 0,
\end{equation}
\begin{multline}\label{eqn:kkt2_firthetare}
 - \left( {{N_r} - M + 1} \right){\phi _{m,k}}{R_{m,k}}{\theta _{m,k}^*}^{ - \left( {{N_r} - M + 2} \right)} \\
 + \sum\nolimits_{i \in \left[ {k,K} \right]} {\omega _{m,k,i}^*}  = 0.
\end{multline}
To proceed, we assume that ${i_{m,k}} = \mathop {\arg }\limits_i \min \left\{ {{{\zeta _{m,i}^*}}/{({{2^{{R_{m,i}}}} - 1})} - \sum\nolimits_{l = 1}^{i - 1} {\zeta _{m,l}^*} ,i \in \left[ {k,K} \right]} \right\}$, and the index ${i_{m,k}}$ uniquely exists for simplicity. This assumption will be further demonastrated in Appendix \ref{app:rem}. It is readily proved that ${i_{m,1}} \le  \cdots  \le {i_{m,K}}$, and the uniqueness of ${i_{m,k}}$ means
\begin{equation}\label{eqn:theta_mk_upp}
\theta _{m,k}^* < \frac{{\zeta _{m,i}^*}}{{{2^{{R_{m,i}}}} - 1}} + \sum\nolimits_{l = 1}^{i - 1} {\zeta _{m,l}^*},\,i\in[k,K] \wedge  i \ne {i_{m,k}}.
\end{equation}
Accordingly, (\ref{eqn:kkt_thetareq}) indicates
\begin{equation}\label{eqn:omega_1sp}
  \omega _{m,k,i \ne {i_{m,k}}}^* = 0.
\end{equation}

Besides, by setting $k=K$ in (\ref{eqn:kkt2_firthetare}), we have
\begin{equation}\label{eqn:omega_K_K}
\omega _{m,K,K}^* = \left( {{N_r} - M + 1} \right){\phi _{m,K}}{R_{m,K}}{\theta_{m,K}^*} ^{ - \left( {{N_r} - M + 2} \right)}.
\end{equation}
Since $\theta _{m,k}^*$ is upper bounded from (\ref{eqn:theta_mk_upp}), it follows from (\ref{eqn:omega_K_K}) that $\omega _{m,K,K}^* > 0$. Then setting $k=K$ in (\ref{eqn:kkt1_firzetare}) leads to
\begin{align}\label{eqn:eta_m_bnd}
\eta _m^* &= \frac{1}{{{2^{{R_{m,K}}}} - 1}}\sum\limits_{t \in \left[ {1,K} \right]} {\omega _{m,t,K}^*} \ge \frac{1}{{{2^{{R_{m,K}}}} - 1}}\omega _{m,K,K}^* > 0.
\end{align}
Combining (\ref{eqn:kkt1_firzetare}) and (\ref{eqn:omega_1sp}) results in
\begin{equation}\label{eqn:eta_m_gz}
\eta _m^* = \frac{1}{{{2^{{R_{m,k}}}} - 1}}\sum\limits_{t \in \left[ {1,k} \right]} {\omega _{m,t,k}^*}  - \sum\limits_{t \in \left[ {k + 1,K} \right]} {\omega _{m,t,{i_{m,t}}}^*}.
\end{equation}
By setting $k=1$, it follows from (\ref{eqn:eta_m_gz}) that
\begin{equation}\label{eqn:eta_m_k1}
\eta _m^* = \frac{1}{{{2^{{R_{m,1}}}} - 1}}\omega _{m,1,1}^* - \sum\limits_{t \in \left[ {2,K} \right]} {\omega _{m,t,{i_{m,t}}}^*} .
\end{equation}
With (\ref{eqn:eta_m_bnd}), (\ref{eqn:eta_m_k1}) indicates that $\omega _{m,1,1}^* > 0$. According to (\ref{eqn:omega_1sp}), we deduce that $i_{m,1} = 1$, i.e., $\omega _{m,1,i \ne 1}^* = 0$. Then setting $k=2$ in (\ref{eqn:eta_m_gz}), we have
\begin{equation}\label{eqn:eta_m_k2}
\eta _m^* = \frac{1}{{{2^{{R_{m,2}}}} - 1}}\left( {\omega _{m,1,2}^* + \omega _{m,2,2}^*} \right) - \sum\limits_{t \in \left[ {3,K} \right]} {\omega _{m,t,{i_{m,t}}}^*}.
\end{equation}
Since $\omega _{m,1,2}^* = 0$, (\ref{eqn:eta_m_k2}) suggests that $\omega _{m,2,2}^* > 0$. Accordingly, $i_{m,2} = 2$ and $\omega _{m,2,i \ne 2}^* = 0$ follow from (\ref{eqn:omega_1sp}). Similarly, by successively setting $k=3, \cdots, K$ in (\ref{eqn:eta_m_gz}), we have ${i_{m,k}}=k$. Therefore, from (\ref{eqn:kkt_thetareq}), it follows that
\begin{equation}\label{eqn:theta_fina}
\theta _{m,k}^* = \frac{{\zeta _{m,k}^*}}{{{2^{{R_{m,k}}}} - 1}} - \sum\limits_{l = 1}^{k - 1} {\zeta _{m,l}^*}.
\end{equation}
By combining (\ref{eqn:theta_mk_upp}) with (\ref{eqn:theta_fina}), we arrive at $\theta _{m,1}^* < \cdots < \theta _{m,K}^*$. Meanwhile, (\ref{eqn:eta_m_gz}) and (\ref{eqn:kkt2_firthetare}) respectively reduce to
\begin{equation}\label{eqn:eta_m_rew}
\eta _m^* = \frac{1}{{{2^{{R_{m,k}}}} - 1}}\omega _{m,k,k}^* - \sum\limits_{t \in \left[ {k + 1,K} \right]} {\omega _{m,t,t}^*},
\end{equation}
\begin{multline}\label{eqn:zeta_rew}
 - \left( {{N_r} - M + 1} \right){\phi _{m,k}}{R_{m,k}}{\theta_{m,k}^*} ^{ - \left( {{N_r} - M + 2} \right)} \\
 + \omega _{m,k,k}^* = 0.
\end{multline}
By solving the system of equations constituted by (\ref{eqn:eta_m_rew}) for $k=1, \cdots, K$, ${{{{ \bs\omega }}}_m} = {\left( {{{ \omega }_{m,1,1}^*}, \cdots ,{{ \omega }_{m,K,K}^*}} \right)^{\rm{T}}}$ can be obtained as
\begin{equation}\label{eqn:omega_v}
{{{{ \bs\omega }}}_m} = \eta _m^*{{\bf{U}}_m}^{ - 1}{{\bf{1}}_K},
\end{equation}
where ${\bf 1}_K$ is $K\times 1$ vector of ones and ${{\bf{U}}_m}$ are defined in (\ref{eqn:L_def}). We define ${\bf e}_k$ as a column vector with one as the $k$-th element and zeros elsewhere, and ${{ \omega }_{m,k,k}^*}$ can then be represented by
\begin{equation}\label{eqn:omega_k_eq}
{{ \omega }_{m,k,k}^*} = \eta _m^*{{\bf e}_k}^{\rm T} {{\bf{U}}_m}^{ - 1}{{\bf{1}}_K}.
\end{equation}

Substituting (\ref{eqn:theta_fina}) and (\ref{eqn:omega_k_eq}) into (\ref{eqn:zeta_rew}) gives rise to
\begin{multline}\label{eqn:zeta_eq}
 - \left( {{N_r} - M + 1} \right){\phi _{m,k}}{R_{m,k}}\times\\
 {\left( {\frac{{\zeta _{m,k}^*}}{{{2^{{R_{m,k}}}} - 1}} - \sum\limits_{l = 1}^{k - 1} {\zeta _{m,l}^*} } \right)^{ - \left( {{N_r} - M + 2} \right)}} \\
 + \eta _m^*{{\bf e}_k}^{\rm T} {{\bf{U}}_m}^{ - 1}{{\bf{1}}_K} = 0.
\end{multline}
By defining ${\tilde\zeta _{m,k}^*} = {{ \zeta }_{m,k}^*}{\eta _m^*}^{\frac{1}{{{N_r} - M + 2}}}$, (\ref{eqn:zeta_eq}) can be rewritten as
\begin{multline}\label{eqn:zeta_eq_sys}
\frac{{\tilde \zeta _{m,k}^*}}{{{2^{{R_{m,k}}}} - 1}} - \sum\limits_{l = 1}^{k - 1} {\tilde \zeta _{m,l}^*}  \\
= {\left( {\frac{{\left( {{N_r} - M + 1} \right){\phi _{m,k}}{R_{m,k}}}}{{{\bf e}_k}^{\rm T} {{\bf{U}}_m}^{ - 1}{{\bf{1}}_K}}} \right)^{\frac{1}{{{N_r} - M + 2}}}}.
\end{multline}
By solving the system of equations in (\ref{eqn:zeta_eq_sys}), ${{{\tilde{ \bs\zeta }}}_m} = {\left( {{{\tilde \zeta }_{m,1}^*}, \cdots ,{{\tilde \zeta }_{m,K}^*}} \right)^{\rm{T}}}$ can be derived as
\begin{equation}\label{eqn:zeta_L_b}
{{{\tilde{ \bs\zeta }}}_m} = {{\bf{L}}_m}^{ - 1}{{\bf{b}}_m},
\end{equation}
where ${{\bf{b}}_m}$ and ${{\bf{L}}_m}$ are defined in (\ref{eqn:b_def}) and (\ref{eqn:L_def}), respectively.

Substituting (\ref{eqn:zeta_L_b}) into (\ref{eqn:kkt_powercon}) along with the definition of ${{\tilde \zeta }_{m,k}^*}$, ${\eta _m^*}^{\frac{1}{{{N_r} - M + 2}}}$ is given by
\begin{equation}\label{eqn:eta_fin}
{\eta _m^*}^{\frac{1}{{{N_r} - M + 2}}} = {{M{{\bf{1}}_K}^{\rm{T}}{{\bf{L}}_m}^{ - 1}{{\bf{b}}_m}}},
\end{equation}
Defining by ${{{{ \bs\zeta }}}_m^*} = {\left( {{{ \zeta }_{m,1}^*}, \cdots ,{{ \zeta }_{m,K}^*}} \right)^{\rm{T}}}$ the vector of power allocation coefficients, ${{{{ \bs\zeta }}}_m^*}$ is finally expressed as (\ref{eqn:zeta_fina}). With (\ref{eqn:theta_fina}), ${\theta _{m,k}^*}$ can be expressed as (\ref{eqn:theta_fin}).

\section{Proof of Remark \ref{the:rem}}\label{app:rem}
${{\bf{L}}_m}$ is a lower triangular matrix and its matrix inverse is still a lower triangular matrix. Specifically, ${{\bf{L}}_m}^{ - 1}$ can be derived as
\begin{equation}\label{eqn:L_inv}
{{\bf{L}}_m}^{ - 1} = \left( {{{\left[ {\prod\limits_{t \in \left\{ {i,j} \right\}} {\left( {{2^{{R_{m,t}}}} - 1} \right)} \prod\limits_{t = j + 1}^{i - 1} {{2^{{R_{m,t}}}}} } \right]}_{K \ge i \ge j \ge 1}}} \right),
\end{equation}
where all entries of ${{\bf{L}}_m}^{ - 1}$ with $i<j$ are zero. By using (\ref{eqn:L_inv}), ${{\bf{L}}_m}^{ - 1}{{\bf{b}}_m}$
 can be explicitly expressed as (\ref{eqn:Lminvb}) at the top of the next page,
 \begin{figure*}[!t]
\begin{equation}\label{eqn:Lminvb}
{{\bf{L}}_m}^{ - 1}{{\bf{b}}_m} = \left( {\begin{array}{*{20}{c}}
 \vdots \\
{\underbrace {\sum\nolimits_{j = 1}^i {\prod\nolimits_{t \in \left\{ {i,j} \right\}} {\left( {{2^{{R_{m,t}}}} - 1} \right)} \prod\nolimits_{t = j + 1}^{i - 1} {{2^{{R_{m,t}}}}} } {{\left( {\frac{{\left( {{N_r} - M + 1} \right){\phi _{m,j}}{R_{m,j}}}}{{{{\bf e}_j}^{\rm T} {{\bf{U}}_m}^{ - 1}{{\bf{1}}_K}}}} \right)}^{\frac{1}{{{N_r} - M + 2}}}}}_{{\beta_{m,i}},1 \le i \le K}}\\
 \vdots
\end{array}} \right),
\end{equation}
\end{figure*}
where $\beta_{m,i}>0$. With (\ref{eqn:Lminvb}), ${{{\bf{1}}_K}^{\rm{T}}{{\bf{L}}_m}^{ - 1}{{\bf{b}}_m}}$ and $ {{{\bf{a}}_{m,k}}^{\rm{T}}{{\bf{L}}_m}^{ - 1}{{\bf{b}}_m}}$ can be written as
\begin{equation}\label{eqn:power_num}
{{{\bf{1}}_K}^{\rm{T}}{{\bf{L}}_m}^{ - 1}{{\bf{b}}_m}}=\sum\limits_{i = 1}^K {{\beta _{m,i}}},
\end{equation}
\begin{equation}\label{eqn:dom_out}
 {{{\bf{a}}_{m,k}}^{\rm{T}}{{\bf{L}}_m}^{ - 1}{{\bf{b}}_m}}=  \frac{{{\beta_{m,k}}}}{{{2^{{R_{m,k}}}} - 1}} - \sum\limits_{i = 1}^{k - 1} {{\beta_{m,i}}}. 
\end{equation}
In order to prove that the sequence $\theta _{m,1}^*,\cdots,\theta _{m,K}^*$ is in an ascending order of $k$, i.e., $\theta _{m,k+1}^* > \theta _{m,k}^*$. From (\ref{eqn:theta_fin}), it suffices to show that  
\begin{equation}\label{eqn:theta_ine_proof}
\frac{{{\beta_{m,k + 1}}}}{{{2^{{R_{m,k + 1}}}} - 1}} - \sum\limits_{i = 1}^k {{\beta_{m,i}}}  > \frac{{{\beta_{m,k}}}}{{{2^{{R_{m,k}}}} - 1}} - \sum\limits_{i = 1}^{k - 1} {{\beta_{m,i}}} .
\end{equation}
This amounts to proving ${{{\beta_{m,k + 1}}}}/{({{2^{{R_{m,k + 1}}}} - 1})} > {{{2^{{R_{m,k}}}}{\beta_{m,k}}}}/{({{2^{{R_{m,k}}}} - 1})}$, which can be justified by substituting the definition of $\beta_{m,k}$ into (\ref{eqn:Lminvb}). This result is consistent with the assumption of (\ref{eqn:theta_mk_upp}). By using (\ref{eqn:theta_fina}) together with $\theta _{m,1}^* < \cdots < \theta _{m,K}^*$, it follows that
\begin{equation}\label{eqn:power_normal}
\frac{{\zeta _{m,k}^*}}{{{2^{{R_{m,k}}}} - 1}} < \frac{{\zeta _{m,k + 1}^*}}{{{2^{{R_{m,k + 1}}}} - 1}} - \zeta _{m,k}^* < \frac{{\zeta _{m,k + 1}^*}}{{{2^{{R_{m,k + 1}}}} - 1}}.
\end{equation}
This leads to (\ref{eqn:normal_power}).
\section{Proof of Remark \ref{the:remout}}\label{app:remout}
By using (\ref{eqn:power_num}) and (\ref{eqn:dom_out}), we have
\begin{align}\label{eqn:out_bound}
\frac{{{{\bf{1}}_K}^{\rm{T}}{{\bf{L}}_m}^{ - 1}{{\bf{b}}_m}}}{{{{\bf{a}}_{m,k}}^{\rm{T}}{{\bf{L}}_m}^{ - 1}{{\bf{b}}_m}}} &= \frac{{\sum\nolimits_{i = 1}^K {{d_{m,i}}} }}{{\frac{{{d_{m,k}}}}{{{2^{{R_{m,k}}}} - 1}} - \sum\nolimits_{i = 1}^{k - 1} {{d_{m,i}}} }} \ge \frac{{\sum\nolimits_{i = 1}^K {{d_{m,i}}} }}{{\frac{{{d_{m,k}}}}{{{2^{{R_{m,k}}}} - 1}}}}\notag\\
 &\ge \frac{{{d_{m,k}}}}{{\frac{{{d_{m,k}}}}{{{2^{{R_{m,k}}}} - 1}}}} = {2^{{R_{m,k}}}} - 1.
\end{align}
Hence, the asymptotic outage probability in (\ref{eqn:out_tar}) is lower bounded by (\ref{eqn:out_opt_lower}).

\section{Proof of Theorem \ref{the:rate_opt}}\label{app:rate_opt}
The associated Lagrangian is given by (\ref{eqn:Lag_bar}) at the top of this page.
\begin{figure*}[!t]
\begin{multline}\label{eqn:Lag_bar}
\bar {\mathcal L}\left( {{{\left\{ {{R_{m,k}}} \right\}}_{m \in \left[ {1,M} \right]\hfill\atop
k \in \left[ {1,K} \right]\hfill}},{{\left\{ {{\theta _{m,k}}} \right\}}_{m \in \left[ {1,M} \right]\hfill\atop
k \in \left[ {1,K} \right]\hfill}},{{\left\{ {{\mathchar'26\mkern-10mu\lambda _{m,k}}} \right\}}_{m \in \left[ {1,M} \right]\hfill\atop
k \in \left[ {1,K} \right]\hfill}},{{\left\{ {{\iota _{m,k}}} \right\}}_{m \in \left[ {1,M} \right]\hfill\atop
k \in \left[ {1,K} \right]\hfill}},{{\left\{ {{\varpi _{m,k,i}}} \right\}}_{m \in \left[ {1,M} \right]\hfill\atop
{k \in \left[ {1,K} \right]\hfill\atop
i \in \left[ {k,K} \right]\hfill}}}} \right)\\
 =  - \sum\nolimits_{m \in \left[ {1,M} \right]\hfill\atop
k \in \left[ {1,K} \right]\hfill} {\left( {1 - {\phi _{m,k}}{\theta _{m,k}}^{ - \left( {{N_r} - M + 1} \right)}} \right){R_{m,k}}}  - \sum\nolimits_{m \in \left[ {1,M} \right]\hfill\atop
k \in \left[ {1,K} \right]\hfill} {{\mathchar'26\mkern-10mu\lambda _{m,k}}{R_{m,k}}}  - \sum\nolimits_{m \in \left[ {1,M} \right]\hfill\atop
k \in \left[ {1,K} \right]\hfill} {{\iota _{m,k}}{\theta _{m,k}}}  \\
+ \sum\nolimits_{m \in \left[ {1,M} \right]\hfill\atop
{k \in \left[ {1,K} \right]\hfill\atop
i \in \left[ {k,K} \right]\hfill}} {{\varpi _{m,k,i}}\left( {{R_{m,i}} - {{\log }_2}\left( {1 + \frac{{{\zeta _{m,i}}}}{{{\theta _{m,k}} + \sum\nolimits_{l = 1}^{i - 1} {{\zeta _{m,l}}} }}} \right)} \right)},
\end{multline}
\hrulefill
\end{figure*}
The necessary KKT conditions are given as 
\begin{equation}\label{eqn:lamba_R_kkt}
\mathchar'26\mkern-10mu\lambda _{m,k}^*R_{m,k}^* = 0,
\end{equation}
\begin{equation}\label{eqn:iota_theta_kkt}
\iota _{m,k}^*\theta _{m,k}^* = 0,
\end{equation}
\begin{equation}\label{eqn:vapi_kkt}
\varpi _{m,k,i}^*\left( {R_{m,i}^* - {{\log }_2}\left( {1 + \frac{{{\zeta _{m,i}}}}{{\theta _{m,k}^* + \sum\limits_{l = 1}^{i - 1} {{\zeta _{m,l}}} }}} \right)} \right) = 0,
\end{equation}
\eqref{eqn:kkt_bar_R} and \eqref{eqn:kkt_bar_theta} at the top of the next page.
\begin{figure*}[!t]
\begin{equation}\label{eqn:kkt_bar_R}
{\left. {\frac{{\partial \bar {\mathcal L}}}{{\partial {R_{m,k}}}}} \right|_{{{\left\{ {R_{m,k}^*} \right\}}_{m \in \left[ {1,M} \right]\hfill\atop
k \in \left[ {1,K} \right]\hfill}},{{\left\{ {\theta _{m,k}^*} \right\}}_{m \in \left[ {1,M} \right]\hfill\atop
k \in \left[ {1,K} \right]\hfill}},{{\left\{ {\mathchar'26\mkern-10mu\lambda _{m,k}^*} \right\}}_{m \in \left[ {1,M} \right]\hfill\atop
k \in \left[ {1,K} \right]\hfill}},{{\left\{ {\iota _{m,k}^*} \right\}}_{m \in \left[ {1,M} \right]\hfill\atop
k \in \left[ {1,K} \right]\hfill}},{{\left\{ {\varpi _{m,k,i}^*} \right\}}_{m \in \left[ {1,M} \right]\hfill\atop
{k \in \left[ {1,K} \right]\hfill\atop
i \in \left[ {k,K} \right]\hfill}}}}} = 0,
\end{equation}
\end{figure*}
\begin{figure*}[!t]
\begin{equation}\label{eqn:kkt_bar_theta}
{\left. {\frac{{\partial \bar {\mathcal L}}}{{\partial {\theta _{m,k}}}}} \right|_{{{\left\{ {R_{m,k}^*} \right\}}_{m \in \left[ {1,M} \right]\hfill\atop
k \in \left[ {1,K} \right]\hfill}},{{\left\{ {\theta _{m,k}^*} \right\}}_{m \in \left[ {1,M} \right]\hfill\atop
k \in \left[ {1,K} \right]\hfill}},{{\left\{ {\mathchar'26\mkern-10mu\lambda _{m,k}^*} \right\}}_{m \in \left[ {1,M} \right]\hfill\atop
k \in \left[ {1,K} \right]\hfill}},{{\left\{ {\iota _{m,k}^*} \right\}}_{m \in \left[ {1,M} \right]\hfill\atop
k \in \left[ {1,K} \right]\hfill}},{{\left\{ {\varpi _{m,k,i}^*} \right\}}_{m \in \left[ {1,M} \right]\hfill\atop
{k \in \left[ {1,K} \right]\hfill\atop
i \in \left[ {k,K} \right]\hfill}}}}} = 0,
\end{equation}
\hrulefill
\end{figure*}
Substituting (\ref{eqn:Lag_bar}) into (\ref{eqn:kkt_bar_R}) and (\ref{eqn:kkt_bar_theta}) respectively leads to
\begin{multline}\label{eqn:barR_kkt_sim}
 -  \left({1 - \phi _{m,k}\theta {{_{m,k}^*}^{ - \left( {{N_r} - M + 1} \right)}}}\right)  - \mathchar'26\mkern-10mu\lambda _{m,k}^* \\
 + \sum\nolimits_{t \in \left[ {1,k} \right]} {\varpi _{m,t,k}^*}  = 0,
\end{multline}
\begin{align}\label{eqn:theta_kkt_bar_sim}
- \left( {{N_r} - M + 1} \right)\phi _{m,k}{\theta _{m,k}^*}^{ - \left( {{N_r} - M + 2} \right)}R_{m,k}^* - \iota _{m,k}^* + \notag\\
  \sum\limits_{i \in \left[ {k,K} \right]} {\frac{{{\varpi _{m,k,i}^*\zeta _{m,i}}}}{{\left( {\theta _{m,k}^* + \sum\limits_{l = 1}^{i - 1} {{\zeta _{m,l}}} } \right)\left( {\theta _{m,k}^* + \sum\limits_{l = 1}^i {{\zeta _{m,l}}} } \right)\ln2}}} \notag\\
   = 0.
\end{align}
Note that $R_{m,k}^*, \theta _{m,k}^* >0$, it follows from (\ref{eqn:lamba_R_kkt}) and (\ref{eqn:iota_theta_kkt}) that $\mathchar'26\mkern-10mu\lambda _{m,k}^*$ and $\iota _{m,k}^*=0$. Thus, (\ref{eqn:barR_kkt_sim}) and (\ref{eqn:theta_kkt_bar_sim}) can be rewritten as
\begin{equation}\label{eqn:barR_kkt_simre}
 {1 - \phi _{m,k}\theta {{_{m,k}^*}^{ - \left( {{N_r} - M + 1} \right)}}}  = \sum\limits_{t \in \left[ {1,k} \right]} {\varpi _{m,t,k}^*},
\end{equation}
\begin{align}\label{eqn:theta_kkt_bar_simre}
 \left( {{N_r} - M + 1} \right)\phi _{m,k}{\theta _{m,k}^*}^{ - \left( {{N_r} - M + 2} \right)}R_{m,k}^* = \notag\\
  \sum\limits_{i \in \left[ {k,K} \right]} {\frac{{{\varpi _{m,k,i}^*\zeta _{m,i}}}}{{\left( {\theta _{m,k}^* + \sum\limits_{l = 1}^{i - 1} {{\zeta _{m,l}}} } \right)\left( {\theta _{m,k}^* + \sum\limits_{l = 1}^i {{\zeta _{m,l}}} } \right)\ln2}}}.
\end{align}
For further simplification, we define ${{\bar i}_{m,k}} = \mathop {\arg }\limits_i \min \left\{ {{{\log }_2}\left( {1 + {{{\zeta _{m,i}}}}/{({\theta _{m,k}^* + \sum\nolimits_{l = 1}^{i - 1} {{\zeta _{m,l}}} })}} \right),i \in \left[ {k,K} \right]} \right\}$ and we assume that ${{\bar i}_{m,k}}$ uniquely exists. Hence, we have
\begin{equation}\label{eqn:R_ineq}
{R_{m,i}^* < {{\log }_2}\left( {1 + \frac{{{\zeta _{m,i}}}}{{\theta _{m,k}^* + \sum\limits_{l = 1}^{i - 1} {{\zeta _{m,l}}} }}} \right)},\, i\in[k,K] \wedge i \ne {{\bar i}_{m,k}}.
\end{equation}
According to (\ref{eqn:vapi_kkt}), we deduce that $\varpi _{m,k,i \ne {{\bar i}_{m,k}}}^* = 0$. Moreover, by setting $k=1$ in (\ref{eqn:barR_kkt_simre}), we have
\begin{equation}\label{eqn:barR_kkt_simre1}
 {1 - \phi _{m,1}\theta {{_{m,1}^*}^{ - \left( {{N_r} - M + 1} \right)}}}  = {\varpi _{m,1,1}^*},
\end{equation}
It is well known that the outage probability without perfect CSI at the BS is nonzero unless no information is conveyed. Hence, it is reasonable to assume that $ p_{m,1}^{out\_asy} = {1 - \phi _{m,1}\theta {{_{m,1}^*}^{ - \left( {{N_r} - M + 1} \right)}}} > 0$. Accordingly, it follows from (\ref{eqn:barR_kkt_simre1}) that ${\varpi _{m,1,1}^*} > 0$, ${{\bar i}_{m,1}} = 1$ and $\varpi _{m,1,i \ne 1}^* = 0$. In an analogous way, it is readily proved that ${{\bar i}_{m,k}} = k$ by successively setting $k=2,3,\cdots$ in (\ref{eqn:barR_kkt_simre1}), i.e., $\varpi _{m,k,i \ne k}^* = 0$ and $\varpi _{m,k,k}^* > 0$. With the results, (\ref{eqn:barR_kkt_simre}) and (\ref{eqn:theta_kkt_bar_simre}) respectively degenerates to
\begin{equation}\label{eqn:barR_kkt_simrefur}
{1 - \phi _{m,k}\theta {{_{m,k}^*}^{ - \left( {{N_r} - M + 1} \right)}}} = {\varpi _{m,k,k}^*},
\end{equation}
\begin{align}\label{eqn:theta_kkt_bar_simrefur}
 \left( {{N_r} - M + 1} \right)\phi _{m,k}{\theta _{m,k}^*}^{ - \left( {{N_r} - M + 2} \right)}R_{m,k}^* = \notag\\
   {\frac{{{\varpi _{m,k,k}^*\zeta _{m,k}}}}{{\left( {\theta _{m,k}^* + \sum\limits_{l = 1}^{k - 1} {{\zeta _{m,l}}} } \right)\left( {\theta _{m,k}^* + \sum\limits_{l = 1}^k {{\zeta _{m,l}}} } \right)\ln2}}}.
\end{align}
Moreover, combining (\ref{eqn:vapi_kkt}) with $\varpi _{m,k,k}^* > 0$ yields (\ref{eqn:R_opt_theta}). Then putting (\ref{eqn:R_opt_theta}) and  (\ref{eqn:barR_kkt_simrefur}) into (\ref{eqn:theta_kkt_bar_simrefur}) together with some algebraic manipulations produces
\begin{equation}\label{eqn:theta_eq}
 \varrho(\theta{_{m,k}^*}) = \ell(\theta{_{m,k}^*}),
\end{equation}
where $\varrho(x)$ and $\ell(x)$ are defined in (\ref{eqn:varrho_def}) and (\ref{eqn:ell_def}), respectively.

Clearly from (\ref{eqn:theta_eq}), the intersection of two curves ${\varrho(x)}$ and ${\ell(x)}$ gives the optimal value $\theta _{m,k}^*$. The curves of ${\varrho(x)}$ and ${\ell(x)}$ are plotted in Fig. \ref{fig:opt_rate}. Clearly, ${\varrho(x)}$ and ${\ell(x)}$ are both increasing functions. Additionally, it is not hard to prove that ${\varrho(x)}$ increases much faster than ${\ell(x)}$ as $x \to \infty$. Hence, the zero point of (\ref{eqn:theta_eq}), i.e., $\theta _{m,k}^*$, exists, and $\theta _{m,k}^*$ can be numerically solved by using bisection method. After determining $\theta _{m,k}^*$, $R_{m,k}^*$ can be calculated by using (\ref{eqn:R_opt_theta}).
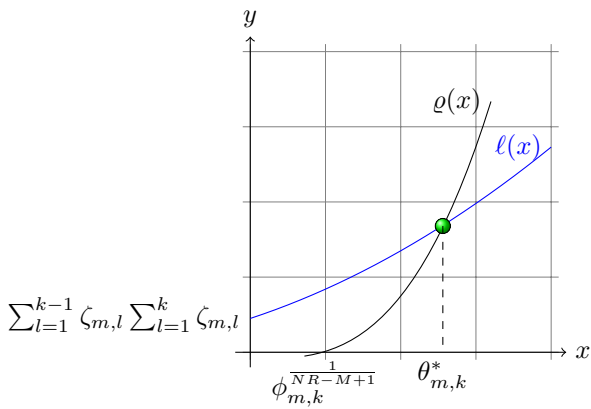
\begin{figure}
  \centering
  \begin{tikzpicture}[domain=0:5]
      \draw[very thin,color=gray] (-0.1,-0.1) grid (4.1,4.1);
      \draw[->] (-0.2,0) -- (4.2,0) node[right] {$x$};
      \draw[->] (0,-0.2) -- (0,4.2) node[above] {$y$};
      \draw[color=blue,scale=0.5,domain=0:8] plot (\x,{0.03*(\x +5)*(\x+6)}) node[left] {$\ell(x)$};
      \draw[color=black,scale=0.4,domain=1.8:8] plot (\x,{0.07*(\x^2*(1-\x^-1)/log2(1+3/(\x+0.01)))-0.2}) node[left] {$\varrho(x)$};
	  \draw[ball color=green] (2.56,1.68) circle (.1);
	 \draw[-,dashed] (2.56,1.68) -- (2.56,0) node[below] {$\theta _{m,k}^*$};
	 	\draw (1,0) node[below] { $\phi_{m,k}^{\frac{1}{NR-M+1}}$};
	\draw (0,0.5) node[left] {$\sum\nolimits_{l = 1}^{k - 1} {{\zeta _{m,l}}} \sum\nolimits_{l = 1}^k {{\zeta _{m,l}}} $};
  \end{tikzpicture}
  \caption{The increasing monotonicity of ${\varrho(x)}$ and ${\ell(x)}$.}\label{fig:opt_rate}
\end{figure}
By combining (\ref{eqn:R_opt_theta}) and (\ref{eqn:R_ineq}), the same conclusion can be drawn as Remark \ref{the:rem}, that is, the optimal transmission rates satisfy the inequalities in (\ref{eqn:normal_rate}). The proof of Theorem \ref{the:rate_opt} is thus accomplished.

\bibliographystyle{ieeetran}
\bibliography{mimo_zf_noma}
\begin{IEEEbiography}[{\includegraphics[width=1in,height=1.25in,clip,keepaspectratio]{./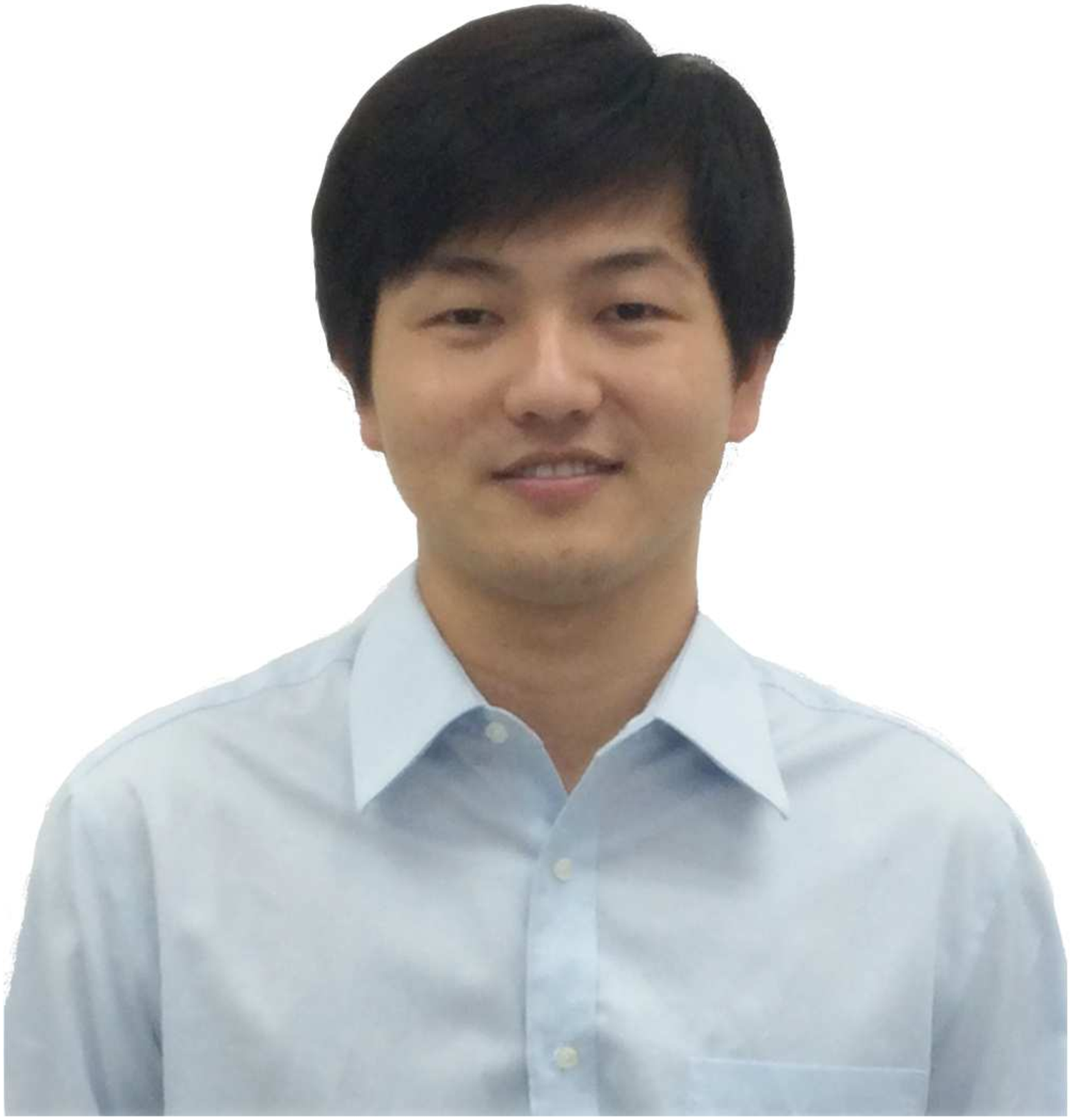}}]{Zheng Shi} received his B.S. degree in communication engineering from Anhui Normal University, China, in 2010 and his M.S. degree in communication and information system from Nanjing University of Posts and Telecommunications (NUPT), China, in 2013. He obtained his Ph.D. degree in Electrical and Computer Engineering from University of Macau, Macao, in 2017. Since 2017, he has been with the School of Intelligent Systems Science and Engineering, Jinan University and is now an Assistant Professor there. His current research interests include hybrid automatic repeat request, non-orthogonal multiple access, short-packet communications and IoT.
\end{IEEEbiography}

\begin{IEEEbiography}[{\includegraphics[width=1in,height=1.25in,clip,keepaspectratio]{./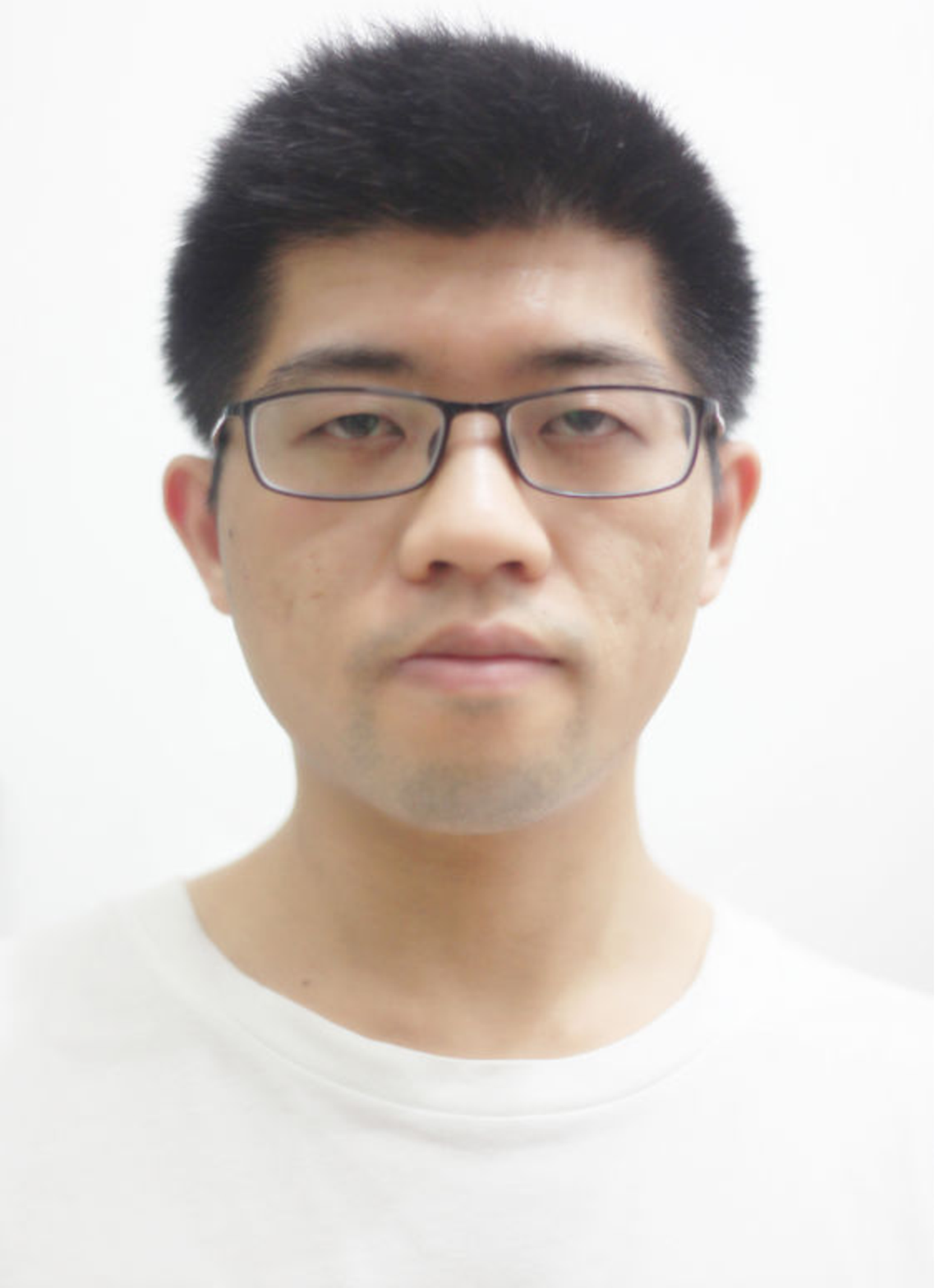}}]{Hong Wang}
received the B.S. degree in communication engineering from Jiangsu University, Zhenjiang, China, in 2011, and the Ph.D. degree in information and communication engineering from Nanjing University of Posts and Telecommunications (NUPT), Nanjing, China, in 2016. From 2014 to 2015, he was a Research Assistant with the Department of Electronic Engineering, City University of Hong Kong, Hong Kong. From 2016 to 2018, he was a Senior Research Associate with the State Key Laboratory of Millimeter Waves and Department of Electronic Engineering, City University of Hong Kong. Since July 2019, he has been an Associate Professor with the Department of Communication Engineering, NUPT. His research interests include broadband wireless communications, particularly in interference management in HetNets, non-orthogonal multiple access, and interference alignment.
\end{IEEEbiography}

\begin{IEEEbiography}[{\includegraphics[width=1in,height=1.25in,clip,keepaspectratio]{./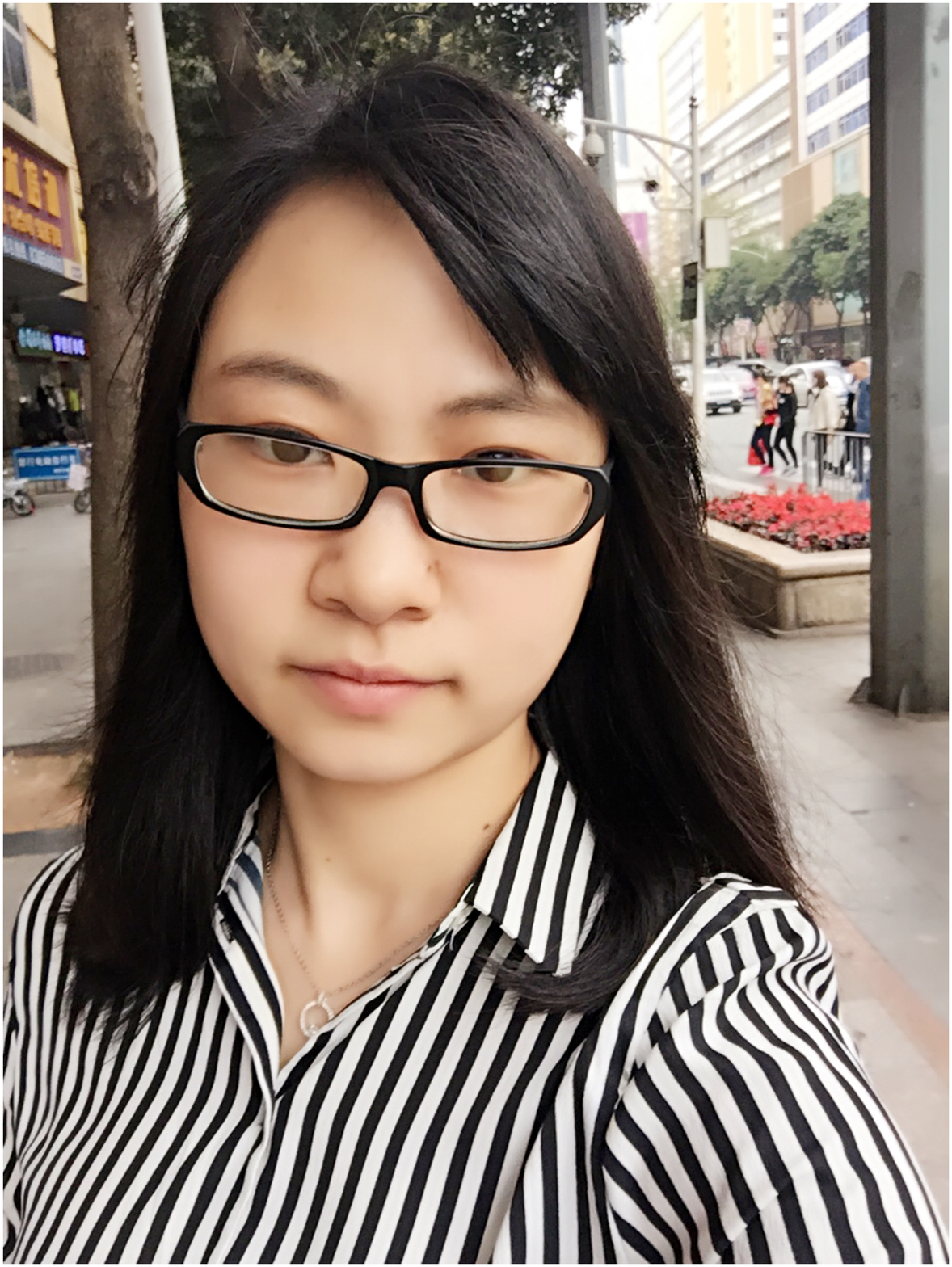}}]{Yaru Fu}
received the B.S. degree (Hons.) in Telecommunication Engineering from Northeast Electric Power University (NEEPU), China, in 2011, the M.S. degree (Hons.) in Communication and Information System from Nanjing University of Posts and Telecommunications (NUPT), China, in 2014 and the Ph.D in Electronic Engineering from City University of Hong Kong (CityU), Hong Kong SAR, China, in 2018, respectively. Then, she joined in the Institute of Network Coding (INC), The Chinese University of Hong Kong (CUHK) as a postdoc research assistant. Since Sep. 2018, she started her journey as a research fellow in Information Systems Technology and Design (ISTD) Pillar, Singapore University of Technology and Design (SUTD). Her current research interests include wireless communications and network, machine learning, recommendation system and IoT.
\end{IEEEbiography}

\begin{IEEEbiography}[{\includegraphics[width=1in,height=1.25in,clip,keepaspectratio]{./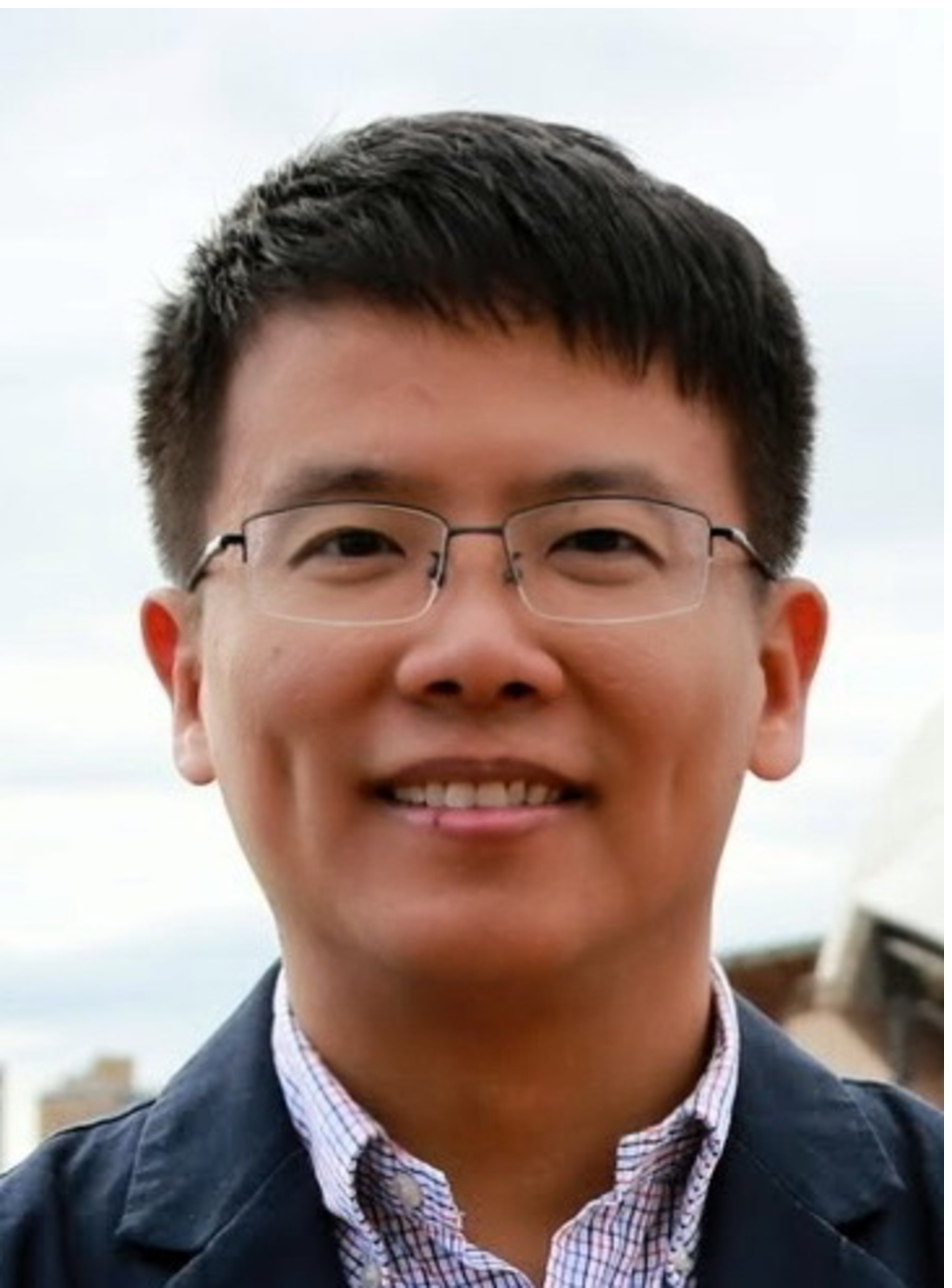}}]{Guanghua Yang}
(S'02, M'07, SM'19) received his Ph.D. degree in electrical and electronic engineering from the University of Hong Kong, Hong Kong, in 2006. From 2006 to 2013, he served as post-doctoral fellow, research associate, and project manager in the University of Hong Kong. Since April 2017, he is an Associate Professor and Associate Dean with the Institute of Physical Internet, Jinan University, Guangdong, China. His research interests are in the general areas of communications, networking and multimedia.
\end{IEEEbiography}

\begin{IEEEbiography}[{\includegraphics[width=1in,height=1.25in,clip,keepaspectratio]{./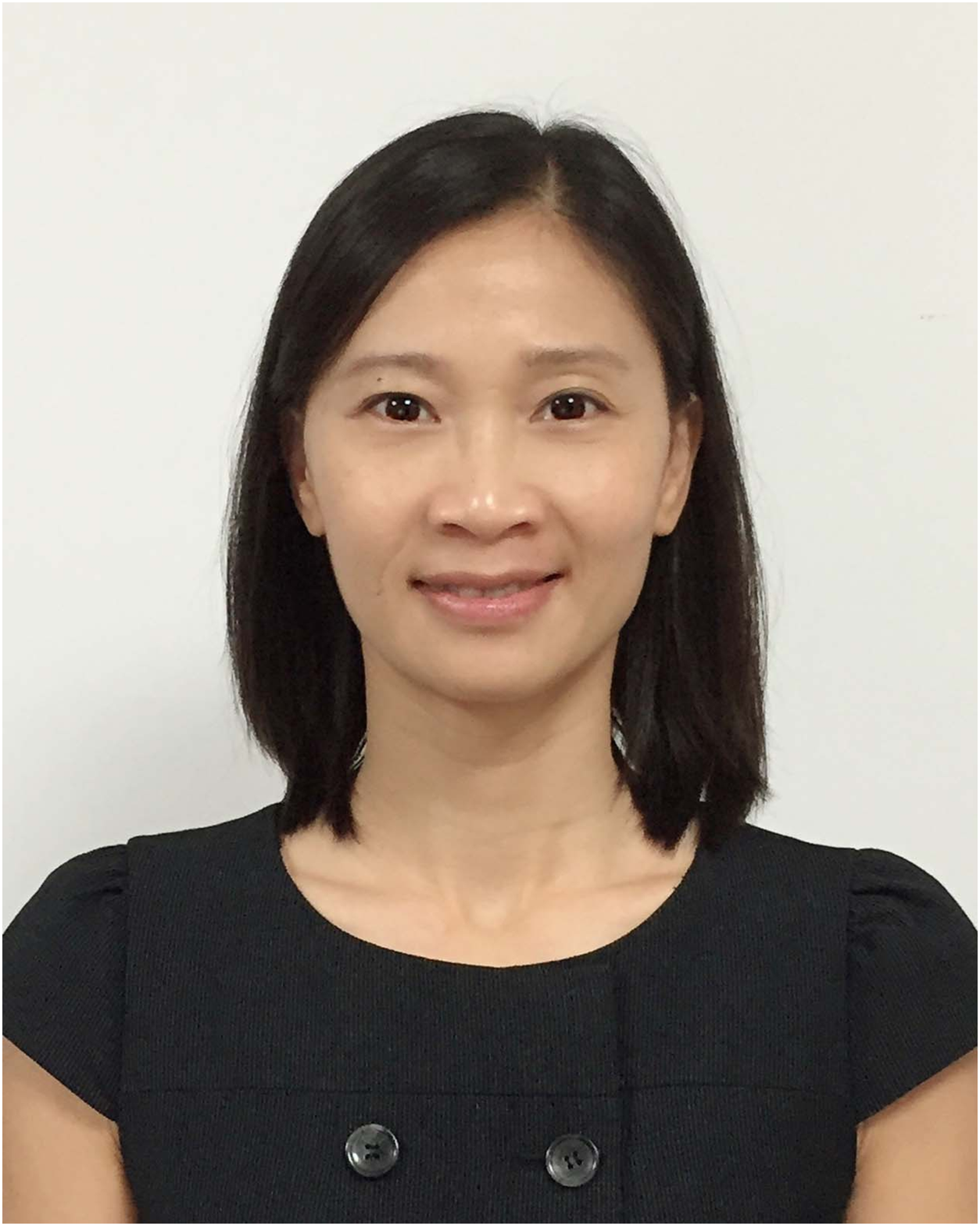}}]{Shaodan Ma}
received her double Bachelor degrees in Science and Economics, and her Master degree of Engineering, from Nankai University, Tianjin, China. She obtained her Ph. D. degree in electrical and electronic engineering from the University of Hong Kong, Hong Kong, in 2006. From 2006 to 2011, she was a Postdoctoral Fellow in the University of Hong Kong. Since August 2011, she has been with the University of Macau and is now an Associate Professor there. She was a visiting scholar in Princeton University in 2010 and is currently an Honorary Assistant Professor in the University of Hong Kong. Her research interests are in the general areas of signal processing and communications, particularly, transceiver design, resource allocation and performance analysis.
\end{IEEEbiography}

\begin{IEEEbiography}[{\includegraphics[width=1in,height=1.25in,clip,keepaspectratio]{./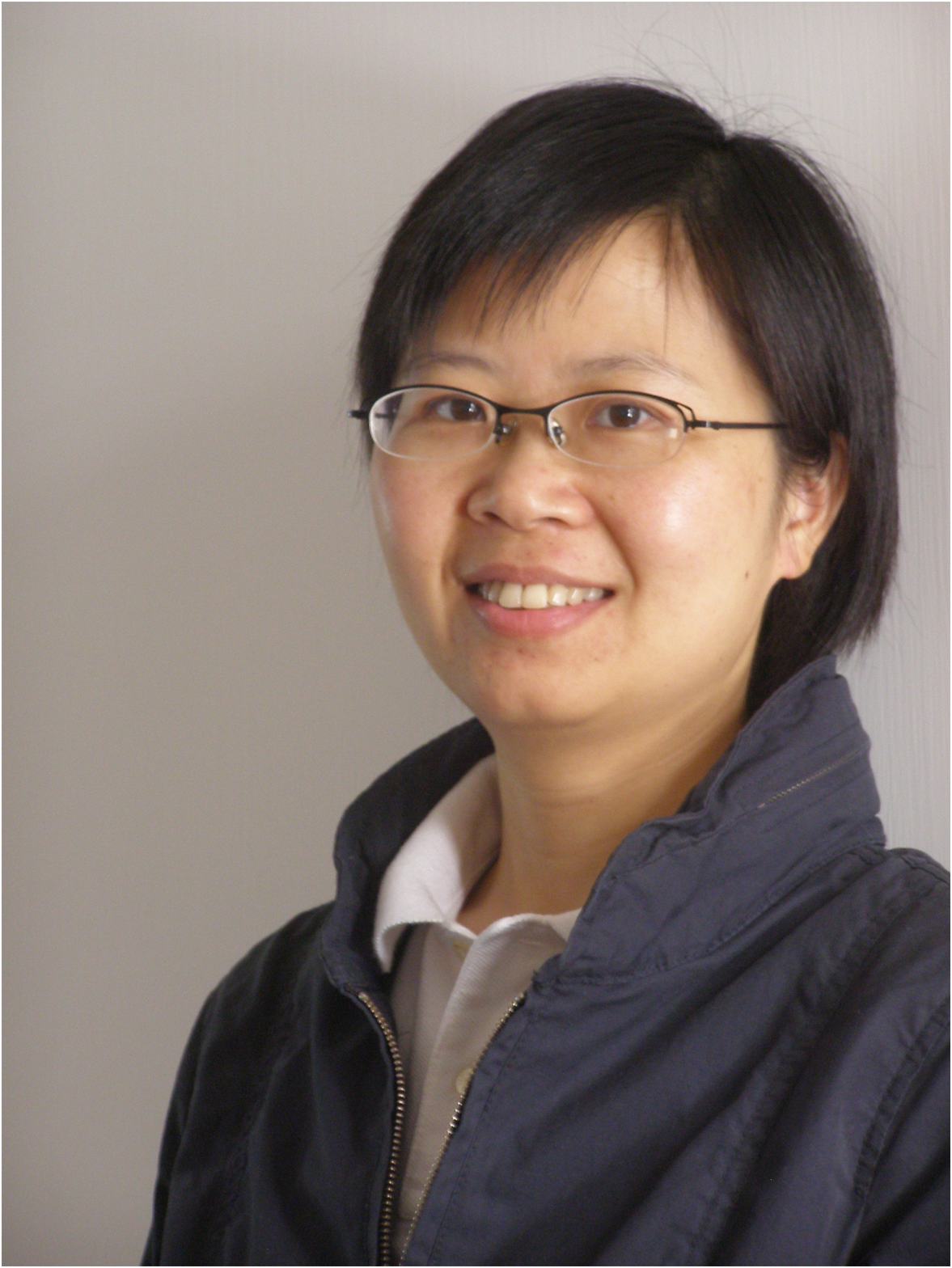}}]{Fen Hou}
received the Ph.D. degree in electrical and computer engineering from the University of Waterloo, Waterloo, Canada, in 2008. She is currently an Associate Professor with the State Sky Laboratory of Internet of Things for Smart City and the Department of Electrical and Computer Engineering at the University of Macau. Her research interests include resource allocation and scheduling in broadband wireless networks, mechanism design and optimal user behavior in mobile crowd sensing networks, and mobile data offloading. She was a recipient of the IEEE Globecom Best Paper Award in 2010 and the Distinguished Service Award in the IEEE MMTC in 2011. She served as the Co-Chair of the INFOCOM 2014 Workshop on Green Cognitive Communications and Computing Networks, the IEEE Globecom Workshop on Cloud Computing System, Networks, and Application 2013 and 2014, the ICCC 2015 Selected Topics in Communications Symposium, and the ICC 2016 Communication Software Services and Multimedia Application Symposium. She currently serves as the Director of Award Board in IEEE ComSoc Multimedia Communications Technical Committee. She also serves as an Associate Editor of IET Communications.
\end{IEEEbiography}

\begin{IEEEbiography}[{\includegraphics[width=1in,height=1.25in,clip,keepaspectratio]{./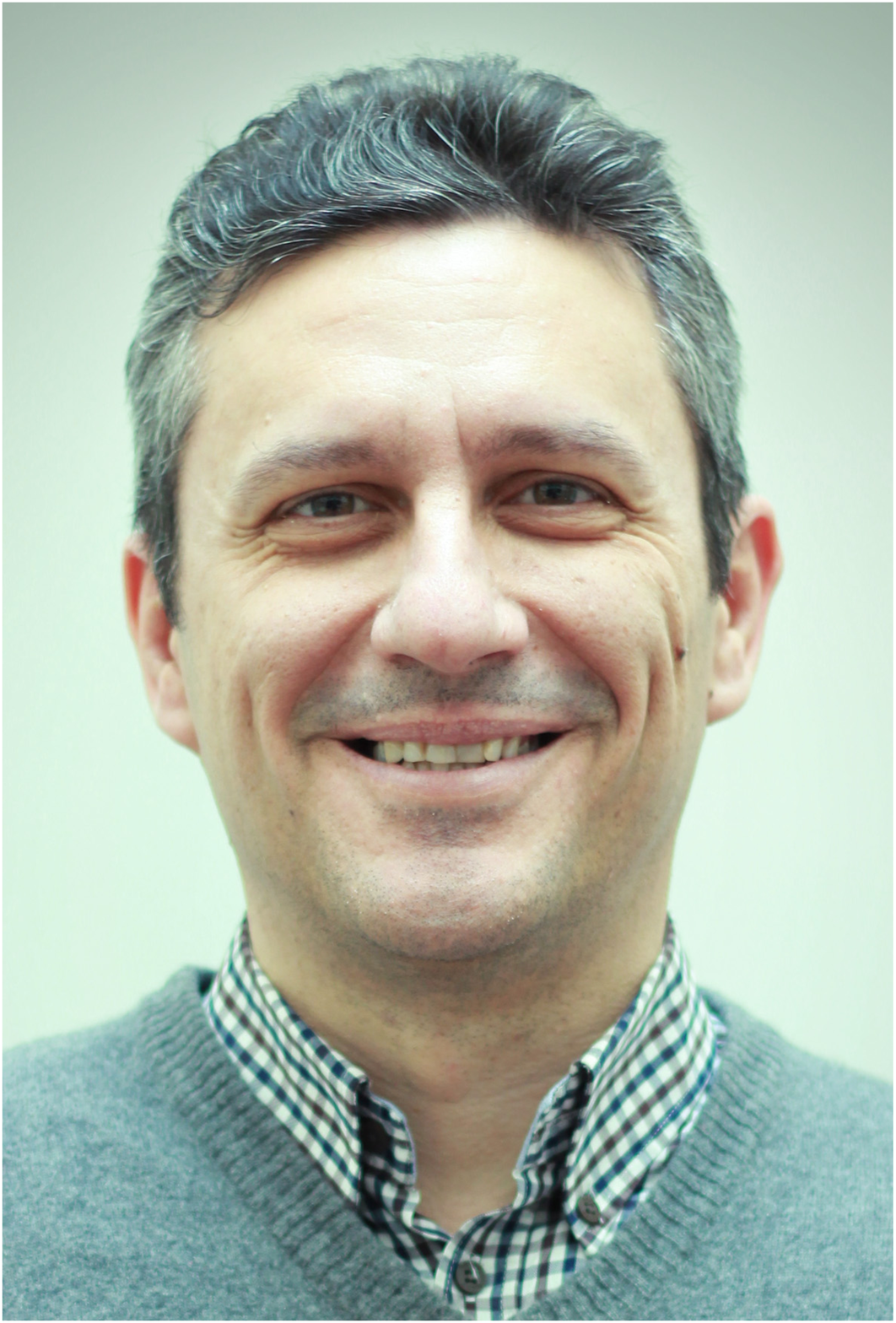}}]
{Theodoros A. Tsiftsis} (S'02, M'04, SM'10) was born in Lamia, Greece, in 1970. He received the B.Sc. degree in physics from the Aristotle University of Thessaloniki, Greece, in 1993, the M.Sc. degree in digital systems engineering from the Heriot-Watt University, Edinburgh, U.K., in 1995, the M.Sc. degree in decision sciences from the Athens University of Economics and Business, in 2000, and the Ph.D. degree in electrical engineering from the University of Patras, Greece, in 2006. He is currently a Professor in the School of Intelligent Systems Science \& Engineering at Jinan University, Zhuhai Campus, Zhuhai, China. He has authored or co-authored over 180 technical papers in scientific journals and international conferences. His research interests include the broad areas of cognitive radio, communication theory, wireless powered communication systems, optical wireless communication, and ultra-reliable low-latency communication.

Prof. Tsiftsis has been appointed to a 2-year term as an IEEE Vehicular Technology Society Distinguished Lecturer (IEEE VTS DL), Class 2018. Dr. Tsiftsis acts as a Reviewer for several international journals and conferences. He has served as Senior or Associate Editor in the Editorial Boards of \textsc{IEEE Transactions on Vehicular Technology}, \textsc{IEEE Communications Letters},  \textsc{IET Communications}, and \textsc{IEICE Transactions on Communications}. He is currently an Area Editor for Wireless Communications II of the \textsc{IEEE Transactions on Communications} and an Associate Editor of the \textsc{IEEE Transactions on Mobile Computing}.
\end{IEEEbiography}
\end{document}